\let\oldblacksquare\blacksquare
\newcommand{\BBox}{{\textcolor{gray}{\mathop\oldblacksquare\nolimits}}} 
\begin{document}
    
    \hypersetup{
        pdftitle = {Ambitwistor Yang-Mills theory revisted},
        pdfauthor = {Leron Borsten,Branislav Jurco,Hyungrok Kim,Christian Saemann,Martin Wolf},
        pdfkeywords = {ambitwistor space,Yang-Mills theory,Chern-Simons theory,CR structure,supersymmetry}
    }
    
    \date{\today}
    
    \email{l.borsten@herts.ac.uk,h.kim2@herts.ac.uk,branislav.jurco@gmail.com,c.saemann@hw.ac.uk,m.wolf@surrey.ac.uk}
    
    \preprint{DMUS--MP--24/06}
    
    \title{Ambitwistor Yang--Mills Theory Revisited} 
    
    \author[a]{Leron~Borsten\,\orcidlink{0000-0001-9008-7725}\,}
    \author[b]{Branislav~Jur{\v c}o\,\orcidlink{0000-0001-7782-2326}\,}
    \author[a]{Hyungrok~Kim\,\orcidlink{0000-0001-7909-4510}\,}
    \author[c]{Christian~Saemann\,\orcidlink{0000-0002-5273-3359}\,}
    \author[d]{Martin~Wolf\,\orcidlink{0009-0002-8192-3124}\,}
    
    \affil[a]{Department of Physics, Astronomy, and Mathematics,\\University of Hertfordshire, Hatfield AL10 9AB, United Kingdom}
    \affil[b]{Mathematical Institute, Faculty of Mathematics and Physics,\\ Charles University, Prague 186 75, Czech Republic}
    \affil[c]{Maxwell Institute for Mathematical Sciences, Department of Mathematics,\\ Heriot--Watt University, Edinburgh EH14 4AS, United Kingdom}
    \affil[d]{School of Mathematics and Physics,\\ University of Surrey, Guildford GU2 7XH, United Kingdom}
    
    \abstract{Inspired by the Movshev--Mason--Skinner Cauchy--Riemann (CR) ambitwistor approach, we provide a rigorous yet elementary construction of a twisted CR holomorphic Chern--Simons action on CR ambitwistor space for maximally supersymmetric Yang--Mills theory on four-dimensional Euclidean space. The key ingredient in our discussion is the homotopy algebraic perspective on perturbative quantum field theory. Using this technology, we show that both theories are semi-classically equivalent, that is, we construct a quasi-isomorphism between the cyclic $L_\infty$-algebras governing both field theories. This confirms a conjecture from the literature. Furthermore, we also show that the Yang--Mills action is obtained by integrating out an infinite tower of auxiliary fields in the Chern--Simons action, that is, the two theories are related by homotopy transfer. Given its simplicity, this Chern--Simons action should form a fruitful starting point for analysing perturbative properties of Yang--Mills theory.}
    
    \acknowledgements{We thank Tim Adamo for conversations.}
    
    \declarations{
        \textbf{Funding.}
        B.~J.~was supported by the GA{\v C}R Grants EXPRO 19-28628X and 24-10887S. H.~K.~was partially supported by the Leverhulme Research Project Grant RPG-2021-092.\\[5pt]
        \textbf{Conflict of interest.}
        The authors have no relevant financial or non-financial interests to disclose.\\[5pt]
        \textbf{Data statement.}
        No additional research data beyond the data presented and cited in this work are needed to validate the research findings in this work.\\[5pt]
        \textbf{Licence statement.}
        For the purpose of open access, the authors have applied a Creative Commons Attribution (CC-BY) license to any author-accepted manuscript version arising.
    }
    
    \begin{body}
        
        \section{Introduction and summary}
        
        It is well known that holomorphic Chern--Simons theory on the ambitwistor space is classically equivalent to $\caN=3$ supersymmetric Yang--Mills theory in four dimensions (a theory perturbatively equivalent to $\caN=4$ supersymmetric Yang--Mills theory) at the level of the moduli spaces of solutions and their gauge equivalence classes~\cite{Witten:1978xx,Isenberg:1978kk,Khenkin:1980ff,Harnad:1984vk,Harnad:1985bc,Buchdahl:1985aa,Eastwood:1987aa,Manin:1988ds,Nair:1988bq} (see also~\cite{Popov:2004rb} for a review). The construction of a corresponding action functional, however, is non-trivial. Whilst ambitwistor space is a Calabi--Yau supermanifold, and hence a natural holomorphic volume form exists~\cite{Witten:2003nn}, the required Lagrangian is unclear. In particular, ambitwistor space has a five-dimensional body, which is incompatible with the usual Chern--Simons-type Lagrangian.\footnote{It is, however, possible to construct an action for `higher' holomorphic Chern--Simons theory with a Lie 3-group as gauge group, providing a higher ambitwistor space action functional for maximally supersymmetric Yang--Mills theory~\cite{Samann:2017dah}. See~\cite{Borsten:2024gox} for a recent review on higher gauge theory.}
        
        For non-supersymmetric Yang--Mills theory in Euclidean signature, this obstruction can be evaded by using the Movshev--Mason--Skinner Cauchy--Riemann (CR) ambitwistor construction~\cite{Movshev:2004ub,Mason:2005kn}. Whilst the treatment in~\cite{Movshev:2004ub,Mason:2005kn} is rigorous and complete in the non-supersymmetric case, the generalisation to the supersymmetric setting is rather indirect and somewhat conjectural, principally due to the subtleties entailed by the superspace torsion and complex fermions in Euclidean signature. Here, we resolve these issues by providing a rigorous yet elementary construction of a twisted CR holomorphic Chern--Simons action for Yang--Mills theory in the case of $\caN=3$ supersymmetry. See e.g.~\cite{Mason:2005zm,Popov:2005uv,Boels:2006ir,Popov:2009nx,Popov:2021mfl} for other approaches of formulating twistorial actions for Yang--Mills theories with varying amounts of supersymmetry.\footnote{See~\cite{Berkovits:0406051,Wolf:2007tx,Mason:2007ct,Mason:2008jy,Adamo:2013tja} for twistorial action formulations of supergravity theories.} 
        
        Key to our discussion is the homotopy algebraic perspective on perturbative quantum field theory. This is a dictionary between objects and tools in perturbative quantum field theory on the one side and objects and tools in homotopical algebra on the other side. In particular, any perturbative field theory given in terms of a polynomial action can be reformulated as a cyclic $L_\infty$-algebra, cf.~e.g.~\cite{Jurco:2018sby}\footnote{or~\cite{Hohm:2017pnh} at the level of equations of motion}. Similarly, semi-classical equivalence\footnote{i.e.~having isomorphic tree-level S-matrices} corresponds to the natural notion of equivalence for these $L_\infty$-algebras, namely quasi-isomorphisms. The final aspect of this dictionary we will put to use is that the procedure of integrating out fields is encoded in the so-called homotopy transfer, a technique from homological perturbation theory which reproduces the tree-level Feynman-diagram expansion~\cite{Kajiura:2003ax,Doubek:2017naz,Jurco:2018sby,Macrelli:2019afx,Arvanitakis:2019ald,Jurco:2019yfd,Saemann:2020oyz}.
        
        Using this technology, we prove that the aforementioned twisted CR holomorphic Chern--Simons theory is, in fact, semi-classically equivalent to $\caN=3$ supersymmetric Yang--Mills theory. This is the statement of \cref{thm:equivalence}, our first central result. This confirms a conjecture made in~\cite{Movshev:2004ub}, albeit using an $L_\infty$-algebra that is differs in some key details from the one we use here. In \cref{thm:homotopy-transfer}, our second central result, we moreover show that this semi-classical equivalence simply amounts to integrating out an infinite tower of auxiliary fields, that is, to homotopy transfer. 
        
        Given its simplicity, our CR ambitwistor action should form a fruitful starting point for analysing various aspects of Yang--Mills theory; see the reviews~\cite{Woodhouse:id,Wolf:2010av,Adamo:2011pv,Adamo:2013cra,Adamo:2017qyl} and the references therein for twistor theory actions and their applications.\footnote{See~\cite{Wolf:2004hp,Wolf:2005sd,Popov:2006qu,Wolf:2006me,Popov:2021nfh} for analysing classical integrability of maximally supersymmetric Yang--Mills theory using twistor methods.} Indeed, the primary motivation underlying the present contribution is the conjectured colour--kinematics duality of Yang--Mills scattering amplitudes~\cite{Bern:2008qj,Bern:2010ue,Bern:2010yg} (see~\cite{Bern:2019prr,Borsten:2020bgv} for reviews). In~\cite{Borsten:2022vtg}, we presented a route to proving all-loop-order colour--kinematics duality for maximally supersymmetric Yang--Mills theory. The key observation is that the tower of interaction terms constructed in~\cite{Borsten:2020zgj,Borsten:2021hua,Borsten:2021gyl} to generate colour--kinematics-duality-respecting Feynman rules may be geometrised\footnote{whilst also avoiding the potential need for non-local field redefinitions that require counterterms breaking colour--kinematics duality, as discussed in detail in~\cite{Borsten:2021gyl}} as the `Kaluza--Klein tower' following from dimensional reduction on the $\IC P^1\times\IC P^1$ of the ambitwistor space.\footnote{A conceptually similar approach employing, instead, pure spinor space has been used to show that ten-dimensional supersymmetric Yang--Mills theory and its dimensional reductions have tree-level colour--kinematics duality~\cite{Ben-Shahar:2021doh,Borsten:2023reb} and, similarly, that the low energy effective world-volume theories of M2-branes have tree-level colour--kinematics duality~\cite{Borsten:2023reb, Borsten:2023ned}, as previously conjectured in~\cite{Bargheer:2012gv,Huang:2013kca}. The restriction to the tree-level in both cases is a consequence of the need to regularise the integrals over pure spinor space, i.e.~there is no known scheme that is guaranteed to preserve colour--kinematics duality at the loop-level. In this regard the use of twistor spaces has a clear advantage.} In doing so, the proof of colour--kinematics duality is simplified to establishing that our CR ambitwistor action for Yang--Mills theory has an associated BV${}^{\color{gray}\blacksquare}$-algebra\footnote{See~\cite{Reiterer:2019dys} for the original definition of BV${}^{\color{gray}\blacksquare}$-algebras, where they were used to prove colour--kinematics duality for the tree-level Yang--Mills S-matrix. See~\cite{Borsten:2021gyl,Ben-Shahar:2021doh,Ben-Shahar:2021zww,Borsten:2022ouu,Bonezzi:2022bse,Bonezzi:2023ced,Bonezzi:2023ciu,Bonezzi:2023lkx,Bonezzi:2023pox,Borsten:2023ned,Borsten:2023paw,Bonezzi:2024dlv,Armstrong-Williams:2024icu} for related work on BV${}^{\color{gray}\blacksquare}$-algebras, colour--kinematics duality, and the double copy at the level of actions.}, together with an appropriate Ward-identity-like argument for the twistorial gauge symmetries~\cite{Borsten:2022vtg}. Evidently, a prerequisite for this to work is the generalisation of the semi-classical equivalence to the loop level and this, in turn, would most certainly rely on the vanishing of certain twistor anomalies~\cite{Costello:2021bah}. This approach to colour--kinematics duality has successfully been applied to self-dual supersymmetric Yang--Mills theory~\cite{Borsten:2022vtg,Borsten:2023paw,Borsten:2023ned}, where the use of holomorphic Chern--Simons theory and Penrose's twistor space obviates the need to apply a supplementary Ward identity argument. Hence, the generalisation of this approach to full maximally supersymmetric Yang--Mills theory would first require a complete, rigorous, and explicit formulation of a Chern--Simons-type action, which we provide here. 
        
        \section{CR ambitwistor space and supersymmetric Yang--Mills equations}
        
        We shall use the conventions of the review~\cite{Wolf:2010av} in the following. See also~\cite{Woodhouse:id,Adamo:2011pv,Adamo:2013cra,Adamo:2017qyl} for other reviews on twistor theory and applications to gauge theory.
        
        \subsection{Bosonic CR ambitwistor space}
        
        \paragraph{CR ambitwistor space.}
        Given the factorisation $T_\IC\IR^4\cong S\otimes\tilde S$ of the complexified tangent bundle of $\IR^4$ into $S$, the chiral spin bundle, and $\tilde S$, the anti-chiral spin bundle, we consider the projectivisation $\IP(S^*)\times\IP(\tilde S^*)\rightarrow\IR^4$. Evidently, this space is
        \begin{equation}\label{eq:CRAmbitwistorSpace}
            L\ \coloneqq\ \IR^4\times\IC P^1\times\IC P^1~.
        \end{equation}
        We may choose $(x^{\alpha\dot\alpha},\lambda_{\dot\alpha},\mu_\alpha)$ as coordinates on $L$ for $\alpha,\beta,\ldots,\dot\alpha,\dot\beta,\ldots=1,2$ with $x^{\alpha\dot\alpha}$ coordinates\footnote{We have $x^{\alpha\dot\alpha}=\sigma^{\alpha\dot\alpha}_\mu x^\mu$ with $\sigma^{\alpha\dot\alpha}_\mu$ the standard sigma matrices.} on Euclidean $\IR^4$ and $(\lambda_{\dot\alpha},\mu_\alpha)$ complex homogeneous coordinates on $\IC P^1\times\IC P^1$. The coordinates manifest the action of $\sfSpin(4)\cong \sfSU(2)\times \sfSU(2)$ on $L$.
        
        On the fibres of $L\rightarrow\IR^4$ there is a natural \uline{quaternionic structure}\footnote{Recall that a quaternionic structure is an anti-linear endomorphism $\tau$ with $\tau^2=-\sfid$.}
        \begin{equation}\label{eq:quaternionicStructure}
            \tau\,:\,(\lambda_{\dot\alpha},\mu_\alpha)\ =\ 
            \left(
            \begin{pmatrix}
                \lambda_{\dot1}\\ \lambda_{\dot2}
            \end{pmatrix},
            \begin{pmatrix}
                \mu_1\\ \mu_2
            \end{pmatrix}
            \right)
            \ \mapsto\ 
            (\hat\lambda_{\dot\alpha},\hat\mu_\alpha)\ \coloneqq\ 
            \left(
            \begin{pmatrix}
                -\overline{\lambda_{\dot2}}\\ \overline{\lambda_{\dot1}}
            \end{pmatrix},
            \begin{pmatrix}
                -\overline{\mu_2}\\ \overline{\mu_1}
            \end{pmatrix}
            \right),
        \end{equation}
        where the bar denotes complex conjugation. This induces the spinor norms
        \begin{equation}\label{eq:spinorNorms}
            |\lambda|^2\ \coloneqq\ \lambda_{\dot\alpha}\hat\lambda^{\dot\alpha}
            \ewith
            \hat\lambda^{\dot\alpha}\ =\ \eps^{\dot\alpha\dot\beta}\hat\lambda_{\dot\beta}
            \eand
            |\mu|^2\ \coloneqq\ \mu_\alpha\hat\mu^\alpha
            \ewith
            \hat\mu^\alpha\ =\ \eps^{\alpha\beta}\hat\mu_\beta~.
        \end{equation}
        Here, $\eps_{\alpha\beta}=-\eps_{\beta\alpha}$ and $\eps_{\dot\alpha\dot\beta}=-\eps_{\dot\beta\dot\alpha}$ are the standard symplectic structures on $S$ and $\tilde S$ with $\eps_{12}=1=\eps_{\dot1\dot2}$. We also define $\eps^{\alpha\beta}$ and $\eps^{\dot\alpha\dot\beta}$ by $\eps_{\alpha\gamma}\eps^{\gamma\beta}=\delta_\alpha{}^\beta$ and $\eps_{\dot\alpha\dot\gamma}\eps^{\dot\gamma\dot\beta}=\delta_{\dot\alpha}{}^{\dot\beta}$, respectively, and raise and lower spinorial indices using these symplectic forms. Note that $\tau(|\lambda|^2)=|\lambda|^2$ and $\tau(|\mu|^2)=|\mu|^2$. Furthermore, $\tau$ also induces the reality conditions
        \begin{equation}\label{eq:EuclideanReality}
            \overline{x^{1\dot1}}\ =\ x^{2\dot 2}
            \eand
            \overline{x^{1\dot 2}}\ =\ -x^{2\dot 1}~,
        \end{equation}
        so that the Euclidean line element is simply
        \begin{equation}\label{eq:metricR4}
            (\rmd s_{\IR^4})^2\ =\ \tfrac12\eps_{\alpha\beta}\eps_{\dot\alpha\dot\beta}\rmd x^{\alpha\dot\alpha}\rmd x^{\beta\dot\beta}\ =\ |\rmd x^{1\dot1}|^2+|\rmd x^{1\dot2}|^2~.
        \end{equation}
        
        We note that $L$ can be shown to be diffeomorphic to a real quadric of co-dimension two inside the complex ambitwistor space (which, in turn, is a complex quadric hypersurface in the product of the Penrose twistor space with its dual)~\cite{Movshev:2004ub,Mason:2005kn}. Following these papers, we call~\eqref{eq:CRAmbitwistorSpace} the \uline{CR ambitwistor space}. 
        
        \paragraph{CR structure.}
        The reason for this name stems from the fact that $L$ comes with a natural almost \uline{CR structure}\footnote{See e.g.~the text book~\cite{Boggess:1991aa} for a general account on CR manifolds.}
        \begin{subequations}\label{eq:tangentBundleDistributionsCRAmbitwistorSpace}
            \begin{equation}
                \begin{gathered}
                    T^{0,1}_{\rm CR}L\ \coloneqq\ {\rm span}\{\hat E_\rmF,\hat E_\rmL,\hat E_\rmR\}~,
                    \\
                    \hat E_\rmF\ \coloneqq\ \mu^\alpha\lambda^{\dot\alpha}\parder{x^{\alpha\dot\alpha}}~,~~~
                    \hat E_\rmL\ \coloneqq\ |\lambda|^2\lambda_{\dot\alpha}\parder{\hat\lambda_{\dot\alpha}}~,~~~
                    \hat E_\rmR\ \coloneqq\ |\mu|^2\mu_\alpha\parder{\hat\mu_\alpha}~,
                \end{gathered}
            \end{equation}
            where $|\lambda|^2$ and $|\mu|^2$ were defined in~\eqref{eq:spinorNorms}. The use of `$\rmF$', `$\rmL$', and `$\rmR$' is to remind the reader that the corresponding vector fields are along the flat Euclidean, left $\IC P^1$, and right $\IC P^1$ directions, respectively. We also set
            \begin{equation}
                \begin{gathered}
                    \overline{T^{0,1}_{\rm CR}L}\ \coloneqq\ {\rm span}\{E_\rmF,E_\rmL,E_\rmR\}~,
                    \\
                    E_\rmF\ \coloneqq\ \frac{\hat\mu^\alpha\hat\lambda^{\dot\alpha}}{|\mu|^2|\lambda|^2}\parder{x^{\alpha\dot\alpha}}~,~~~
                    E_\rmL\ \coloneqq\ -\frac{\hat\lambda_{\dot\alpha}}{|\lambda|^2}\parder{\lambda_{\dot\alpha}}~,~~~
                    E_\rmR\ \coloneqq\ -\frac{\hat\mu_\alpha}{|\mu|^2}\parder{\mu_\alpha}~,
                \end{gathered}
            \end{equation}
            and
            \begin{equation}
                W\ \coloneqq\ {\rm span}\{E_\rmW,E_{\hat\rmW}\}
                \ewith
                E_\rmW\ \coloneqq\ \frac{\mu^\alpha\hat\lambda^{\dot\alpha}}{|\lambda|^2}\parder{x^{\alpha\dot\alpha}}
                \eand
                E_{\hat\rmW}\ \coloneqq\ -\frac{\hat\mu^\alpha\lambda^{\dot\alpha}}{|\mu|^2}\parder{x^{\alpha\dot\alpha}}~.
            \end{equation}
        \end{subequations}
        Evidently, each of these distributions is integrable. We also have the decomposition
        \begin{equation}\label{eq:decompositionTangentBundleCRAmbitwistorSpace}
            T_\IC L\ \cong\ T^{0,1}_{\rm CR}L\oplus\underbrace{\overline{T^{0,1}_{\rm CR}L}\oplus W}_{\eqqcolon\,T^{1,0}_{\rm CR}L}~.
        \end{equation}
        Moreover, we have $\tau(T^{0,1}_{\rm CR}L)=\overline{T^{0,1}_{\rm CR}L}$ and $\tau(W)=W$ under the quaternionic structure defined in~\eqref{eq:quaternionicStructure}, that is, the distributions $\overline{T^{0,1}_{\rm CR}L}$ and $T^{0,1}_{\rm CR}L$ are complex conjugates of each other and $W$ is real. Note also that $T^{0,1}_{\rm CR}L\cap\overline{T^{0,1}_{\rm CR}L}=\{0\}$ as is required for an (almost) CR structure.
        
        Dually, we have
        \begin{subequations}\label{eq:coTangentBundleDistributionsCRAmbitwistorSpace}
            \begin{equation}
                \begin{gathered}
                    \Omega^{0,1}_{\rm CR}(L)\ \coloneqq\ {\rm span}\{\hat e^\rmF,\hat e^\rmL,\hat e^\rmR\}~,
                    \\
                    \hat e^\rmF\ \coloneqq\ \frac{\rmd x^{\alpha\dot\alpha}\hat\mu_\alpha\hat\lambda_{\dot\alpha}}{|\mu|^2|\lambda|^2}~,~~~
                    \hat e^\rmL\ \coloneqq\ \frac{\rmd\hat\lambda_{\dot\alpha}\hat\lambda^{\dot\alpha}}{|\lambda|^4}~,~~~
                    \hat e^\rmR\ \coloneqq\ \frac{\rmd\hat\mu_\alpha\hat\mu^\alpha}{|\mu|^4}~,
                \end{gathered}
            \end{equation}
            as well as
            \begin{equation}
                \begin{gathered}
                    \overline{\Omega^{0,1}_{\rm CR}(L)}\ \coloneqq\ {\rm span}\{e^\rmF,e^\rmL,e^\rmR\}~,
                    \\
                    e^\rmF\ \coloneqq\ \rmd x^{\alpha\dot\alpha}\mu_\alpha\lambda_{\dot\alpha}~,~~~
                    e^\rmL\ \coloneqq\ \rmd\lambda_{\dot\alpha}\lambda^{\dot\alpha}~,~~~
                    e^\rmR\ \coloneqq\ \rmd\mu_\alpha\mu^\alpha~,
                \end{gathered}
            \end{equation}
            and
            \begin{equation}
                \Omega^1_W(L)\ \coloneqq\ {\rm span}\{e^\rmW,e^{\hat\rmW}\}
                \ewith
                e^\rmW\ \coloneqq\ -\frac{\rmd x^{\alpha\dot\alpha}\hat\mu_\alpha\lambda_{\dot\alpha}}{|\mu|^2}
                \eand
                e^{\hat\rmW}\ \coloneqq\ \frac{\rmd x^{\alpha\dot\alpha}\mu_\alpha\hat\lambda_{\dot\alpha}}{|\lambda|^2}~.
            \end{equation}
        \end{subequations}
        Furthermore, because of~\eqref{eq:decompositionTangentBundleCRAmbitwistorSpace}, the exterior derivative on $L$ decomposes as $\rmd=\bar\partial_{\rm CR}+\partial_{\rm CR}$ with
        \begin{equation}\label{eq:CRAmbitwistorSpaceDifferentials}
            \begin{aligned}
                \bar\partial_{\rm CR}\ &\coloneqq\ \hat e^\rmF\hat E_\rmF+\hat e^\rmL\hat E_\rmL+\hat e^\rmR\hat E_\rmR~,
                \\
                \partial_{\rm CR}\ &\coloneqq\ e^\rmF E_\rmF+e^\rmL E_\rmL+e^\rmR E_\rmR+e^\rmW E_\rmW+e^{\hat\rmW}E_{\hat\rmW}~.
            \end{aligned}
        \end{equation}
        The commutation relations amongst the vector fields~\eqref{eq:tangentBundleDistributionsCRAmbitwistorSpace} then yield 
        \begin{equation}
            \begin{gathered}
                \bar\partial_{\rm CR}e^\rmW\ =\ e^\rmL\wedge\hat e^\rmF-\hat e^\rmR\wedge e^\rmF~,~~~
                \bar\partial_{\rm CR}e^{\hat\rmW}\ =\ \hat e^\rmL\wedge e^\rmF-e^\rmR\wedge\hat e^\rmF~,
                \\
                \partial_{\rm CR}e^\rmF\ =\ -e^\rmL\wedge e^{\hat\rmW}+e^\rmR\wedge e^\rmW~,~~~
                \partial_{\rm CR}\hat e^\rmF\ =\ -\hat e^\rmL\wedge e^\rmW+\hat e^\rmR\wedge e^{\hat\rmW}
            \end{gathered}
        \end{equation}
        as the only non-vanishing actions of $\bar\partial_{\rm CR}$ and $\partial_{\rm CR}$, respectively.
        
        \subsection{Supersymmetric extension and constraint system}
        
        \paragraph{Euclidean $\caN=3$ supersymmetry.}
        To incorporate Euclidean $\caN=3$ supersymmetry, we enlarge~\eqref{eq:CRAmbitwistorSpace} and consider
        \begin{equation}
            F\ \coloneqq\ \IR^{4|12}_{\rm cpl}\times\IC P^1\times\IC P^1
            \ewith
            \IR^{4|12}_{\rm cpl}\ \coloneqq\ \IR^{4|0}\times\IC^{0|12}~.
        \end{equation} 
        We coordinatise $\IR^{4|12}_{\rm cpl}$ by $(x^{\alpha\dot\alpha},\eta^{\dot\alpha}_i,\theta^{i\alpha})$ with $i,j,\ldots=1,2,3$ and $\IC P^1\times\IC P^1$ by $(\lambda_{\dot\alpha},\mu_\alpha)$, respectively. In this Euclidean setting, the fermionic coordinates remain complex\footnote{This is a necessity, as the only reality condition for Euclidean spinors is a symplectic Majorana condition, requiring an even amount of supersymmetry.}, and we restrict to functions and differential forms on $F$ that depend on the fermionic coordinates only via $\eta^{\dot\alpha}_i$ and $\theta^{i\alpha}$ and \uline{not} their complex conjugates. We shall refer to $F$ as the \uline{augmented CR ambitwistor space}. The (integrable) CR structure~\eqref{eq:tangentBundleDistributionsCRAmbitwistorSpace} extends as
        \begin{subequations}
            \begin{equation}\label{eq:tangentBundleDistributionsCRSuperAmbitwistorSpace}
                \begin{gathered}
                    T^{0,1}_{\rm CR}F\ \coloneqq\ {\rm span}\{\hat E_\rmF,\hat E_\rmL,\hat E_\rmR,\hat E^i,\hat E_i\}~,
                    \\
                    \hat E_\rmF\ \coloneqq\ \mu^\alpha\lambda^{\dot\alpha}\parder{x^{\alpha\dot\alpha}}~,~~~
                    \hat E_\rmL\ \coloneqq\ |\lambda|^2\lambda_{\dot\alpha}\parder{\hat\lambda_{\dot\alpha}}~,~~~
                    \hat E_\rmR\ \coloneqq\ |\mu|^2\mu_\alpha\parder{\hat\mu_\alpha}~,
                    \\
                    \hat E^i\ \coloneqq\ \lambda^{\dot\alpha}\underbrace{\left(\parder{\eta^{\dot\alpha}_i}+\theta^{i\alpha}\parder{x^{\alpha\dot\alpha}}\right)}_{\eqqcolon\,D^i_{\dot\alpha}},~~~
                    \hat E_i\ \coloneqq\ \mu^\alpha\underbrace{\left(\parder{\theta^{i\alpha}}+\eta^{\dot\alpha}_i\parder{x^{\alpha\dot\alpha}}\right)}_{\eqqcolon\,D_{i\alpha}}~,
                \end{gathered}
            \end{equation}
            and, dually,
            \begin{equation}
                \begin{gathered}
                    \Omega^{0,1}_{\rm CR}(F)\ \coloneqq\ {\rm span}\{\hat e^\rmF,\hat e^\rmL,\hat e^\rmR,\hat e_i,\hat e^i\}~,
                    \\
                    \hat e^\rmF\ \coloneqq\ \frac{(\rmd x^{\alpha\dot\alpha}+\theta^{i\alpha}\rmd\eta^{\dot\alpha}_i-\rmd\theta^{i\alpha}\eta^{\dot\alpha}_i)\hat\mu_\alpha\hat\lambda_{\dot\alpha}}{|\mu|^2|\lambda|^2}~,~~~
                    \hat e^\rmL\ \coloneqq\ \frac{\rmd\hat\lambda_{\dot\alpha}\hat\lambda^{\dot\alpha}}{|\lambda|^4}~,~~~
                    \hat e^\rmR\ \coloneqq\ \frac{\rmd\hat\mu_\alpha\hat\mu^\alpha}{|\mu|^4}~,
                    \\
                    \hat e_i\ \coloneqq\ -\frac{\rmd\eta^{\dot\alpha}_i\hat\lambda_{\dot\alpha}}{|\lambda|^2}~,~~~
                    \hat e^i\ \coloneqq\ -\frac{\rmd\theta^{i\alpha}\hat\mu_\alpha}{|\mu|^2}~,
                \end{gathered}
            \end{equation}
        \end{subequations}
        where we have slightly abused notation and again denoted the dual of $\hat E_\rmF$ by $\hat e^\rmF$. Note that the only non-vanishing commutator amongst the vector fields~\eqref{eq:tangentBundleDistributionsCRSuperAmbitwistorSpace} is
        \begin{equation}\label{eq:superspaceTorsion}
            [\hat E_i,\hat E^j]\ =\ 2\delta_i{}^j\hat E_\rmF~, 
        \end{equation}
        where $\delta_i{}^j$ is the Kronecker symbol. Next, set $\Omega^{0,\bullet}_{\rm CR}(F)\coloneqq\bigwedge^\bullet T^{0,1\,*}_{\rm CR}F$ for the distribution $T^{0,1}_{\rm CR}F$ defined in~\eqref{eq:tangentBundleDistributionsCRSuperAmbitwistorSpace}.\footnote{with independence on the coordinates $\bar\eta^{\dot\alpha}_i$ and $\bar\theta^{i\alpha}$ implied, as stated above} This is augmented to a differential graded algebra via the differential
        \begin{equation}
            \bar\partial_{\rm CR}\ \coloneqq\ \hat e^\rmF\hat E_\rmF+\hat e^\rmL\hat E_\rmL+\hat e^\rmR\hat E_\rmR+\hat e_i\hat E^i+\hat e^i\hat E_i~.
        \end{equation}
        We also have
        \begin{subequations}
            \begin{equation}
                \begin{gathered}
                    T^{1,0}_{\rm CR}F\ \coloneqq\ {\rm span}\{E_\rmF,E_\rmL,E_\rmR,E_\rmW,E_{\hat\rmW},E^i,E_i\}~,
                    \\
                    E_\rmF\ \coloneqq\ \frac{\hat\mu^\alpha\hat\lambda^{\dot\alpha}}{|\mu|^2|\lambda|^2}\parder{x^{\alpha\dot\alpha}}~,~~~
                    E_\rmL\ \coloneqq\ -\frac{\hat\lambda_{\dot\alpha}}{|\lambda|^2}\parder{\lambda_{\dot\alpha}}~,~~~
                    E_\rmR\ \coloneqq\ -\frac{\hat\mu_\alpha}{|\mu|^2}\parder{\mu_\alpha}~,
                    \\
                    E_\rmW\ \coloneqq\ \frac{\mu^\alpha\hat\lambda^{\dot\alpha}}{|\lambda|^2}\parder{x^{\alpha\dot\alpha}}~,~~~
                    E_{\hat\rmW}\ \coloneqq\ -\frac{\hat\mu^\alpha\lambda^{\dot\alpha}}{|\mu|^2}\parder{x^{\alpha\dot\alpha}}~,
                    \\
                    E^i\ \coloneqq\ \frac{\hat\lambda^{\dot\alpha}}{|\lambda|^2}D^i_{\dot\alpha},~~~
                    E_i\ \coloneqq\ \frac{\hat\mu^\alpha}{|\mu|^2}D_{i\alpha}
                \end{gathered}
            \end{equation}
            with $D^i_{\dot\alpha}$ and $D_{i\alpha}$ as given in~\eqref{eq:tangentBundleDistributionsCRSuperAmbitwistorSpace}. Dually,
            \begin{equation}
                \begin{gathered}
                    \Omega^{1,0}_{\rm CR}(F)\ \coloneqq\ {\rm span}\{e^\rmF,e^\rmL,e^\rmR,e^\rmW,e^{\hat\rmW},e_i,e^i\}~,
                    \\
                    e^\rmF\ \coloneqq\ (\rmd x^{\alpha\dot\alpha}+\theta^{i\alpha}\rmd\eta^{\dot\alpha}_i-\rmd\theta^{i\alpha}\eta^{\dot\alpha}_i)\mu_\alpha\lambda_{\dot\alpha}~,~~~
                    e^\rmL\ \coloneqq\ \rmd\lambda_{\dot\alpha}\lambda^{\dot\alpha}~,~~~
                    e^\rmR\ \coloneqq\ \rmd\mu_\alpha\mu^\alpha~,
                    \\
                    e^\rmW\ \coloneqq\ -\frac{(\rmd x^{\alpha\dot\alpha}+\theta^{i\alpha}\rmd\eta^{\dot\alpha}_i-\rmd\theta^{i\alpha}\eta^{\dot\alpha}_i)\hat\mu_\alpha\lambda_{\dot\alpha}}{|\mu|^2}~,~~~
                    e^{\hat\rmW}\ \coloneqq\ \frac{(\rmd x^{\alpha\dot\alpha}+\theta^{i\alpha}\rmd\eta^{\dot\alpha}_i-\rmd\theta^{i\alpha}\eta^{\dot\alpha}_i)\mu_\alpha\hat\lambda_{\dot\alpha}}{|\lambda|^2}~,
                    \\
                    e_i\ \coloneqq\ \rmd\eta^{\dot\alpha}_i\lambda_{\dot\alpha}~,~~~
                    e^i\ \coloneqq\ \rmd\theta^{i\alpha}\mu_\alpha~,
                \end{gathered}
            \end{equation}
            where, as before, we have slightly abused notation in labelling our one-forms. Hence,
            \begin{equation}
                \partial_{\rm CR}\ \coloneqq\ e^\rmF E_\rmF+e^\rmL E_\rmL+e^\rmR E_\rmR+e^\rmW E_\rmW+e^{\hat\rmW}E_{\hat\rmW}+e_iE^i+e^iE_i
            \end{equation}
        \end{subequations}
        with $\partial_{\rm CR}^2=0$, which implies that the distribution $T^{1,0}_{\rm CR}F$ is integrable as well. Furthermore,
        \begin{equation}
            \begin{aligned}
                \rmd\ &=\ \bar\partial_{\rm CR}+\partial_{\rm CR}
                \\ 
                &=\ \underbrace{\rmd x^{\alpha\dot\alpha}\parder{x^{\alpha\dot\alpha}}+\rmd\eta^{\dot\alpha}_i\parder{\eta^{\dot\alpha}_i}+\rmd\theta^{i\alpha}\parder{\theta^{i\alpha}}}_{=\,\rmd_{\IR^{4|12}_{\rm cpl}}}+\underbrace{\hat e^\rmL\hat E_\rmL+e^\rmL E_\rmL+\hat e^\rmR\hat E_\rmR+e^\rmR E_\rmR}_{=\,\rmd_{\IC P^1\times\IC P^1}}
            \end{aligned}
        \end{equation}
        for the exterior derivative on $F$.\footnote{As mentioned before, we restrict to functions and differential forms on $F$ that depend on the fermionic coordinates only via $\eta^{\dot\alpha}_i$ and $\theta^{i\alpha}$ and not their complex conjugates. Hence, the fermionic part of $\rmd$ only contains $\rmd\eta^{\dot\alpha}_i\parder{\eta^{\dot\alpha}_i}+\rmd\theta^{i\alpha}\parder{\theta^{i\alpha}}$ and not its complex conjugate.}
        
        \paragraph{Constraint system of $\caN=3$ supersymmetric Yang--Mills theory.}
        Let now $\frg$ be a Lie algebra with Lie bracket $[-,-]$ and consider $A\in\Omega^{0,1}_{\rm CR}(F)\otimes\frg$ for a topologically trivial complex vector bundle over $F$. The equation of motion for \uline{CR holomorphic Chern--Simons theory}\footnote{In~\cite{Popov:2005uv}, such a theory is referred to as partially holomorphic Chern--Simons theory.} reads 
        \begin{equation}\label{eq:CRCSEoM}
            \bar\partial_{\rm CR}A+\tfrac12[A,A]\ =\ 0
        \end{equation}
        with the wedge product in the bracket $[-,-]$ understood. Under the standard (twistor) assumption that there is a gauge of $A$ in which\footnote{The symbol `$\intprod$' denotes the interior product.} 
        \begin{equation}\label{eq:ambitwistorGauge}
            \hat E_\rmL\intprod A\ =\ 0\ =\ \hat E_\rmR\intprod A~,
        \end{equation}
        we deduce from~\eqref{eq:CRCSEoM} that the remaining components of $A$ are holomorphic in $\lambda_{\dot\alpha}$ and $\mu_\alpha$. Following~\cite{Manin:1988ds}, we call topologically trivial complex vector bundles over $F$ with a connection that allows for such a gauge \uline{$\IR^4$-trivial}.\footnote{Put differently, all such bundles are given by pull-backs along $F\rightarrow\IR^4$.} Consequently,~\eqref{eq:CRCSEoM} is equivalent to the equations\footnote{Parentheses denote normalised total symmetrisation of the enclosed indices.}
        \begin{subequations}
            \begin{equation}
                \big[\nabla^i_{(\dot\alpha},\nabla^j_{\dot\beta)}\big]\ =\ 0~,~~~
                \big[\nabla_{i(\alpha},\nabla_{j\beta)}\big]\ =\ 0~,~~~
                \big[\nabla_{i\alpha},\nabla^j_{\dot\alpha}\big]\ =\ 2\delta_i{}^j\nabla_{\alpha\dot\alpha}
            \end{equation}
            with\footnote{$\partial_{\alpha\dot\alpha}\coloneqq\parder{x^{\alpha\dot\alpha}}$.}
            \begin{equation}
                \nabla^i_{\dot\alpha}\ \coloneqq\ D^i_{\dot\alpha}+[A^i_{\dot\alpha},-]~,~~~
                \nabla_{i\alpha}\ \coloneqq D_{i\alpha}+[A_{i\alpha},-]~,~~~
                \nabla_{\alpha\dot\alpha}\ \coloneqq\ \partial_{\alpha\dot\alpha}+[A_{\alpha\dot\alpha},-]
            \end{equation}
        \end{subequations}
        which, in turn, constitute the superspace constraint system of the equations of motion of $\caN=3$ supersymmetric Yang--Mills theory on $\IR^4$ and with gauge algebra $\frg$. See~\cite{Witten:1978xx,Isenberg:1978kk,Khenkin:1980ff,Harnad:1984vk,Harnad:1985bc,Buchdahl:1985aa,Eastwood:1987aa,Manin:1988ds} (and also~\cite{Popov:2004rb} for a review) for full details on this construction. 
        
        It is also straightforward to check that infinitesimal gauge transformations of $\caN=3$ supersymmetric Yang--Mills theory are captured by $c\in\Omega^{0,0}_{\rm CR}(F)\otimes\frg$ and act according to
        \begin{equation}
            A\ \mapsto\ A+\delta A
            \ewith
            \delta A\ \coloneqq\ \bar\partial_{\rm CR}c+[A,c]~.
        \end{equation} 
        
        We thus see that the at the level of equations of motion, $\caN=3$ supersymmetric Yang--Mills theory can be identified with the differential graded Lie algebra\footnote{As before, the wedge product in the bracket $[-,-]$ is understood.} 
        \begin{equation}\label{eq:SYMDGLA}
            \frL_{\rm CR}\ \coloneqq\ \big(\Omega^{0,\bullet}_{\rm CR}(F)\otimes\frg,\bar\partial_{\rm CR},[-,-]\big)\,,
        \end{equation}
        where elements of degrees $0$ and $1$ correspond to the gauge parameters and gauge potentials. Elements of degree $2$ and $3$ correspond to the anti-fields of the gauge potentials and the anti-fields of the ghosts. We have thus obtained the Batalin--Vilkovisky formulation of the CR holomorphic Chern--Simons theory, following the familiar construction for ordinary Chern--Simons theory, at the level of equations of motion.
        
        Recall that any equation of motion can be brought into the form of a homotopy Maurer--Cartan equation associated with an $L_\infty$-algebra, the natural generalisation of differential graded Lie algebras with products of higher arities. Moreover, \uline{semi-classical equivalence}\footnote{i.e.~the theories have isomorphic tree-level S-matrices} corresponds to a \uline{quasi-isomorphism} between $L_\infty$-algebras. This perspective will prove to be very useful to our discussion; a detailed review can be found e.g.~in~\cite{Jurco:2018sby} and a discussion of field theory equivalences is found in~\cite[Section 3.4]{Borsten:2021gyl}. 
        
        By construction, any $L_\infty$-algebra comes with an underlying cochain complex, and any quasi-isomorphism of $L_\infty$-algebras descends to a quasi-isomorphism of the underlying cochain complexes, that is, a cochain map that descend to an isomorphism on the cohomologies. Therefore, we are interested in the following proposition, which follows from our above considerations.

        \begin{proposition}\label{prop:quasi-iso-first-cohomologies}
            Under the assumption of $\IR^4$-triviality, the cohomology groups $H^0(\frL_{\rm CR})$ and $H^1(\frL_{\rm CR})$ are isomorphic to the trivial (i.e.~constant) linearised gauge transformations and the quotient space of solutions to the linearised $\caN=3$ supersymmetric Yang--Mills equations modulo linearised gauge transformations, respectively.
        \end{proposition}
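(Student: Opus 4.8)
The plan is to read off both cohomology groups directly from the defining differential $\bar\partial_{\rm CR}$ of the differential graded Lie algebra~\eqref{eq:SYMDGLA}, using the projection $F\rightarrow\IR^{4|12}_{\rm cpl}$ with compact fibre $\IC P^1\times\IC P^1$ together with the $\IR^4$-triviality assumption. For $H^0(\frL_{\rm CR})$ one solves $\bar\partial_{\rm CR}c=0$ for $c\in\Omega^{0,0}_{\rm CR}(F)\otimes\frg$. The equations $\hat E_\rmL c=0=\hat E_\rmR c$ say that, for fixed $(x^{\alpha\dot\alpha},\eta^{\dot\alpha}_i,\theta^{i\alpha})$, the restriction of $c$ to the fibre is a holomorphic function of vanishing weight on $\IC P^1\times\IC P^1$, hence constant along the fibre. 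Given this, $\hat E_\rmF c=\mu^\alpha\lambda^{\dot\alpha}\partial_{\alpha\dot\alpha}c=0$ for all $\lambda_{\dot\alpha},\mu_\alpha$ forces $\partial_{\alpha\dot\alpha}c=0$, whereupon $\hat E^ic=0=\hat E_ic$ collapse to $\parder{\eta^{\dot\alpha}_i}c=0=\parder{\theta^{i\alpha}}c$. Thus $c$ is a genuine constant, so $H^0(\frL_{\rm CR})\cong\frg$, which is precisely the space of trivial (constant) linearised gauge transformations.

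For $H^1(\frL_{\rm CR})$ I would run the linearisation of the classical Ward-type correspondence recalled above. Given a $\bar\partial_{\rm CR}$-closed $A\in\Omega^{0,1}_{\rm CR}(F)\otimes\frg$, first use the freedom $A\mapsto A+\bar\partial_{\rm CR}c$ to impose the $\IR^4$-triviality gauge $\hat E_\rmL\intprod A=0=\hat E_\rmR\intprod A$ of~\eqref{eq:ambitwistorGauge}: the $\hat e^\rmL$- and $\hat e^\rmR$-components of $A$ are fibrewise $\bar\partial$-closed and take values in line bundles over the two $\IC P^1$ factors with vanishing first cohomology, so they can be gauged away, and this is exactly the linearised content of the $\IR^4$-triviality assumption. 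In this gauge, $\bar\partial_{\rm CR}A=0$ splits into (a) the statement that the surviving components $\hat E_\rmF\intprod A$, $\hat E^i\intprod A$ and $\hat E_i\intprod A$ are holomorphic in $\lambda_{\dot\alpha}$ and $\mu_\alpha$ and (b) the remaining mixed equations. Holomorphicity together with the fixed weights on $\IC P^1\times\IC P^1$ makes each surviving component a polynomial of bounded degree in $\lambda_{\dot\alpha},\mu_\alpha$ with coefficients on $\IR^{4|12}_{\rm cpl}$; expanding these, alongside the polynomial dependence on $\eta^{\dot\alpha}_i$ and $\theta^{i\alpha}$, reproduces the $\caN=3$ Yang--Mills superfield content, and the mixed equations in (b) become the \emph{linearised} superspace constraint system, i.e.\ the equations displayed after~\eqref{eq:CRCSEoM} with all commutator terms dropped. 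The residual gauge transformations that preserve~\eqref{eq:ambitwistorGauge} are those $c$ which are fibrewise holomorphic of vanishing weight, hence $c=c(x^{\alpha\dot\alpha},\eta^{\dot\alpha}_i,\theta^{i\alpha})$, and these are precisely the linearised $\caN=3$ Yang--Mills gauge parameters. Quotienting identifies $H^1(\frL_{\rm CR})$ with the space of solutions of the linearised constraint system modulo linearised gauge, which by~\cite{Witten:1978xx,Isenberg:1978kk,Khenkin:1980ff,Harnad:1984vk,Harnad:1985bc,Buchdahl:1985aa,Eastwood:1987aa,Manin:1988ds} is the space of linearised $\caN=3$ supersymmetric Yang--Mills solutions on $\IR^4$ modulo linearised gauge.

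I expect the two technical hot spots to be: (i) the gauge fixing to~\eqref{eq:ambitwistorGauge}, namely showing that the fibrewise $\bar\partial$-primitives assemble into a globally defined $c\in\Omega^{0,0}_{\rm CR}(F)\otimes\frg$ respecting the reality and fermionic-dependence conventions on $F$ --- this is where the $\IR^4$-triviality hypothesis does its work and where a small Leray/\v{C}ech argument over the $\IC P^1\times\IC P^1$ fibres is needed; and (ii) the bookkeeping of the $\lambda_{\dot\alpha},\mu_\alpha,\eta^{\dot\alpha}_i,\theta^{i\alpha}$ expansion that converts the holomorphic CR ambitwistor data into the linearised $\caN=3$ multiplet and matches the residual equations with the linearised constraint system, which requires care with the superspace torsion~\eqref{eq:superspaceTorsion} and with the $\caN=3$ reality subtleties flagged in the introduction. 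Neither step is conceptually deep --- both are the linearisations of the standard supertwistor constructions --- but (ii) is the place where a careless treatment could go wrong.
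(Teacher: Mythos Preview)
Your proposal is correct and takes essentially the same approach as the paper: the paper does not give a separate proof but simply states that the proposition ``follows from our above considerations'', meaning the linearisation of the Ward-type correspondence just established around~\eqref{eq:CRCSEoM}--\eqref{eq:ambitwistorGauge}. Your write-up is a faithful and more explicit unpacking of exactly that argument, including the direct computation of $H^0$ and the gauge-fixing plus fibrewise-holomorphicity analysis for $H^1$.
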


        \noindent 
        We will discuss the remaining cohomology groups in \cref{ssec:equivalence}.
        
        \subsection{Quasi-isomorphic differential graded Lie algebras}
        
        The form of the CR structure~\eqref{eq:tangentBundleDistributionsCRSuperAmbitwistorSpace} is not quite suitable for writing down an action. The main problem is as follows. The gauge potential $A$ includes the fermionic differential form components $\hat E^i\intprod A$ and $\hat E_i\intprod A$. However, the Berezin integration over fermionic coordinates requires integral forms, rather than differential forms. Therefore, to make sense of the Berezin integration, one might be tempted to work in a gauge in which the fermionic directions $\hat E^i\intprod A$ and $\hat E_i\intprod A$ vanish; see~\cite{Witten:2003nn} for the chiral setting. However, because of the superspace torsion that is encoded in the non-trivial commutator~\eqref{eq:superspaceTorsion}, this would imply the vanishing of the bosonic component $\hat E_\rmF\intprod A$. In turn, we would be left with too few bosonic directions (in fact, only two) to formulate a Chern--Simons-type action. This is in contradistinction with supersymmetric self-dual Yang--Mills theory where one can avoid this issue by working in chiral superspace. To resolve this problem, we shall replace the differential graded Lie algebra~\eqref{eq:SYMDGLA} with a quasi-isomorphic one, 
        \begin{equation}\label{eq:SYMDGLATwisted}
            \frL_{\rm CR,\,tw}\ \coloneqq\ \big(\Omega^{0,\bullet}_{\rm CR,\,tw}(F)\otimes\frg,\bar\partial_{\rm CR,\,tw},[-,-]\big)\,,
        \end{equation}
        which we develop in the following.
        
        \paragraph{Twisted CR structure.}
        For brevity, we use the \uline{CR holomorphic} and \uline{CR anti-holomorphic fermionic coordinates}\footnote{Whilst $\eta_i$ and $\tilde\eta_i$ and likewise $\theta^i$ and $\tilde\theta^i$ are not complex conjugates of each other and are not related by the quaternionic structure~\eqref{eq:quaternionicStructure}, we have $\bar\partial_{\rm CR}\eta_i=\bar\partial_{\rm CR}\theta^i=0$ and $\partial_{\rm CR}\tilde\eta_i=\partial_{\rm CR}\tilde\theta^i=0$.} 
        \begin{subequations}\label{eq:CRFermionicCoordinatesTransformations}
            \begin{equation}\label{eq:CRFermionicCoordinates}
                \eta_i\ \coloneqq\ \eta^{\dot\alpha}_i\lambda_{\dot\alpha}~,~~~
                \theta^i\ \coloneqq\ \theta^{i\alpha}\mu_\alpha~,~~~
                \tilde\eta_i\ \coloneqq\ -\frac{\eta^{\dot\alpha}_i\hat\lambda_{\dot\alpha}}{|\lambda|^2}~,
                \eand
                \tilde\theta^i\ \coloneqq\ -\frac{\theta^{i\alpha}\hat\mu_\alpha}{|\mu|^2}
            \end{equation}
            with the inverse relations
            \begin{equation}
                \eta^{\dot\alpha}_i\ =\ \tilde\eta_i\lambda^{\dot\alpha}+\frac{\eta^i\hat\lambda^{\dot\alpha}}{|\lambda|^2}
                \eand
                \theta^{i\alpha}\ =\ \tilde\theta^i\mu^\alpha+\frac{\theta^i\hat\mu^\alpha}{|\mu|^2}
            \end{equation}
        \end{subequations}
        and consider the change of basis 
        \begin{subequations}\label{eq:CRStructureSuperAmbitwistorSpaceRedefined}
            \begin{equation}\label{eq:tangentBundleDistributionsCRSuperAmbitwistorSpaceRedefined}
                \begin{gathered}
                    T^{0,1}_{\rm CR}F\ =\ {\rm span}\{\hat E'_\rmF,\hat E'_\rmL,\hat E'_\rmR,\hat E'^i,\hat E'_i\}~,
                    \\
                    \hat E'_\rmF\ \coloneqq\ \hat E_\rmF~,~~~
                    \hat E'_\rmL\ \coloneqq\ \hat E_\rmL+\tilde\theta^i\eta_i\hat E_\rmF~,~~~
                    \hat E'_\rmR\ \coloneqq\ \hat E_\rmR-\theta^i\tilde\eta_i\hat E_\rmF~,
                    \\
                    \hat E'^i\ \coloneqq\ \hat E^i-\tilde\theta^i\hat E_\rmF~,~~~
                    \hat E'_i\ \coloneqq\ \hat E_i-\tilde\eta_i\hat E_\rmF
                \end{gathered}
            \end{equation}
            of the CR structure. We stress that these vector fields respect the holomorphic dependence of $\Omega^{0,\bullet}_{\rm CR}(F)$ on $\eta^{\dot\alpha}_i$ and $\theta^{i\alpha}$. Dually, we have
            \begin{equation}
                \begin{gathered}
                    \Omega^{0,1}_{\rm CR}(F)\ \coloneqq\ {\rm span}\{\hat e'^\rmF,\hat e'^\rmL,\hat e'^\rmR,\hat e'_i,\hat e'^i\}~,
                    \\
                    \hat e'^\rmF\ \coloneqq\ \hat e^\rmF-\tilde\theta^i\eta_i\hat e^\rmL+\theta^i\tilde\eta_i\hat e^\rmR-\tilde\theta_i\hat e_i-\tilde\eta_i\hat e^i~,
                    \\
                    \hat e'^\rmL\ \coloneqq\ \hat e^\rmL~,~~~
                    \hat e'^\rmR\ \coloneqq\ \hat e^\rmR~,~~~
                    \hat e'_i\ \coloneqq\ \hat e_i~,~~~
                    \hat e'^i\ \coloneqq\ \hat e^i~.
                \end{gathered}
            \end{equation}
        \end{subequations}
        It is then not difficult to see that the only non-vanishing commutator amongst the vector fields~\eqref{eq:tangentBundleDistributionsCRSuperAmbitwistorSpaceRedefined} is
        \begin{equation}\label{eq:superspaceTorsionRedefined}
            [\hat E'_\rmL,\hat E'_\rmR]\ =\ 2\theta^i\eta_i\hat E'_\rmF~.
        \end{equation}
        Hence, the superspace torsion initially encoded in the fermionic commutator~\eqref{eq:superspaceTorsion} has been shifted to a bosonic commutator for the auxiliary spinorial coordinates. 
        
        Next, to remove the dependence on the CR anti-holomorphic fermionic coordinates $\tilde\eta_i$ and $\tilde\theta^i$, let
        \begin{subequations}
            \begin{equation}\label{eq:twist}
                g\ \coloneqq\ \rme^{\tilde\theta^i\eta_iE_\rmW+\theta^i\tilde\eta_iE_{\hat\rmW}}
            \end{equation}
            with $E_\rmW$ and $E_{\hat\rmW}$ as given in~\eqref{eq:tangentBundleDistributionsCRAmbitwistorSpace}, and consider the \uline{twisted} almost CR structure
            \begin{equation}\label{eq:tangentBundleDistributionsCRSuperAmbitwistorSpaceTwisted}
                \begin{gathered}
                    T^{0,1}_{\rm CR,\,tw}F\ \coloneqq\ {\rm span}\{\hat V_\rmF,\hat V_\rmL,\hat V_\rmR,\hat V^i,\hat V_i\}~,
                    \\
                    \hat V_\rmF\ \coloneqq\ g\hat E'_\rmF g^{-1}\ =\ \hat E_\rmF~,
                    \\
                    \hat V_\rmL\ \coloneqq\ g\hat E'_\rmL g^{-1}\ =\ \hat E_\rmL+\theta^i\eta_iE_{\hat\rmW}~,~~~
                    \hat V_\rmR\ \coloneqq\ g\hat E'_\rmR g^{-1}\ =\ \hat E_\rmR+\theta^i\eta_iE_\rmW~,
                    \\
                    \hat V^i\ \coloneqq\ g\hat E'^ig^{-1}\ =\ \hat E^i-\tilde\theta^i\hat E_\rmF+\theta^iE_{\hat\rmW}\ =\ \lambda^{\dot\alpha}\parder{\eta^{\dot\alpha}_i}~,
                    \\
                    \hat V_i\ \coloneqq\ g\hat E'_ig^{-1}\ =\ \hat E_i-\tilde\eta_i \hat E_\rmF-\eta_iE_\rmW\ =\ \mu^\alpha\parder{\theta^{i\alpha}}~.
                \end{gathered}
            \end{equation}
            One may quickly check that this almost CR structure is again integrable. It is important to stress that after the twisting, the fermionic vector fields $\hat V^i$ and $\hat V_i$ have become fermionic derivatives with respect to the CR anti-holomorphic coordinates $\tilde\eta_i$ and $\tilde\theta^i$, and, in addition, the bosonic vector fields depend only on the CR holomorphic coordinates $\eta_i$ and $\theta^i$.\footnote{Under the change of coordinates $(\lambda_{\dot\alpha},\eta^{\dot\alpha}_i)\mapsto(\pi_{\dot\alpha},\eta_i,\tilde\eta_i)$ and $(\mu_\alpha,\theta^{i\alpha})\mapsto(\rho_\alpha,\theta^i,\tilde\theta^i)$ with $\pi_{\dot\alpha}\coloneqq\lambda_{\dot\alpha}$, $\rho_\alpha\coloneqq\mu_\alpha$, and~\eqref{eq:CRFermionicCoordinates}, it is not difficult to see that $\hat V^i=\parder{\tilde\eta_i}$ and $\hat V_\rmL=|\pi|^2\pi_{\dot\alpha}\parder{\hat\pi_{\dot\alpha}}-\eta_i\parder{\tilde\eta_i}+\theta^i\eta_iE_{\hat\rmW}$ as well as $\hat V_i=\parder{\tilde\theta^i}$ and $\hat V_\rmR=|\rho|^2\rho_\alpha\parder{\hat\rho_\alpha}-\theta^i\parder{\tilde\theta^i}+\theta^i\eta_i E_\rmW$.\label{fn:changeOfFermionicCoordinates}} Dually, we have
            \begin{equation}\label{eq:cotangentBundleDistributionCRSuperAmbitwistorSpaceTwisted}
                \begin{gathered}
                    \Omega^{0,1}_{\rm CR,\,tw}(F)\ \coloneqq\ {\rm span}\{\hat v^\rmF,\hat v^\rmL,\hat v^\rmR,\hat v_i,\hat v^i\}~,
                    \\
                    \hat v^\rmF\ \coloneqq\ \hat e^\rmF-\tilde\theta^i\hat e_i-\tilde\eta_i\hat e^i\ =\ \frac{\rmd x^{\alpha\dot\alpha}\hat\mu_\alpha\hat\lambda_{\dot\alpha}}{|\mu|^2|\lambda|^2}~,
                    \\
                    \hat v^\rmL\ \coloneqq\ \hat e^\rmL\ =\ \frac{\rmd\hat\lambda_{\dot\alpha}\hat\lambda^{\dot\alpha}}{|\lambda|^4}~,~~~
                    \hat v^\rmR\ \coloneqq\ \hat e^\rmR\ =\ \frac{\rmd\hat\mu_\alpha\hat\mu^\alpha}{|\mu|^4}~,
                    \\
                    \hat v_i\ \coloneqq\ \hat e_i\ =\ -\frac{\rmd\eta^{\dot\alpha}_i\hat\lambda_{\dot\alpha}}{|\lambda|^2}~,~~~
                    \hat v^i\ \coloneqq\ \hat e^i\ =\ -\frac{\rmd\theta^{i\alpha}\hat\mu_\alpha}{|\mu|^2}~.
                \end{gathered}
            \end{equation}
        \end{subequations}
        Upon setting $\Omega^{0,\bullet}_{\rm CR,\,tw}(F)\coloneqq\bigwedge^\bullet T^{0,1\,*}_{\rm CR,\,tw}F$, we obtain the differential graded algebra $(\Omega^{0,\bullet}_{\rm CR,\,tw}(F),\bar\partial_{\rm CR,\,tw})$ with
        \begin{equation}\label{eq:CRantiholomorphicDifferentialTwisted}
            \bar\partial_{\rm CR,\,tw}\ \coloneqq\ \hat v^\rmF\hat V_\rmF+\hat v^\rmL\hat V_\rmL+\hat v^\rmR\hat V_\rmR+\hat v_i\hat V^i+\hat v^i\hat V_i~.
        \end{equation}
        We also have
        \begin{subequations}
            \begin{equation}\label{eq:tangentBundleDistributionsCRSuperAmbitwistorSpaceTwistedHolomorphic}
                \begin{gathered}
                    T^{1,0}_{\rm CR,\,tw}F\ \coloneqq\ {\rm span}\{V_\rmF,V_\rmL,V_\rmR,V_\rmW,V_{\hat\rmW},V^i,V_i\}~,
                    \\
                    V_\rmF\ \coloneqq\ E_\rmF~,~~~
                    V_\rmL\ \coloneqq\ E_\rmL-\tilde\theta^i\tilde\eta_iE_\rmW~,~~~
                    V_\rmR\ \coloneqq\ E_\rmR-\tilde\theta^i\tilde\eta_iE_{\hat\rmW}~,
                    \\
                    V_\rmW\ \coloneqq\ E_\rmW~,~~~
                    V_{\hat\rmW}\ \coloneqq\ E_{\hat\rmW}~,
                    \\
                    V^i\ \coloneqq\ E^i-\theta^iE_\rmF-\tilde\theta^iE_\rmW\ =\ \hat\lambda^{\dot\alpha}\parder{\eta^{\dot\alpha}_i}~,
                    \\
                    V_i\ \coloneqq\ E_i-\eta_iE_\rmF+\tilde\eta_iE_{\hat\rmW}\ =\ \hat\mu^\alpha\parder{\theta^{i\alpha}}~,
                \end{gathered}
            \end{equation}
            and, dually,
            \begin{equation}\label{eq:holomorphicCotangentBundleDistributionCRSuperAmbitwistorSpaceTwisted}
                \begin{gathered}
                    \Omega^{1,0}_{\rm CR,\,tw}(F)\ \coloneqq\ {\rm span}\{v^\rmF,v^\rmL,v^\rmR,v^\rmW,v^{\hat\rmW},v_i,v^i\}~,
                    \\
                    v^\rmF\ \coloneqq\ e^\rmF-\theta^ie_i-\eta_ie^i\ =\ \rmd x^{\alpha\dot\alpha}\mu_\alpha\lambda_{\dot\alpha}~,
                    \\
                    v^\rmL\ \coloneqq\ e^\rmL\ =\ \rmd\lambda_{\dot\alpha}\lambda^{\dot\alpha}~,~~~
                    v^\rmR\ \coloneqq\ e^\rmR\ =\ \rmd\mu_\alpha\mu^\alpha~,
                    \\
                    v^\rmW\ \coloneqq\ e^\rmW-\eta_i\hat e^i-\tilde\theta^ie_i-\theta^i\eta_i\hat e^\rmR+\tilde\theta^i\tilde\eta_ie^\rmL\ =\ -\frac{\rmd x^{\alpha\dot\alpha}\hat\mu_\alpha\lambda_{\dot\alpha}}{|\mu|^2}-\theta^i\eta_i\frac{\rmd\hat\mu_\alpha\hat\mu^\alpha}{|\mu|^4}+\tilde\theta^i\tilde\eta_i\rmd\lambda_{\dot\alpha}\lambda^{\dot\alpha}~,
                    \\
                    v^{\hat\rmW}\ \coloneqq\ e^{\hat\rmW}+\theta^i\hat e_i+\tilde\eta_ie^i-\theta^i\eta_i\hat e^\rmL+\tilde\theta^i\tilde\eta_ie^\rmR\ =\ \frac{\rmd x^{\alpha\dot\alpha}\mu_\alpha\hat\lambda_{\dot\alpha}}{|\lambda|^2}-\theta^i\eta_i\frac{\rmd\hat\lambda_{\dot\alpha}\hat\lambda^{\dot\alpha}}{|\lambda|^4}+\tilde\theta^i\tilde\eta_i\rmd\mu_\alpha\mu^\alpha~, 
                    \\
                    v_i\ \coloneqq\ e_i\ =\ \rmd\eta^{\dot\alpha}_i\lambda_{\dot\alpha}~,~~~
                    v^i\ \coloneqq\ e^i\ =\ \rmd\theta^{i\alpha}\mu_\alpha
                \end{gathered}
            \end{equation}
            so that
            \begin{equation}
                \partial_{\rm CR,\,tw}\ \coloneqq\ v^\rmF V_\rmF+v^\rmL V_\rmL+v^\rmR V_\rmR+v^\rmW V_\rmW+v^{\hat\rmW}V_{\hat\rmW}+v_i V^i+v^iV_i~,
            \end{equation}
        \end{subequations}
        as well as $\rmd=\bar\partial_{\rm CR,\,tw}+\partial_{\rm CR,\,tw}$ for the exterior derivative on $F$; note that also $\partial_{\rm CR,\,tw}^2=0$, which tells us that the distribution $T^{1,0}_{\rm CR,\,tw}F$ is integrable as well.
        
        \paragraph{Quasi-isomorphic CR complex.}
        Importantly, the only non-vanishing commutator amongst the vector fields~\eqref{eq:tangentBundleDistributionsCRSuperAmbitwistorSpaceTwisted} is 
        \begin{equation}\label{eq:superspaceTorsionTwisted}
            [\hat V_\rmL,\hat V_\rmR]\ =\ 2\theta^i\eta_i\hat V_\rmF
        \end{equation}
        which is the same as~\eqref{eq:superspaceTorsionRedefined} and so, the twist in~\eqref{eq:tangentBundleDistributionsCRSuperAmbitwistorSpaceTwisted} mediated by $g$ is a twist by an (outer) automorphism of the CR structure. This implies the following result.
        
        \begin{proposition}\label{prop:quasiIsoTwisted}
            The cochain complexes $(\Omega^{0,\bullet}_{\rm CR,\,tw}(F),\bar\partial_{\rm CR,\,tw})$ and $(\Omega^{0,\bullet}_{\rm CR}(F),\bar\partial_{\rm CR})$ are (quasi-)isomorphic.
        \end{proposition}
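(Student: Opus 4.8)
The plan is to upgrade the assertion to an honest isomorphism of (differential graded) cochain complexes, realised by the automorphism $g$ of~\eqref{eq:twist}. There are three ingredients. First, the change of frame in~\eqref{eq:tangentBundleDistributionsCRSuperAmbitwistorSpaceRedefined} is inert: the primed vector fields span the very same distribution $T^{0,1}_{\rm CR}F$ as the unprimed ones --- the transition matrix is $C^\infty$-invertible, indeed unipotent --- so $\{\hat e'^\rmF,\hat e'^\rmL,\hat e'^\rmR,\hat e'_i,\hat e'^i\}$ is merely a new frame for $T^{0,1\,*}_{\rm CR}F$ and the dg algebra $(\Omega^{0,\bullet}_{\rm CR}(F),\bar\partial_{\rm CR})$ is unchanged, only rewritten in a form adapted to the twist. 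Second, I would check that $g\coloneqq\rme^{\tilde\theta^i\eta_iE_\rmW+\theta^i\tilde\eta_iE_{\hat\rmW}}$ is a bona fide automorphism of $F$. Third, I would invoke the naturality of the tangential CR (de Rham) complex under such automorphisms to conclude.

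For the second ingredient, the exponent $X\coloneqq\tilde\theta^i\eta_iE_\rmW+\theta^i\tilde\eta_iE_{\hat\rmW}$ is a globally defined even vector field on $F$: the would-be poles in $|\lambda|^2$ and $|\mu|^2$ cancel, as one sees from the explicit expressions for $\tilde\eta_i,\tilde\theta^i$ in~\eqref{eq:CRFermionicCoordinates} together with $E_\rmW,E_{\hat\rmW}$ in~\eqref{eq:tangentBundleDistributionsCRAmbitwistorSpace}. Its coefficients $\tilde\theta^i\eta_i$ and $\theta^i\tilde\eta_i$ are quadratic in the finitely many Grassmann generators $\eta^{\dot\alpha}_i,\theta^{i\alpha}$, hence nilpotent, so the exponential series terminates and $g=\mathrm{id}+(\text{nilpotent})$ is invertible. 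Being the exponential of a vector field, $g$ is an automorphism of $F$; in particular the pullback along $g$ is an automorphism of the de Rham dg algebra $(\Omega^\bullet(F),\rmd)$. Finally, since $X$ moves only $x^{\alpha\dot\alpha}$ with coefficients polynomial in the CR holomorphic fermionic combinations, $g$ preserves the subclass of functions and forms that depend on the fermions only via $\eta^{\dot\alpha}_i,\theta^{i\alpha}$ --- so everything restricts consistently. (Equivalently, after the coordinate change of footnote~\ref{fn:changeOfFermionicCoordinates}, $g$ is just a change of coordinates on $F$.)

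For the third ingredient, the very definitions in~\eqref{eq:tangentBundleDistributionsCRSuperAmbitwistorSpaceTwisted} say that $g$ conjugates the frame $\hat E'_a$ of $T^{0,1}_{\rm CR}F$ into the frame $\hat V_a=g\hat E'_ag^{-1}$ of $T^{0,1}_{\rm CR,\,tw}F$, so these two integrable distributions are related by the automorphism $g$, and dually $g$ relates $\{\hat e'^a\}$ to $\{\hat v^a\}$. Because $\bar\partial_{\rm CR}$ and $\bar\partial_{\rm CR,\,tw}$ are built intrinsically from their respective distributions by the Cartan formula (this being the content of~\eqref{eq:CRantiholomorphicDifferentialTwisted} and its unprimed analogue), the pullback along $g$ intertwines them. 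Hence $g$ induces an isomorphism of differential graded algebras $(\Omega^{0,\bullet}_{\rm CR,\,tw}(F),\bar\partial_{\rm CR,\,tw})\cong(\Omega^{0,\bullet}_{\rm CR}(F),\bar\partial_{\rm CR})$, a fortiori the claimed (quasi-)isomorphism of cochain complexes.

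The only point requiring real care is the second ingredient: the supergeometric bookkeeping that makes $g$ and $g^{-1}$ honest automorphisms of $F$ preserving holomorphic dependence on the fermionic coordinates. Once that is established, the conclusion is a formal consequence of the naturality of the tangential CR complex, and one obtains the sharper statement --- an isomorphism, not merely a quasi-isomorphism.
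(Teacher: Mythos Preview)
Your proof is correct and takes essentially the same approach as the paper: both identify the automorphism $g=\rme^X$ as the map intertwining the two complexes, using that $g$ conjugates the frame $\hat E'_A$ into $\hat V_A$ while leaving the structure functions $C_{AB}{}^C$ unchanged. The only difference is packaging: you invoke naturality of the tangential CR complex under the diffeomorphism $g$ to conclude in one stroke (and hence obtain the dg-algebra isomorphism directly), whereas the paper writes out the cocycle and coboundary conditions in components and checks that $\alpha_A\mapsto g^{-1}\alpha_A$ intertwines them degree by degree, the key step being $gC_{AB}{}^C=C_{AB}{}^Cg$ --- which holds because $C_{AB}{}^C=2\theta^i\eta_i$ is $x$-independent while $X$ differentiates only along $x$.
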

        
        \begin{proof}
            To describe $(\Omega^{0,\bullet}_{\rm CR}(F),\bar\partial_{\rm CR})$, we work in the basis~\eqref{eq:CRStructureSuperAmbitwistorSpaceRedefined}. In particular, let us denote the basis~\eqref{eq:tangentBundleDistributionsCRSuperAmbitwistorSpaceRedefined} collectively as $\hat E'_A$ with $A,B,\ldots$ multi-indices. The relation between this basis and the basis~\eqref{eq:tangentBundleDistributionsCRSuperAmbitwistorSpaceTwisted} is then $\hat V_A=g\hat E'_Ag^{-1}$, and by virtue of~\eqref{eq:superspaceTorsionRedefined} and~\eqref{eq:superspaceTorsionTwisted}, we also obtain
            \begin{equation}
                [\hat V_A,\hat V_B]\ =\ C_{AB}{}^C\hat V_C
                \eand
                [\hat E'_A,\hat E'_B]\ =\ C_{AB}{}^C\hat E'_C
            \end{equation}
            with $gC_{AB}{}^C=C_{AB}{}^Cg$ $\Leftrightarrow$ $g^{-1}C_{AB}{}^C=C_{AB}{}^Cg^{-1}$. For $f$ a function, this implies
            \begin{equation}
                \hat V_Af\ =\ 0
                \quad\Leftrightarrow\quad
                \hat E'_A(g^{-1}f)\ =\ 0
            \end{equation}
            and so, the zeroth cohomology groups of $(\Omega^{0,\bullet}_{\rm CR,\,tw}(F),\bar\partial_{\rm CR,\,tw})$ and $(\Omega^{0,\bullet}_{\rm CR}(F),\bar\partial_{\rm CR})$ are isomorphic by means of $f\mapsto g^{-1}f$. Likewise, for $\alpha\in\Omega^{0,1}_{\rm CR,\,tw}(F)$, we obtain
            \begin{subequations}
                \begin{equation}
                    \hat V_A\alpha_B-(-1)^{|A||B|}\hat V_B\alpha_A-C_{AB}{}^C\alpha_C\ =\ 0
                \end{equation}
                for the $\bar\partial_{\rm CR,\,tw}$-closure condition and since $g^{-1}C_{AB}{}^C=C_{AB}{}^Cg^{-1}$, this is equivalent to
                \begin{equation}
                    \hat E'_A(g^{-1}\alpha_B)-(-1)^{|A||B|}\hat E'_B(g^{-1}\alpha_A)-C_{AB}{}^C(g^{-1}\alpha_C)\ =\ 0~.
                \end{equation}
                Since also 
                \begin{equation}
                    \alpha_A\ \mapsto\ \alpha_A+\hat V_Af
                    \quad\Leftrightarrow\quad
                    g^{-1}\alpha_A\ \mapsto\ g^{-1}\alpha_A+\hat E'_A(g^{-1}f)~,
                \end{equation}
            \end{subequations}
            we thus conclude that the first cohomology groups of $(\Omega^{0,\bullet}_{\rm CR,\,tw}(F),\bar\partial_{\rm CR,\,tw})$ and $(\Omega^{0,\bullet}_{\rm CR}(F),\bar\partial_{\rm CR})$ are isomorphic by means of $(\alpha_A,f)\mapsto(g^{-1}\alpha_A,g^{-1}f)$. The general case now follows straightforwardly.
        \end{proof}
        
        \paragraph{Quasi-isomorphic differential graded Lie algebras.} 
        \cref{prop:quasiIsoTwisted} shows that the asymptotically free fields\footnote{i.e.~physically, the labels in would-be scattering amplitudes and mathematically, the cohomologies of the CR complexes.} of the CR holomorphic Chern--Simons equations for the differential graded Lie algebras $\frL_{\rm CR}$ and $\frL_{\rm CR,\,tw}$ defined in~\eqref{eq:SYMDGLA} and~\eqref{eq:SYMDGLATwisted} are isomorphic. For a full equivalence, it remains to show the following.
        
        \begin{proposition}\label{prop:fullQuasiIso}
            The CR holomorphic Chern--Simons equations of motion\footnote{physically: in their Batalin--Vilkovisky formulation including ghosts and anti-fields; see the discussion following~\eqref{eq:SYMDGLA}.}  defined by the differential graded Lie algebras $\frL_{\rm CR}$ and $\frL_{\rm CR,\,tw}$ from~\eqref{eq:SYMDGLA} and~\eqref{eq:SYMDGLATwisted} are equivalent, that is, $\frL_{\rm CR}$ and $\frL_{\rm CR,\,tw}$ are (quasi-)isomorphic as differential graded Lie algebras.
        \end{proposition}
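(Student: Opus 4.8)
The plan is to promote the cochain isomorphism of \cref{prop:quasiIsoTwisted} to an isomorphism of differential graded Lie algebras; the claimed quasi-isomorphism is then a special case. Concretely, I would work with the map $\Psi\colon\frL_{\rm CR,\,tw}\to\frL_{\rm CR}$ already present in the proof of \cref{prop:quasiIsoTwisted}: expanding an element of $\Omega^{0,p}_{\rm CR,\,tw}(F)\otimes\frg$ in the coframe~\eqref{eq:cotangentBundleDistributionCRSuperAmbitwistorSpaceTwisted} with $\frg$-valued components $\alpha_{A_1\cdots A_p}$, the map $\Psi$ re-expresses it in the coframe of the change of basis~\eqref{eq:CRStructureSuperAmbitwistorSpaceRedefined}, replaces each component by $g^{-1}\alpha_{A_1\cdots A_p}$, and acts as the identity on the $\frg$-factor. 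Since~\eqref{eq:CRStructureSuperAmbitwistorSpaceRedefined} merely re-spans $T^{0,1}_{\rm CR}F$, the target really is $\frL_{\rm CR}$, and \cref{prop:quasiIsoTwisted} already tells us $\Psi$ is an isomorphism of cochain complexes. It therefore remains only to check that $\Psi$ intertwines the brackets.

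Because on both sides the bracket is the wedge product on forms tensored with the Lie bracket on $\frg$, and because $\Psi$ is the identity on $\frg$, it suffices to show that $\Psi$ is an algebra homomorphism for the wedge product. The combinatorial part of $\Psi$---the parity-preserving relabelling of the twisted coframe by the coframe of~\eqref{eq:CRStructureSuperAmbitwistorSpaceRedefined}---is manifestly multiplicative and matches all Koszul signs type by type, so everything reduces to the single statement that the coefficient map $f\mapsto g^{-1}f$ on the function algebra of $F$ is an algebra homomorphism. This is where the argument carries content, but the point is soft: by~\eqref{eq:twist}, $g=\rme^{X}$ with $X=\tilde\theta^i\eta_iE_\rmW+\theta^i\tilde\eta_iE_{\hat\rmW}$ a vector field whose coefficients are nilpotent, being polynomials in the finitely many Grassmann-odd coordinates $\eta^{\dot\alpha}_i$ and $\theta^{i\alpha}$. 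Hence $X$, viewed as the derivation $f\mapsto Xf$, is nilpotent; so $g^{-1}=\rme^{-X}$ is a finite sum, and the exponential of a derivation is an algebra automorphism. Tensoring back with $\frg$, this makes $\Psi$ an isomorphism of differential graded Lie algebras.

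Equivalently and more structurally, one may phrase the whole argument invariantly: $g=\rme^{\mathcal{L}_X}$ is the automorphism of the tensor algebra of $F$ generated by the Lie derivative along $X$, so it is at once an algebra automorphism of $\big(\Omega^\bullet(F),\wedge\big)$ commuting with $\rmd$, the conjugation $Y\mapsto gYg^{-1}$ of~\eqref{eq:tangentBundleDistributionsCRSuperAmbitwistorSpaceTwisted}, and the map sending the distributions $T^{0,1}_{\rm CR}F$ and their $T_\IC F$-complements to their twisted counterparts; it therefore carries $\big(\Omega^{0,\bullet}_{\rm CR}(F),\bar\partial_{\rm CR}\big)$ isomorphically onto $\big(\Omega^{0,\bullet}_{\rm CR,\,tw}(F),\bar\partial_{\rm CR,\,tw}\big)$---which recovers \cref{prop:quasiIsoTwisted} without coordinates---while being a homomorphism for $\wedge$ and hence, after tensoring with $\frg$, for the Lie bracket. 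I do not expect a genuine obstacle. The one thing that needs care is precisely this multiplicativity: one must use that one is conjugating by an exponentiated vector field---a derivation---rather than by an arbitrary invertible operator, the Grassmann-nilpotency of the coefficients of $X$ guaranteeing that all the exponentials in sight truncate to finite sums.
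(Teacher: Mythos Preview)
Your proposal is correct and follows essentially the same approach as the paper: both reduce the question to showing that the map $f\mapsto g^{-1}f$ mediating \cref{prop:quasiIsoTwisted} is multiplicative, and both deduce this from the fact that $g=\rme^X$ with $X$ a vector field (hence a derivation). The only cosmetic difference is that the paper verifies the identity $g[g^{-1}\alpha,g^{-1}\beta]=[\alpha,\beta]$ via the ODE $\dot c(t)=-Xc(t)$ for $c(t)\coloneqq[\rme^{-tX}\alpha,\rme^{-tX}\beta]$, whereas you invoke directly that the exponential of a derivation is an algebra automorphism and note that Gra{\ss}mann-nilpotency of the coefficients of $X$ makes this exponential a finite sum.
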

        
        \begin{proof}
            The quasi-isomorphism in \cref{prop:quasiIsoTwisted} is mediated by $g$ as defined in~\eqref{eq:twist}, and we now need to show that the wedge products are mapped consistently into each other under this quasi-isomorphism. This, however, is a direct consequence of the identity
            \begin{equation}
                g[g^{-1}\alpha_{A_1A_2\cdots},g^{-1}\beta_{B_1B_2\cdots}]\ =\ [\alpha_{A_1A_2\cdots},\beta_{B_1B_2\cdots}]
            \end{equation}
            for any $\alpha,\beta\in\Omega^{0,\bullet}_{\rm CR,\,tw}(F)\otimes\frg$ where we have used the notation of the proof of \cref{prop:quasiIsoTwisted}. Note that this identity follows straightforwardly upon setting $c(t)\coloneqq[\rme^{-tX}\alpha_{A_1A_2\cdots},\rme^{-tX}\beta_{B_1B_2\cdots}]$ for all $t\in\IR$ and with $X\coloneqq\tilde\theta^i\eta_iE_\rmW+\theta^i\tilde\eta_iE_{\hat\rmW}$ the vector field in the definition~\eqref{eq:twist} of $g$. Then, we obtain the differential equation $\dot c(t)=-Xc(t)$ which has the general solution $c(t)=\rme^{-tX}c(0)$. The identity now follows for $t=1$.
        \end{proof}
        
        \noindent
        Recall the notion of $\IR^4$-triviality from around~\eqref{eq:ambitwistorGauge}.
        
        \begin{corollary}
            For an $\IR^4$-trivial complex vector bundle over the augmented CR ambitwistor space $F$, the \uline{twisted CR holomorphic Chern--Simons equation}
            \begin{equation}\label{eq:CRCSEoMTwisted}
                \bar\partial_{\rm CR,\,tw}A+\tfrac12[A,A]\ =\ 0
            \end{equation}
            for $A\in\Omega^{0,1}_{\rm CR,\,tw}(F)\otimes\frg$ is equivalent to the equations of motion of $\caN=3$ supersymmetric Yang--Mills theory on Euclidean space $\IR^4$.
        \end{corollary}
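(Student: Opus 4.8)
The plan is to obtain the corollary by composing two equivalences that are already in hand: the isomorphism of differential graded Lie algebras $\frL_{\rm CR,\,tw}\cong\frL_{\rm CR}$ of \cref{prop:fullQuasiIso}, and the equivalence --- established above under the $\IR^4$-triviality assumption --- between the untwisted equation \eqref{eq:CRCSEoM} together with its gauge symmetries and the $\caN=3$ superspace constraint system, hence the component field equations of motion of $\caN=3$ supersymmetric Yang--Mills theory on $\IR^4$.

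First, by \cref{prop:fullQuasiIso} the element $g$ of \eqref{eq:twist} induces an isomorphism $\frL_{\rm CR,\,tw}\cong\frL_{\rm CR}$ of differential graded Lie algebras. Since its underlying linear map is the component-wise action of $g^{-1}$, which is invertible, this isomorphism is \emph{strict}, and hence restricts to a bijection between the solution sets of the associated Maurer--Cartan equations --- here precisely \eqref{eq:CRCSEoMTwisted} and \eqref{eq:CRCSEoM} --- intertwining their gauge (Maurer--Cartan-equivalence) actions. Consequently \eqref{eq:CRCSEoMTwisted}, together with its gauge symmetries, is equivalent to \eqref{eq:CRCSEoM} with its gauge symmetries, the correspondence sending a solution $A$ to the solution obtained by applying $g^{-1}$ to each of its components.

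Next, I would observe that the $\IR^4$-triviality hypothesis is insensitive to the twist: one checks that $\hat V_\rmL\intprod A=\hat E_\rmL\intprod A$ and $\hat V_\rmR\intprod A=\hat E_\rmR\intprod A$ for every $A\in\Omega^{0,1}_{\rm CR,\,tw}(F)\otimes\frg$ --- the correction terms relating $\hat V_\rmL$ and $\hat V_\rmR$ to $\hat E_\rmL$ and $\hat E_\rmR$ in \eqref{eq:tangentBundleDistributionsCRSuperAmbitwistorSpaceTwisted} contract trivially against the coframe \eqref{eq:cotangentBundleDistributionCRSuperAmbitwistorSpaceTwisted}, the only nontrivial check reducing to the vanishing of $\hat\mu^\alpha\hat\mu_\alpha$ and $\hat\lambda^{\dot\alpha}\hat\lambda_{\dot\alpha}$ --- so that the twistor gauge \eqref{eq:ambitwistorGauge} takes the identical form in the twisted and the untwisted descriptions and is preserved by the component-wise action of $g^{-1}$. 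Thus an $\IR^4$-trivial bundle over $F$ provides exactly the data to which the untwisted discussion applies, and concatenating the two equivalences gives the claim: \eqref{eq:CRCSEoMTwisted} is equivalent to \eqref{eq:CRCSEoM}, which under $\IR^4$-triviality is equivalent to the $\caN=3$ superspace constraints, hence to the equations of motion of $\caN=3$ supersymmetric Yang--Mills theory on $\IR^4$, gauge symmetries included.

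The one ingredient imported rather than proved here --- and so the point deserving the most care --- is the classical reconstruction of the component field equations of motion and gauge symmetries of $\caN=3$ supersymmetric Yang--Mills theory from the superspace constraint system, which we take from \cite{Witten:1978xx,Isenberg:1978kk,Khenkin:1980ff,Harnad:1984vk,Harnad:1985bc,Buchdahl:1985aa,Eastwood:1987aa,Manin:1988ds}; one should confirm that it is compatible both with the $\IR^4$-triviality used to pass from \eqref{eq:CRCSEoM} to the constraints and with the Euclidean reality conditions \eqref{eq:EuclideanReality} on the fermionic coordinates, so that no solutions are lost or gained along the chain. Everything else is formal once \cref{prop:fullQuasiIso} is in place, which is precisely why the statement is phrased as a corollary.
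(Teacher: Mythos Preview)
Your proposal is correct and follows essentially the same approach as the paper, which simply invokes \cref{prop:fullQuasiIso} together with the discussion around~\eqref{eq:CRCSEoM}. Your additional verification that the $\IR^4$-triviality condition is unchanged under the twist (via $\hat\mu^\alpha\hat\mu_\alpha=\hat\lambda^{\dot\alpha}\hat\lambda_{\dot\alpha}=0$) is a useful elaboration the paper leaves implicit, but the logical skeleton is identical.
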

        
        \begin{proof}
            This is a direct consequence of \cref{prop:fullQuasiIso} and our discussion around~\eqref{eq:CRCSEoM}.
        \end{proof}
        
        \paragraph{Witten gauge.}
        The graded algebra $\Omega^{0,\bullet}_{\rm CR,\,tw}(F)$ is generated by the differential forms~\eqref{eq:cotangentBundleDistributionCRSuperAmbitwistorSpaceTwisted}, and the coefficient functions generically depend on all the fermionic coordinates $\eta^{\dot\alpha}_i$ and $\theta^{i\alpha}$ or, equivalently, on $(\eta_i,\tilde\eta_i)$ and $(\theta^i,\tilde\theta^i)$ via~\eqref{eq:CRFermionicCoordinates}. We now let $\Omega^{0,\bullet}_{\rm CR,\,tw,\,red}(F)\subseteq\Omega^{0,\bullet}_{\rm CR,\,tw}(F)$ be the graded subalgebra of $\Omega^{0,\bullet}_{\rm CR,\,tw}(F)$ that is generated by only the bosonic differential forms from~\eqref{eq:cotangentBundleDistributionCRSuperAmbitwistorSpaceTwisted} and with coefficient functions which have a dependence on the fermionic coordinates only through the CR holomorphic combinations $\eta_i$ and $\theta^i$. This graded algebra is augmented to a differential graded algebra via the differential 
        \begin{equation}\label{eq:CRantiholomorphicDifferentialTwistedReduced}
            \bar\partial_{\rm CR,\,tw,\,red}\ \coloneqq\ \hat v^\rmF\hat V_\rmF+\hat v^\rmL\hat V_\rmL+\hat v^\rmR\hat V_\rmR~,
        \end{equation}
        and we arrive at a third differential graded Lie algebra
        \begin{equation}\label{eq:SYMDGLATwistedReduced}
            \frL_{\rm CR,\,tw,\,red}\ \coloneqq\ \big(\Omega^{0,\bullet}_{\rm CR,\,tw,\,red}(F)\otimes\frg,\bar\partial_{\rm CR,\,tw,\,red},[-,-]\big)\,.
        \end{equation} 
        
        \begin{proposition}\label{prop:WittenGauge}
            The CR holomorphic Chern--Simons equations of motion defined by the differential graded Lie algebras $\frL_{\rm CR,\,tw}$ and $\frL_{\rm CR,\,tw,\,red}$ from~\eqref{eq:SYMDGLATwisted} and~\eqref{eq:SYMDGLATwistedReduced} (in their Batalin--Vilkovisky formulation including ghosts and anti-fields)\footnote{See discussion following~\eqref{eq:SYMDGLA}.} are equivalent, that is, $\frL_{\rm CR,\,tw}$ and $\frL_{\rm CR,\,tw,\,red}$ are quasi-isomorphic.
        \end{proposition}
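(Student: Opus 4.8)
The plan is to exhibit $\frL_{\rm CR,\,tw,\,red}$ as a sub-differential graded Lie algebra of $\frL_{\rm CR,\,tw}$ via the tautological inclusion $\iota$, and to prove that $\iota$ is a quasi-isomorphism of the underlying cochain complexes. Since $\Omega^{0,\bullet}_{\rm CR,\,tw,\,red}(F)$ is visibly closed under the wedge product and the Lie bracket, $\iota$ is then a strict morphism of differential graded Lie algebras which is a quasi-isomorphism, hence a quasi-isomorphism of differential graded Lie algebras.

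The first step is to verify that $\Omega^{0,\bullet}_{\rm CR,\,tw,\,red}(F)$ really is a subcomplex and that $\bar\partial_{\rm CR,\,tw}$ restricts on it to $\bar\partial_{\rm CR,\,tw,\,red}$. This is exactly where the twisting by $g$ from \cref{prop:quasiIsoTwisted} is used: in the adapted fermionic coordinates (see the footnote to~\eqref{eq:tangentBundleDistributionsCRSuperAmbitwistorSpaceTwisted}) one has $\hat V^i=\parder{\tilde\eta_i}$ and $\hat V_i=\parder{\tilde\theta^i}$, while $\hat V_\rmL$ and $\hat V_\rmR$ act on $\tilde\eta_i,\tilde\theta^i$ only through the terms $-\eta_i\parder{\tilde\eta_i}$ and $-\theta^i\parder{\tilde\theta^i}$, respectively. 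Hence the terms $\hat v_i\hat V^i+\hat v^i\hat V_i$ in~\eqref{eq:CRantiholomorphicDifferentialTwisted} annihilate every coefficient function depending on the fermions only through $\eta_i$ and $\theta^i$, while the remaining terms $\hat v^\rmF\hat V_\rmF+\hat v^\rmL\hat V_\rmL+\hat v^\rmR\hat V_\rmR$ — that is, $\bar\partial_{\rm CR,\,tw,\,red}$ of~\eqref{eq:CRantiholomorphicDifferentialTwistedReduced} — preserve $\Omega^{0,\bullet}_{\rm CR,\,tw,\,red}(F)$, using that $\hat V_\rmF$, $\hat V_\rmL|_{\mathrm{red}}$ and $\hat V_\rmR|_{\mathrm{red}}$ introduce no dependence on $\tilde\eta_i,\tilde\theta^i$ and that the only non-trivial structure relation~\eqref{eq:superspaceTorsionTwisted} produces no $\hat v_i,\hat v^i$ legs.

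The quasi-isomorphism I would then obtain from homological perturbation theory. Split $\bar\partial_{\rm CR,\,tw}=\delta_0+D$ with Koszul part $\delta_0\coloneqq\hat v_i\hat V^i+\hat v^i\hat V_i$ and $D\coloneqq\hat v^\rmF\hat V_\rmF+\hat v^\rmL\hat V_\rmL+\hat v^\rmR\hat V_\rmR$; then $(\delta_0+D)^2=\bar\partial_{\rm CR,\,tw}^2=0$, and grading by the number $\mathfrak n$ of $\hat v_i,\hat v^i$ legs shows that $\delta_0$ raises $\mathfrak n$ by one while $D$ preserves it (so also $\delta_0^2=D^2=0$ and $\{\delta_0,D\}=0$). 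The odd coordinates $\tilde\eta_i,\tilde\theta^i$ together with the even one-forms $\hat v_i,\hat v^i$ form a Koszul-type complex for $\delta_0$: the operator $h_0\colon\omega\mapsto\tilde\eta_i\,(\hat V^i\intprod\omega)+\tilde\theta^i\,(\hat V_i\intprod\omega)$ obeys $\delta_0h_0+h_0\delta_0=N_0$, where $N_0$ counts the $\hat v_i,\hat v^i$ legs plus the polynomial degree in $\tilde\eta_i,\tilde\theta^i$, and $N_0=0$ precisely on $\Omega^{0,\bullet}_{\rm CR,\,tw,\,red}(F)$. With $p$ the projection onto $\ker N_0$ and $h\coloneqq h_0N_0^{-1}(1-p)$ one gets a deformation retract of $(\Omega^{0,\bullet}_{\rm CR,\,tw}(F),\delta_0)$ onto $(\Omega^{0,\bullet}_{\rm CR,\,tw,\,red}(F),0)$. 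Perturbing by $D$, the perturbation series converges term-by-term because $hD$ lowers $\mathfrak n$ whenever it is non-zero, hence is locally nilpotent; and since $D$ preserves $\Omega^{0,\bullet}_{\rm CR,\,tw,\,red}(F)$ whereas $h$ annihilates it, every correction involving an internal $h$ drops out, so the transferred inclusion is $\iota$ itself and the transferred differential is $pD\iota=\bar\partial_{\rm CR,\,tw,\,red}$. Therefore $\iota$ is a quasi-isomorphism, and by the first paragraph a quasi-isomorphism of differential graded Lie algebras.

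The main obstacle I expect is the middle step: one must pin down that the CR anti-holomorphic fermionic data $\tilde\eta_i,\tilde\theta^i,\hat v_i,\hat v^i$ genuinely assemble into an acyclic Koszul-type subcomplex whose elimination reproduces $\bar\partial_{\rm CR,\,tw,\,red}$ with no higher corrections; this rests entirely on the special shape of the twisted vector fields~\eqref{eq:tangentBundleDistributionsCRSuperAmbitwistorSpaceTwisted} and on the bookkeeping that kills the perturbation-lemma corrections. An essentially equivalent alternative would be to prove directly that the quotient complex $\Omega^{0,\bullet}_{\rm CR,\,tw}(F)/\Omega^{0,\bullet}_{\rm CR,\,tw,\,red}(F)$ is acyclic, filtering by the number of $\hat v_i,\hat v^i$ legs and identifying the first page via the Koszul homotopy above, after which the long exact cohomology sequence yields that $\iota$ is a quasi-isomorphism.
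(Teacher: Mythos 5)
Your proposal is correct, and it runs on the same underlying mechanism as the paper's proof -- contracting the CR anti-holomorphic fermionic directions $\tilde\eta_i,\tilde\theta^i$ by a homotopy and invoking the homological perturbation lemma, with the key input in both cases being that after the twist $\hat V^i,\hat V_i$ are plain $\tilde\eta_i,\tilde\theta^i$-derivatives and that reduced forms are closed under the wedge product and bracket. The organisation differs, though, in two ways that are worth noting. First, where you build the explicit Koszul homotopy $h_0$ with $\{\delta_0,h_0\}=N_0$ and then perturb by $D$, the paper instead proves a general statement for split complex supermanifolds (\cref{prop:quasiIsoSplitSupermanifolds}) by an iterative coboundary-removal argument (needed there because of connection and curvature terms which are absent in your flat fermionic setting), observes that the twisted CR structure makes $F$ effectively a split CR supermanifold, and then obtains the special deformation retract via \cref{exa:SDR} and \cref{prop:contractingHomotopyAlwaysExists}. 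Second, where you exploit that the inclusion $\iota$ is a strict morphism of differential graded Lie algebras (so a cochain-level quasi-isomorphism suffices), the paper instead performs the homotopy transfer of the $L_\infty$-structure along the retract and checks that the transferred binary product is $[-,-]$ and that all higher products vanish because $\sfh\circ\mu_2(\sfe(-),\sfe(-))=0$. Your route is the more economical one for the proposition as stated; the paper's route has the side benefit of producing the special deformation retract and the vanishing-higher-products computation in the form that is reused for \cref{thm:homotopy-transfer}. The only loose ends in your write-up are routine: the parity/sign bookkeeping in $\{\delta_0,h_0\}=N_0$, and the explicit check that the leg action $\bar\partial_{\rm CR,\,tw}\hat v^\rmF\propto\theta^i\eta_i\,\hat v^\rmL\wedge\hat v^\rmR$ coming from~\eqref{eq:superspaceTorsionTwisted} sits in $D$ and is reproduced by $\bar\partial_{\rm CR,\,tw,\,red}$ on the reduced subalgebra -- both of which you flag and which go through.
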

        
        \begin{proof}
            The proof requires some preliminary considerations and is postponed to \cref{app:homologicalPerturbations}. 
        \end{proof}
        
        From a physical perspective, and for physical on-shell fields, the result is not surprising: it merely means that we can impose the \uline{Witten gauge}
        \begin{equation}\label{eq:WittenGauge}
            \hat V^i\intprod A\ =\ 0\ =\ \hat V_i\intprod A
        \end{equation}
        with $\hat V^i$ and $\hat V_i$ the (commuting) fermionic vector fields. In this gauge, the remaining components of $A$ will depend on the fermionic coordinates only CR holomorphically via $\eta_i$ and $\theta^i$. This gauge is very familiar from holomorphic Chern--Simons theory on Penrose's twistor space~\cite{Witten:2003nn}.
        
        As we shall see next, the Witten gauge makes the twisted CR holomorphic equation of motion~\eqref{eq:CRCSEoMTwisted} variational provided, of course, the Lie algebra $\frg$ is a metric Lie algebra.
        
        \section{CR ambitwistor action and its space-time interpretation}
        
        In the following, we shall assume that the Lie algebra $\frg$ also comes with an (invariant) metric $\inner{-}{-}$.
        
        \subsection{CR ambitwistor action}
        
        \paragraph{CR holomorphic volume form.} 
        To write down a Chern--Simons-type action, we need to construct an appropriate volume form. To this end, we consider the twisted CR holomorphic differential form 
        \begin{equation}
            \omega_{\rm CR,\,tw}\ \coloneqq\ v^\rmF\wedge v^\rmW\wedge v^{\hat\rmW}\wedge v^\rmL\wedge v^\rmR\wedge\rmd\eta_1\wedge\rmd\eta_2\wedge\rmd\eta_3\wedge\rmd\theta^1\wedge\rmd\theta^2\wedge\rmd\theta^3~.
        \end{equation}
        Note that we may replace $\rmd\eta_i$ by $v_i$ and $\rmd\theta^i$ by $v^i$ because of the appearance of $v^\rmL$ and $v^\rmR$ and as follows from~\eqref{eq:CRFermionicCoordinatesTransformations}.\footnote{See also \cref{fn:changeOfFermionicCoordinates} on page \cpageref{fn:changeOfFermionicCoordinates}.} For the same reason, the terms proportional to $\tilde\theta^i\tilde\eta_i$ in $v^\rmW$ and $v^{\hat\rmW}$ drop out. Hence, $\omega_{\rm CR,\,tw}$ depends on the fermionic coordinates only via the CR holomorphic combinations $\eta_i$ and $\theta^i$. This allows use now to transition to integral forms by requiring the Berezin integrations\footnote{By a slight abuse of notation, we shall use the same symbol to denote the integral form corresponding to a differential form. }
        \begin{equation}\label{eq:BerezinIntegration}
            \int\rmd\eta_i\eta_j\ \coloneqq\ \delta_{ij}
            \eand
            \int\rmd\theta^i\theta^j\ \coloneqq\ \delta^{ij}
        \end{equation}
        for the CR holomorphic coordinates. Consequently, we arrive at the integral form
        \begin{equation}\label{eq:twistedCRHolomorphicVolumeForm}
            \omega_{\rm CR,\,tw}\ \mapsto\ \Omega_{\rm CR,\,tw}\ \coloneqq\ v^\rmF\wedge v^\rmW\wedge v^{\hat\rmW}\wedge v^\rmL\wedge v^\rmR\otimes\underbrace{\rmd\eta_1\rmd\eta_2\rmd\eta_3}_{\eqqcolon\,\rmd^3\eta}\underbrace{\rmd\theta^1\rmd\theta^2\rmd\theta^3}_{\eqqcolon\,\rmd^3\theta}
        \end{equation}
        which we call the \uline{twisted CR holomorphic volume form}. It is now not too difficult to see that because of~\eqref{eq:BerezinIntegration}, $\Omega_{\rm CR,\,tw}$ is of homogeneity zero and thus, globally well defined on $F$. It is also $\bar\partial_{\rm CR,\,tw}$-closed which is a direct consequence of the commutation relations amongst the vector fields~\eqref{eq:tangentBundleDistributionsCRSuperAmbitwistorSpaceTwisted} and~\eqref{eq:tangentBundleDistributionsCRSuperAmbitwistorSpaceTwistedHolomorphic}.
        
        \paragraph{CR holomorphic Chern--Simons form and action.} 
        Furthermore, the terms proportional to $\theta^i\eta_i$ in $v^\rmW$ and $v^{\hat\rmW}$ in~\eqref{eq:twistedCRHolomorphicVolumeForm} also drop out in the wedge product of~\eqref{eq:twistedCRHolomorphicVolumeForm} with the \uline{twisted CR holomorphic Chern--Simons form}
        \begin{subequations}
            \begin{equation}\label{eq:twistedCRCSForm}
                {\rm CS}_{\rm CR,\,tw}\ \coloneqq\ \tfrac12\inner{A}{\bar\partial_{\rm CR,\,tw}A}+\tfrac{1}{3!}\inner{A}{[A,A]}~,
            \end{equation}
            where now $A\in\Omega^{0,1}_{\rm CR,\,tw,\,red}(F)\otimes\frg$ is taken to be in the Witten gauge~\eqref{eq:WittenGauge} the latter of which we can always assume by virtue of \cref{prop:WittenGauge}. Therefore, in the Witten gauge, the twisted CR holomorphic Chern--Simons equation~\eqref{eq:CRCSEoMTwisted} follows upon varying the \uline{twisted CR holomorphic Chern--Simons action}\footnote{We note that this action is similar in spirit of the Chern--Simons actions discussed e.g.~in~\cite{Nair:1990aa,Nair:1991ab,Popov:2005uv,Costello:2013zra}.}
            \begin{equation}\label{eq:twistedCRCSAction}
                S\ \coloneqq\ \int\Omega_{\rm CR,\,tw}\wedge{\rm CS}_{\rm CR,\,tw}~.
            \end{equation}
        \end{subequations}
        
        \begin{remark}\label{rmk:CRAmbitwistorSpace}
            Note that the action~\eqref{eq:twistedCRCSAction} can be understood to live on the real $8|12$-dimensional submanifold $L\rightarrow\IR^4\times\IC P^1\times\IC P^1$ inside the augmented CR ambitwistor space $F$ with $L$ the pull-back of the fermionic holomorphic vector bundle $[\caO(1,0)\oplus\caO(0,1)]\otimes\IC^{0|3}\rightarrow\IC P^1\times\IC P^1$ to the body $\IR^4\times\IC P^1\times\IC P^1\rightarrow\IC P^1\times\IC P^1$ of $F$. We shall call $L$ the \uline{CR ambitwistor space}. 
            
            We have the commutative triangle of fibrations
            \begin{equation}
                \begin{tikzcd}
                    & F\ar[dl,"\pi_2",swap]\ar[dr,"\pi_1"] &
                    \\
                    L\ar[rr,"\pi_3"] & & \IR^4
                \end{tikzcd}
            \end{equation}
            where $\pi_1$ is the trivial projection $\IR^4\times\IC P^1\times\IC P^1\rightarrow\IR^4$, $\pi_2:(x^{\alpha\dot\alpha},\eta^{\dot\alpha}_i,\theta^{i\alpha},\lambda_{\dot\alpha},\mu_\alpha)\mapsto(x^{\alpha\dot\alpha},\eta_i,\theta^i,\lambda_{\dot\alpha},\mu_\alpha)$ with the tangent spaces of the fibres of $\pi_2$ spanned by the (commuting) fermionic vector fields $\hat V^i$ and $\hat V_i$, and $\pi_3$ is the concatenation of the bundle projection $L\rightarrow\IR^4\times\IC P^1\times\IC P^1$ and the trivial projection $\IR^4\times\IC P^1\times\IC P^1\rightarrow\IR^4$. 
            
            Consequently, the twisted CR holomorphic volume form in~\eqref{eq:twistedCRCSAction} can be taken to be
            \begin{subequations}\label{eq:reducedCRExpressions}
                \begin{equation}
                    \Omega_{\rm CR,\,tw}\ \rightarrow\ \Omega_{\rm CR,\,tw,\,red}\ \coloneqq\ e^\rmF\wedge e^\rmW\wedge e^{\hat\rmW}\wedge e^\rmL\wedge e^\rmR\otimes\underbrace{\rmd\eta_1\rmd\eta_2\rmd\eta_3}_{\eqqcolon\,\rmd^3\eta}\underbrace{\rmd\theta^1\rmd\theta^2\rmd\theta^3}_{\eqqcolon\,\rmd^3\theta}~,
                \end{equation}
                where now $e^\rmF$, $e^\rmW$, $e^{\hat\rmW}$, $e^\rmL$, and $e^\rmR$ are as in~\eqref{eq:coTangentBundleDistributionsCRAmbitwistorSpace}. Likewise, the differential $\bar\partial_{\rm CR,\,tw}$ in~\eqref{eq:twistedCRCSForm} can be taken to be the reduced differential given in~\eqref{eq:CRantiholomorphicDifferentialTwistedReduced}, that is,
                \begin{equation}
                    \bar\partial_{\rm CR,\,tw}\ \rightarrow\ \bar\partial_{\rm CR,\,tw,\,red}\ =\ \hat e^\rmF\hat E_\rmF+\hat e^\rmL(\hat E_\rmL+\theta^i\eta_iE_{\hat\rmW})+\hat e^\rmR(\hat E_\rmR+\theta^i\eta_iE_\rmW)~,
                \end{equation}
            \end{subequations}
            where now $\hat e^\rmF$, $\hat e^\rmL$, and $\hat e^\rmR$ are as in~\eqref{eq:coTangentBundleDistributionsCRAmbitwistorSpace}. By construction, $\Omega_{\rm CR,\,tw,\,red}$ is $\bar\partial_{\rm CR,\,tw,\,red}$-closed.
        \end{remark}
        
        \paragraph{Batalin--Vilkovisky action.}
        We may also consider the Batalin--Vilkovisky extension of the CR ambitwistor action~\eqref{eq:twistedCRCSAction}. This action is obtained by replacing the Lie-algebra valued $(0,1)$-form $A$ in the twisted CR holomorphic Chern--Simons form with a general element $\caA\in\Omega^{0,\bullet}_{\rm CR,\,tw,\,red}(F)\otimes\frg$. Such an element is of the form
        \begin{subequations}
            \begin{equation}\label{eq:CRCSBVFieldContent}
                \caA\ =\ C+A+A^++C^+
            \end{equation} 
            with $C\in\Omega^{0,0}_{\rm CR,\,tw,\,red}(F)\otimes\frg$ the ghost, $A^+\in\Omega^{0,2}_{\rm CR,\,tw,\,red}(F)\otimes\frg$ the anti-field of the gauge potential $A$, and $C^+\in\Omega^{0,3}_{\rm CR,\,tw,\,red}(F)\otimes\frg$ the anti-field of $C$. In terms of these component fields, the corresponding Batalin--Vilkovisky action reads as
            \begin{equation}\label{eq:twistedCRCSActionBV}
                \begin{aligned}
                    S_{\rm BV}\ &\coloneqq\ \int\Omega_{\rm CR,\,tw}\wedge\Big\{\tfrac12\inner{A}{\bar\partial_{\rm CR,\,tw}A}+\tfrac{1}{3!}\inner{A}{[A,A]}
                    \\
                    &\kern3cm\ +\inner{A^+}{\bar\partial_{\rm CR,\,tw}C+[A,C]}+\tfrac12\inner{C^+}{[C,C]}\Big\}\,.
                \end{aligned}
            \end{equation}
        \end{subequations}
        
        \subsection{Equivalence to \texorpdfstring{$\caN=3$}{N=3} supersymmetric Yang--Mills theory}\label{ssec:equivalence}
        
        \Cref{prop:quasi-iso-first-cohomologies} shows that the twisted CR holomorphic Chern--Simons equation of motion and the equations of motion of $\caN=3$ supersymmetric Yang--Mills theory are equivalent. In the remainder of this paper, we shall demonstrate that this equivalence extends to the level of the ambitwistor action~\eqref{eq:twistedCRCSAction} and its Batalin--Vilkovisky extension~\eqref{eq:twistedCRCSActionBV}, culminating in \cref{thm:equivalence} below. Equivalence here is again precisely the semi-classical equivalence mentioned above, i.e.~$L_\infty$-quasi-isomorphy. Concretely, the differential graded Lie algebra defined by the CR holomorphic Chern--Simons action~\eqref{eq:twistedCRCSActionBV} and the differential graded Lie algebra defined by the first-order Batalin--Vilkovisky action~\eqref{eq:firstOrderYMActionBV} are quasi-isomorphic.
        
        Furthermore, this $L_\infty$-quasi-isomorphism can be phrased as a homotopy transfer and thus, physically, corresponds to integrating out infinitely many auxiliary fields in the action~\eqref{eq:twistedCRCSActionBV}. For the reader's convenience, we summarise the key formulas about homotopy transfer in~\cref{app:homologicalPerturbations}.
        
        The proof of the corresponding theorems, \cref{thm:equivalence} and \cref{thm:homotopy-transfer},  is broken down into several steps. Firstly, we give a brief review of the first-order formulation of Yang--Mills theory and its Batalin--Vilkovisky extension. Secondly, we establish that there is a cyclic quasi-isomorphism between the cochain complexes underlying both theories. Finally, we show that this quasi-isomorphism extends to an $L_\infty$-quasi-isomorphism between the differential graded Lie algebras governing both theories and that this quasi-isomorphism is a homotopy transfer. To keep the formulas manageable, we shall restrict the explicit parts of our calculations to the R-symmetry singlets of the $\caN=3$ multiplet, that is, to the gluons. The full equivalence follows then from covariance of all our constructions under supersymmetry.
        
        \paragraph{First-order Yang--Mills action.} 
        As before, let $\frg$ be a Lie algebra with Lie bracket $[-,-]$ and inner product $\inner{-}{-}$. The standard second-order Yang--Mills action is
        \begin{subequations}
            \begin{equation}\label{eq:secondOrderYMAction}
                S\ =\ \tfrac12\int\rmd^4x\,\Big\{\inner{f^{\dot\alpha\dot\beta}}{f_{\dot\alpha\dot\beta}}+\inner{f^{\alpha\beta}}{f_{\alpha\beta}}\Big\}\,,
            \end{equation}
            where
            \begin{equation}\label{eq:chiralCurvatures}
                \begin{aligned}
                    f_{\dot\alpha\dot\beta}\ &\coloneqq\ \tfrac12\eps^{\alpha\beta}\big\{\partial_{\alpha\dot\alpha}A_{\beta\dot\beta}-\partial_{\beta\dot\beta}A_{\alpha\dot\alpha}+[A_{\alpha\dot\alpha},A_{\beta\dot\beta}]\big\}\ =\ \eps^{\alpha\beta}\partial_{\alpha(\dot\alpha}A_{\beta\dot\beta)}+\tfrac12\eps^{\alpha\beta}[A_{\alpha\dot\alpha},A_{\beta\dot\beta}]\,,
                    \\
                    f_{\alpha\beta}\ &\coloneqq\ \tfrac12\eps^{\dot\alpha\dot\beta}\big\{\partial_{\alpha\dot\alpha}A_{\beta\dot\beta}-\partial_{\beta\dot\beta}A_{\alpha\dot\alpha}+[A_{\alpha\dot\alpha},A_{\beta\dot\beta}]\big\}\ =\ \eps^{\dot\alpha\dot\beta}\partial_{(\alpha\dot\alpha}A_{\beta)\dot\beta}+\tfrac12\eps^{\dot\alpha\dot\beta}[A_{\alpha\dot\alpha},A_{\beta\dot\beta}]
                \end{aligned}
            \end{equation}
        \end{subequations}
        are the anti-self-dual and self-dual parts of the curvature of the Lie-algebra-valued one-form $A_{\alpha\dot\alpha}$. Introducing the anti-self-dual, $B_{\dot\alpha\dot\beta}=B_{\dot\beta\dot\alpha}$, and self-dual, $B_{\alpha\beta}=B_{\beta\alpha}$, parts of an auxiliary Lie-algebra valued auxiliary two-form transforming in the adjoint representation of the gauge group, we can write the first-order action of Yang--Mills theory as 
        \begin{equation}\label{eq:firstOrderYMAction}
            S\ =\ \int\rmd^4x\,\Big\{\inner{B^{\dot\alpha\dot\beta}}{f_{\dot\alpha\dot\beta}-\tfrac12B_{\dot\alpha\dot\beta}}+\inner{B^{\alpha\beta}}{f_{\alpha\beta}-\tfrac12B_{\alpha\beta}}\Big\}\,.
        \end{equation}
        Evidently, upon integrating out $B_{\dot\alpha\dot\beta}$ and $B_{\alpha\beta}$, we recover the second-order Yang--Mills action~\eqref{eq:secondOrderYMAction}. The equations of motion following from~\eqref{eq:firstOrderYMAction} read as
        \begin{subequations}\label{eq:firstOrderYMEOM}
            \begin{equation}\label{eq:firstOrderYMEOMStandard}
                f_{\dot\alpha\dot\beta}\ =\ B_{\dot\alpha\dot\beta}~,~~~
                f_{\alpha\beta}\ =\ B_{\alpha\beta}~,
                \eand
                \eps^{\dot\beta\dot\gamma}\nabla_{\alpha\dot\beta}B_{\dot\gamma\dot\alpha}+\eps^{\beta\gamma}\nabla_{\beta\dot\alpha}B_{\gamma\alpha}\ =\ 0~,
            \end{equation}
            where, as before, $\nabla_{\alpha\dot\alpha}=\partial_{\alpha\dot\alpha}+[A_{\alpha\dot\alpha},-]$. Because of the Bianchi identity, 
            \begin{equation}
                \eps^{\dot\beta\dot\gamma}\nabla_{\alpha\dot\beta}f_{\dot\gamma\dot\alpha}-\eps^{\beta\gamma}\nabla_{\beta\dot\alpha}f_{\gamma\alpha}\ =\ 0~,
            \end{equation}
            the equations of motion~\eqref{eq:firstOrderYMEOMStandard} are equivalent to 
            \begin{equation}
                f_{\dot\alpha\dot\beta}\ =\ B_{\dot\alpha\dot\beta}~,~~~
                f_{\alpha\beta}\ =\ B_{\alpha\beta}~,~~~
                \eps^{\dot\beta\dot\gamma}\nabla_{\alpha\dot\beta}B_{\dot\gamma\dot\alpha}\ =\ 0~,
                \eand
                \eps^{\beta\gamma}\nabla_{\beta\dot\alpha}B_{\gamma\alpha}\ =\ 0~.
            \end{equation}
        \end{subequations}
        
        Furthermore, the Batalin--Vilkovisky extension of the first-order Yang--Mills action~\eqref{eq:firstOrderYMAction} reads as
        \begin{equation}\label{eq:firstOrderYMActionBV}
            \begin{aligned}
                S_{\rm BV}\ &=\ \int\rmd^4x\,\Big\{\inner{B^{\dot\alpha\dot\beta}}{f_{\dot\alpha\dot\beta}-\tfrac12B_{\dot\alpha\dot\beta}}+\inner{B^{\alpha\beta}}{f_{\alpha\beta}-\tfrac12B_{\alpha\beta}}
                \\
                &\kern1.5cm+\inner{A^+_{\alpha\dot\alpha}}{\nabla^{\alpha\dot\alpha}c}+\inner{B^+_{\dot\alpha\dot\beta}}{[B^{\dot\alpha\dot\beta},c]}+\inner{B^+_{\alpha\beta}}{[B^{\alpha\beta},c]}+\tfrac12\inner{c^+}{[c,c]}\Big\}\,,
            \end{aligned}
        \end{equation}
        where $c$ is the ghost field, and $A^+_{\alpha\dot\alpha}$, $B^+_{\dot\alpha\dot\beta}=B^+_{\dot\beta\dot\alpha}$, $B^+_{\alpha\beta}=B^+_{\beta\alpha}$, and $c^+$ are the evident anti-fields. 
        
        As familiar from the example of Chern--Simons theory, also this action defines a differential graded Lie algebra $\frL_{\rm YM_1}$ with the underlying cochain complex
        \begin{subequations}\label{eq:DGLAFirstOrderYM}
            \begin{equation} 
                \begin{tikzcd}[row sep=-0.1cm]
                    \stackrel{c}{\Omega^0(\IR^4)\otimes\frg}\arrow[r,"\partial_{\alpha\dot\alpha}"]
                    & 
                    \stackrel{A_{\alpha\dot\alpha}}{\Omega^1(\IR^4)\otimes\frg}\arrow[start anchor=east,end anchor=west,rdd,"\partial_{\alpha\dot\alpha}",pos=0.2]
                    & [20pt]
                    \stackrel{A^+_{\alpha\dot\alpha}}{\Omega^1(\IR^4)\otimes\frg}\arrow[r,"\partial^{\alpha\dot\alpha}"]
                    &
                    \stackrel{c^+}{\Omega^0(\IR^4)\otimes\frg}
                    \\
                    & \oplus & \oplus
                    \\
                    \underbrace{\spacer{2ex}\phantom{\Omega^0(\IR^4,\frg)}}_{\eqqcolon\,\frL_{\rm YM_1,\,0}} & 
                    \underbrace{\spacer{2ex}\stackrel{B_{\dot\alpha\dot\beta},\,B_{\alpha\beta}}{\Omega^2(\IR^4)\otimes\frg}}_{\eqqcolon\,\frL_{\rm YM_1,\,1}}\arrow[start anchor=east,end anchor=west,uur,"\partial^{\alpha\dot\alpha}",pos=0.2,crossing over]\arrow[r,"-\sfid",swap,shift right]
                    &
                    \underbrace{\spacer{2ex}\stackrel{B^+_{\dot\alpha\dot\beta},\,B^+_{\alpha\beta}}{\Omega^2(\IR^4)\otimes\frg}}_{\eqqcolon\,\frL_{\rm YM_1,\,2}}
                    &
                    \underbrace{\spacer{2ex}\phantom{\Omega^0(\IR^4,\frg)}}_{\eqqcolon\,\frL_{\rm YM_1,\,3}}
                \end{tikzcd}
            \end{equation}
            and the binary products $\mu_2$ defined by
            \begin{equation}
                \begin{gathered}
                    \mu_2(c_1,c_2)\ \coloneqq\ [c_1,c_2]~,~~~
                    \mu_2(c,A_{\alpha\dot\alpha})\ \coloneqq\ [c,A_{\alpha\dot\alpha}]~,
                    \\
                    \mu_2(c,B_{\dot\alpha\dot\beta})\ \coloneqq\ [c,B_{\dot\alpha\dot\beta}]~,~~~
                    \mu_2(c,B_{\alpha\beta})\ \coloneqq\ [c,B_{\alpha\beta}]~,
                    \\
                    \mu_2(c,c^+)\ \coloneqq\ [c,c^+]~,~~~
                    \mu_2(c,A^+_{\alpha\dot\alpha})\ \coloneqq\ -[c,A^+_{\alpha\dot\alpha}]~,
                    \\
                    \mu_2(c,B^+_{\dot\alpha\dot\beta})\ \coloneqq\ -[c,B^+_{\dot\alpha\dot\beta}]~,~~~
                    \mu_2(c,B^+_{\alpha\beta})\ \coloneqq\ -[c,B^+_{\alpha\beta}]~,
                    \\
                    \mu_2(A_{1\,\alpha\dot\alpha},A_{2\,\beta\dot\beta})\ \coloneqq\ \tfrac12\big(\eps^{\alpha\beta}[A_{1\,\alpha(\dot\alpha},A_{2\,\beta\dot\beta)}],\eps^{\dot\alpha\dot\beta}[A_{1\,(\alpha\dot\alpha},A_{2\,\beta)\dot\beta}]\big)\,,
                    \\
                    \mu_2(A_{\alpha\dot\alpha},B_{\dot\beta\dot\gamma})\ \coloneqq\ \eps^{\dot\beta\dot\gamma}[A_{\alpha\dot\beta},B_{\dot\gamma\dot\alpha}]~,~~~
                    \mu_2(A_{\alpha\dot\alpha},B_{\beta\gamma})\ \coloneqq\ \eps^{\beta\gamma}[A_{\beta\dot\alpha},B_{\gamma\alpha}]~,
                    \\
                    \mu_2(A_{\alpha\dot\alpha},A^+_{\beta\dot\beta})\ \coloneqq\ -[A^{\alpha\dot\alpha},A^+_{\alpha\dot\alpha}]~,
                    \\
                    \mu_2(B_{\dot\alpha\dot\beta},B^+_{\dot\gamma\dot\delta})\ \coloneqq\ -[B^{\dot\alpha\dot\beta},B^+_{\dot\alpha\dot\beta}]~,~~~
                    \mu_2(B_{\alpha\beta},B^+_{\gamma\delta})\ \coloneqq\ -[B^{\alpha\beta},B^+_{\alpha\beta}]~.
                \end{gathered}
            \end{equation}
        \end{subequations}

        One can show that this first-order formulation is semi-classically equivalent to the second-order formulation following the constructions in~\cite{Costello:2011aa,Rocek:2017xsj,Jurco:2018sby}. The same applies to the $\caN=3$ supersymmetric extension.
        
        \paragraph{Embedding of theories.}
        Consider the differential graded Lie algebras $\frL_{\rm CR,\,tw,\,red}$ and $\frL_{\rm YM_1}$ defined in~\eqref{eq:SYMDGLATwistedReduced} and~\eqref{eq:DGLAFirstOrderYM}, respectively, and define a degree-zero morphism of graded vector spaces 
        \begin{subequations}\label{eq:spaceTimeAnsatzBV}
            \begin{equation}
                \begin{gathered}
                    \sfe\,:\,\frL_{\rm YM_1}\ \rightarrow\ \frL_{\rm CR,\,tw,\,red}~,
                    \\
                    c\ \mapsto\ C~,~~~
                    \binom{A_{\alpha\dot\alpha}}{B_{\dot\alpha\dot\beta},B_{\alpha\beta}}\ \mapsto\ A~,~~~
                    \binom{A^+_{\alpha\dot\alpha}}{B^+_{\dot\alpha\dot\beta},B^+_{\alpha\beta}}\ \mapsto\ A^+~,~~
                    c^+\ \mapsto\ C^+
                \end{gathered}
            \end{equation}
            between the field space of first-order Yang--Mills theory and the field space of CR holomorphic Chern--Simons theory by setting 
            \begin{equation}
                C\ \coloneqq\ c
                \eand
                C^+\ \coloneqq\ \hat v^\rmF\wedge\hat v^\rmL\wedge\hat v^\rmR\,(\theta^i\eta_i)^3\frac43c^+~, 
            \end{equation}
            \begin{equation}
                \begin{aligned}
                    A\ &\coloneqq\ \hat v^\rmF\left\{A_{\alpha\dot\alpha}\mu^\alpha\lambda^{\dot\alpha}-\theta^i\eta_i\left(\frac{B_{\dot\alpha\dot\beta}\lambda^{\dot\alpha}\hat\lambda^{\dot\beta}}{|\lambda|^2}-\frac{B_{\alpha\beta}\mu^\alpha\hat\mu^\beta}{|\mu|^2}\right)\right.
                    \\
                    &\kern1.5cm-(\theta^i\eta_i)^2\left(\frac{\partial_{\alpha(\dot\alpha}B_{\dot\beta\dot\gamma)}\hat\mu^\alpha\lambda^{\dot\alpha}\hat\lambda^{\dot\beta}\hat\lambda^{\dot\gamma}}{2|\mu|^2|\lambda|^4}+\frac{\partial_{(\alpha\dot\alpha}B_{\beta\gamma)}\mu^\alpha\hat\mu^\beta\hat\mu^\gamma\hat\lambda^{\dot\alpha}}{2|\mu|^4|\lambda|^2}\right)
                    \\
                    &\kern1.5cm\left.-(\theta^i\eta_i)^3\left(\frac{\partial_{\alpha(\dot\alpha}\partial_{\beta\dot\beta}B_{\dot\gamma\dot\delta)}\hat\mu^\alpha\hat\mu^\beta\lambda^{\dot\alpha}\hat\lambda^{\dot\beta}\hat\lambda^{\dot\gamma}\hat\lambda^{\dot\delta}}{6|\mu|^4|\lambda|^6}-\frac{\partial_{(\alpha\dot\alpha}\partial_{\beta\dot\beta}B_{\gamma\delta)}\mu^\alpha\hat\mu^\beta\hat\mu^\gamma\hat\mu^\delta\hat\lambda^{\dot\alpha}\hat\lambda^{\dot\beta}}{6|\mu|^6|\lambda|^4}\right)\right\}
                    \\
                    &\kern1cm+\hat v^\rmL\left\{-\theta^i\eta_i\frac{A_{\alpha\dot\alpha}\hat\mu^\alpha\lambda^{\dot\alpha}}{|\mu|^2}-(\theta^i\eta_i)^2\frac{3B_{\alpha\beta}\hat\mu^\alpha\hat\mu^\beta}{2|\mu|^4}+(\theta^i\eta_i)^3\frac{5\partial_{(\alpha\dot\alpha}B_{\beta\gamma)}\hat\mu^\alpha\hat\mu^\beta\hat\mu^\gamma\hat\lambda^{\dot\alpha}}{6|\mu|^6|\lambda|^2}\right\}
                    \\
                    &\kern1cm+\hat v^\rmR\left\{\theta^i\eta_i\frac{A_{\alpha\dot\alpha}\mu^\alpha\hat\lambda^{\dot\alpha}}{|\lambda|^2}-(\theta^i\eta_i)^2\frac{3B_{\dot\alpha\dot\beta}\hat\lambda^{\dot\alpha}\hat\lambda^{\dot\beta}}{2|\lambda|^4}-(\theta^i\eta_i)^3\frac{5\partial_{\alpha(\dot\alpha}B_{\dot\beta\dot\gamma)}\hat\mu^\alpha\hat\lambda^{\dot\alpha}\hat\lambda^{\dot\beta}\hat\lambda^{\dot\gamma}}{6|\mu|^2|\lambda|^6}\right\},
                \end{aligned}
            \end{equation}
            and 
            \begin{equation}
                \begin{aligned}
                    A^+\ &\coloneqq\ \hat v^\rmL\wedge\hat v^\rmR\left\{(\theta^i\eta_i)^2\left(\frac{B^+_{\dot\alpha\dot\beta}\lambda^{\dot\alpha}\hat\lambda^{\dot\beta}}{|\lambda|^2}-\frac{B^+_{\alpha\beta}\mu^\alpha\hat\mu^\beta}{|\mu|^2}\right)+(\theta^i\eta_i)^3\frac{A^+_{\alpha\dot\alpha}\hat\mu^\alpha\hat\lambda^{\dot\alpha}}{|\mu|^2|\lambda|^2}\right\}
                    \\
                    &\kern1cm+\hat v^\rmF\wedge\hat v^\rmL\left\{-\theta^i\eta_iB^+_{\dot\alpha\dot\beta}\lambda^{\dot\alpha}\lambda^{\dot\beta}+(\theta^i\eta_i)^2\frac{\hat\mu^\alpha\lambda^{\dot\alpha}}{|\mu|^2}\left(-\frac56A^+_{\alpha\dot\alpha}-\frac12B^+_{\alpha\dot\alpha}\right)\right.
                    \\
                    &\kern1cm\left.+(\theta^i\eta_i)^3\frac{\hat\mu^\alpha\hat\mu^\beta\lambda^{\dot\alpha}\hat\lambda^{\dot\beta}}{|\mu|^4|\lambda|^2}\left(\frac14\eps_{\dot\alpha\dot\beta}\eps^{\dot\gamma\dot\delta}\partial_{(\alpha\dot\gamma}\big(A^+_{\beta)\dot\delta}+B^+_{\beta)\dot\delta}\big)+\frac16\partial_{(\alpha(\dot\alpha}\big(A^+_{\beta)\dot\beta)}+2B^+_{\beta)\dot\beta)}\big)\right)\right\}
                    \\
                    &\kern1cm+\hat v^\rmF\wedge\hat v^\rmR\left\{\theta^i\eta_iB^+_{\alpha\beta}\mu^\alpha\mu^\beta+(\theta^i\eta_i)^2\frac{\mu^\alpha\hat\lambda^{\dot\alpha}}{|\lambda|^2}\left(-\frac56A^+_{\alpha\dot\alpha}+\frac12B^+_{\alpha\dot\alpha}\right)\right.
                    \\
                    &\kern1cm\left.-(\theta^i\eta_i)^3\frac{\mu^\alpha\hat\mu^\beta\hat\lambda^{\dot\alpha}\hat\lambda^{\dot\beta}}{|\mu|^2|\lambda|^4}\left(\frac14\eps_{\alpha\beta}\eps^{\gamma\delta}\partial_{\gamma(\dot\alpha}\big(A^+_{\delta\dot\beta)}-B^+_{\delta\dot\beta)}\big)+\frac16\partial_{(\alpha(\dot\alpha}\big(A^+_{\beta)\dot\beta)}-2B^+_{\beta)\dot\beta)}\big)\right)\right\},
                \end{aligned}
            \end{equation}
            where we have used the short-hand notation
            \begin{equation}
                B^+_{\alpha\dot\alpha}\ \coloneqq\ \eps^{\dot\beta\dot\gamma}\partial_{\alpha\dot\beta}B^+_{\dot\gamma\dot\alpha}-\eps^{\beta\gamma}\partial_{\beta\dot\alpha}B^+_{\gamma\alpha}~.
            \end{equation}
        \end{subequations}
        
        \begin{proposition}\label{eq:cyclicityPreservingProperty}
            Restricted to the image of $\sfe$ defined in~\eqref{eq:spaceTimeAnsatzBV}, the action~\eqref{eq:twistedCRCSActionBV} reduces to the action~\eqref{eq:firstOrderYMActionBV}.
        \end{proposition}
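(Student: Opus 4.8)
The plan is to evaluate the left-hand side of the claimed identity by brute force: substitute the ansatz~\eqref{eq:spaceTimeAnsatzBV} into the Batalin--Vilkovisky action~\eqref{eq:twistedCRCSActionBV}, perform the Berezin integration over the CR holomorphic fermionic coordinates, then the integration over the $\IC P^1\times\IC P^1$ fibres, and finally compare the resulting $\int\rmd^4x$-integrand with the first-order Yang--Mills Batalin--Vilkovisky action~\eqref{eq:firstOrderYMActionBV}. As a preliminary, one checks that the explicit formulas in~\eqref{eq:spaceTimeAnsatzBV} do define elements of $\Omega^{0,\bullet}_{\rm CR,\,tw,\,red}(F)\otimes\frg$: the one-form legs appearing there are the bosonic $\hat v^\rmF,\hat v^\rmL,\hat v^\rmR$ (and their wedges), and the fermionic dependence enters only through the CR holomorphic combinations $\eta_i=\eta^{\dot\alpha}_i\lambda_{\dot\alpha}$ and $\theta^i=\theta^{i\alpha}\mu_\alpha$ of~\eqref{eq:CRFermionicCoordinates}. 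Hence, by \cref{rmk:CRAmbitwistorSpace}, the whole computation may be carried out on the CR ambitwistor space $L$, with $\Omega_{\rm CR,\,tw}$ and $\bar\partial_{\rm CR,\,tw}$ replaced by their reduced counterparts $\Omega_{\rm CR,\,tw,\,red}$ and $\bar\partial_{\rm CR,\,tw,\,red}$ from~\eqref{eq:reducedCRExpressions}.

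Next, one expands everything in powers of the Grassmann-even nilpotent quantity $\theta^i\eta_i$; since there are only three flavours, $(\theta^i\eta_i)^4=0$, and since the integrand of~\eqref{eq:twistedCRCSActionBV} is at most cubic in $\caA$, only finitely many terms contribute. The Berezin rules~\eqref{eq:BerezinIntegration} extract the coefficient of $\eta_1\eta_2\eta_3\theta^1\theta^2\theta^3$, equivalently of $(\theta^i\eta_i)^3$ up to a combinatorial factor, so that only those contributions whose $\theta^i\eta_i$-weight in the Chern--Simons form equals three survive --- the reduced volume form $\Omega_{\rm CR,\,tw,\,red}$ carries weight zero once the $\theta^i\eta_i$-proportional tails of $v^\rmW$ and $v^{\hat\rmW}$ drop out, exactly as noted around~\eqref{eq:twistedCRHolomorphicVolumeForm}. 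This localises the computation to a top-degree integral over the body $\IR^4\times\IC P^1\times\IC P^1$ of $L$, whose integrand is a ratio of polynomials in $\lambda,\hat\lambda,\mu,\hat\mu$ of total homogeneity zero. The two $\IC P^1$-integrations are then performed with the standard spinor integrals, which convert homogeneity-zero rational expressions in $\lambda,\hat\lambda$ (resp.\ $\mu,\hat\mu$) into products of $\eps$-tensors, using that the space-time derivative $\partial_{\alpha\dot\alpha}$ enters only through the $\hat E_\rmF$ part of $\bar\partial_{\rm CR,\,tw,\,red}$ and through the $\theta^i\eta_iE_\rmW$, $\theta^i\eta_iE_{\hat\rmW}$ parts of its $\hat E_\rmL$- and $\hat E_\rmR$-legs. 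What remains is an ordinary $\int\rmd^4x$ of polynomial expressions in the component fields, their brackets, and their derivatives.

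Finally one matches term by term. In the gluon sector, the quadratic piece $\tfrac12\inner{A}{\bar\partial_{\rm CR,\,tw,\,red}A}$ and the cubic piece $\tfrac{1}{3!}\inner{A}{[A,A]}$ assemble, after the fibre integrations, into $\inner{B^{\dot\alpha\dot\beta}}{f_{\dot\alpha\dot\beta}-\tfrac12B_{\dot\alpha\dot\beta}}+\inner{B^{\alpha\beta}}{f_{\alpha\beta}-\tfrac12B_{\alpha\beta}}$ with $f_{\dot\alpha\dot\beta}$, $f_{\alpha\beta}$ as in~\eqref{eq:chiralCurvatures} --- the linearised curvature terms coming from $\bar\partial_{\rm CR,\,tw,\,red}$ acting on the leading component $A_{\alpha\dot\alpha}\mu^\alpha\lambda^{\dot\alpha}$, the $[A_{\alpha\dot\alpha},A_{\beta\dot\beta}]$ terms from $[A,A]$, and the $-\tfrac12B^2$ terms from the cross-pairings of the $A_{\alpha\dot\alpha}$-part with the $B$-part of the ansatz. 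In the ghost/anti-field sector, $\inner{A^+}{\bar\partial_{\rm CR,\,tw,\,red}C+[A,C]}$ reproduces $\inner{A^+_{\alpha\dot\alpha}}{\nabla^{\alpha\dot\alpha}c}+\inner{B^+_{\dot\alpha\dot\beta}}{[B^{\dot\alpha\dot\beta},c]}+\inner{B^+_{\alpha\beta}}{[B^{\alpha\beta},c]}$ and $\tfrac12\inner{C^+}{[C,C]}$ reproduces $\tfrac12\inner{c^+}{[c,c]}$. The full $\caN=3$ statement then follows from the covariance of the entire construction under supersymmetry and R-symmetry, so that fixing the R-symmetry singlet (gluon) sector fixes everything.

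The main obstacle is bookkeeping. The delicate part is tracking the numerical coefficients and signs through the chain of (i) Berezin integrations with their reorderings of the three $(\eta_i,\theta^i)$-pairs, (ii) the weight-$k$ spinor integrals over $\IC P^1\times\IC P^1$ with their normalisations, and (iii) the reorderings of the $\hat v$-legs against the volume form. In particular, verifying that the fractions $\tfrac56$, $\tfrac12$, $\tfrac14$, $\tfrac16$ and the relative self-dual/anti-self-dual signs in the $A^+$-ansatz conspire to give precisely $\inner{A^+_{\alpha\dot\alpha}}{\nabla^{\alpha\dot\alpha}c}$ together with the $B^+$-ghost couplings, and that the derivative-of-$B$ tails built into $A$ and $A^+$ do not leak spurious terms into the action, is where the work lies; once the vanishing of the $\theta^i\eta_i$- and $\tilde\theta^i\tilde\eta_i$-tails of the volume form against the Chern--Simons form is established, the remaining steps are mechanical.
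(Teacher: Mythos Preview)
Your proposal is correct and takes essentially the same route as the paper: substitute the ansatz, extract the $(\theta^i\eta_i)^3$ coefficient via Berezin integration, reduce the $\IC P^1\times\IC P^1$ integrals with the spinor identities~\eqref{eq:SerreDuality}, and then invoke $\caN=3$ covariance for the non-singlet sectors. One small correction to your narrative: the $-\tfrac12 B^2$ terms do \emph{not} come from a cross-pairing of the $A_{\alpha\dot\alpha}$-part with the $B$-part, but from the self-pairing $-\inner{A_\rmF^{(1)}}{A_\rmF^{(1)}}$ of the $(\theta^i\eta_i)^1$ component of $\hat V_\rmF\intprod A$ (which carries only the $B$-fields) with itself, generated by the torsion $[\hat V_\rmL,\hat V_\rmR]=2\theta^i\eta_i\hat V_\rmF$ hidden in $\bar\partial_{\rm CR,\,tw,\,red}$ --- keeping this straight will save you some hunting when you actually do the bookkeeping.
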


        \begin{proof}
            The proof follows from a straightforward but lengthy computation. We briefly illustrate the computation using the classical part of the action~\eqref{eq:twistedCRCSAction}. Firstly, one can check that the derivative terms appearing in the expression of $A$ in~\eqref{eq:spaceTimeAnsatzBV} will not contribute as we are only interested in terms of order $(\theta^i\eta_i)^3$ when computing the action. Next, upon writing 
            \begin{equation}
				\begin{aligned}
                    A\ &=\ \hat v^\rmF A_\rmF+\hat v^\rmL A_\rmL+\hat v^\rmR A_\rmR
                    \\
                    \ &=\ \hat v^\rmF\sum_n(\theta^i\eta_i)^nA^{(n)}_\rmF+\hat v^\rmL\sum_n(\theta^i\eta_i)^nA^{(n)}_\rmL+\hat v^\rmR\sum_n(\theta^i\eta_i)^nA^{(n)}_\rmR~,
				\end{aligned}
			\end{equation}
            a straightforward calculation shows that\footnote{See also~\eqref{eq:reducedCRExpressions}.}
            \begin{equation}\label{eq:kinematicTerm}
                \begin{aligned}
                    &\tfrac12\inner{A}{\bar\partial_{\rm CR,\,tw}A}\Big|_{(\theta^i\eta_i)^3}
                    \\ 
                    &=\ \hat e^\rmF\wedge\hat e^\rmL\wedge\hat e^\rmR\Big\{\inner{A_\rmF^{(1)}}{E_{\hat\rmW}A_\rmR^{(1)}-E_\rmW A_\rmL^{(1)}+\hat E_\rmL A_\rmR^{(2)}-\hat E_\rmR A_\rmL^{(2)}-A_\rmF^{(1)}}
                    \\
                    &\kern3.5cm-\inner{A_\rmL^{(2)}}{\hat E_\rmF A_\rmR^{(1)}-E_\rmW A_\rmF^{(0)}}+\inner{A_\rmR^{(2)}}{\hat E_\rmF A_\rmL^{(1)}-E_{\hat\rmW}A_\rmF^{(0)}}\Big\}
                \end{aligned}
            \end{equation}
            with the differential forms and vector fields as given in~\eqref{eq:tangentBundleDistributionsCRAmbitwistorSpace} and~\eqref{eq:coTangentBundleDistributionsCRAmbitwistorSpace}. Because $e^\rmF\wedge\hat e^\rmF\wedge e^\rmW\wedge e^{\hat\rmW}$ is the volume form $\rmd^4x$ on $\IR^4$ up to a multiplicative constant and because of the identities\footnote{Here, $\rmi\coloneqq\sqrt{-1}$. For details, see e.g.~\cite{Wolf:2010av}.}
            \begin{equation}\label{eq:SerreDuality}
                \begin{aligned}
                    -\frac{1}{2\pi\rmi}\int e^\rmL\wedge\hat e^\rmL\,f_{\dot\alpha_1\cdots\dot\alpha_m}g^{\dot\beta_1\cdots\dot\beta_m}\frac{\lambda_{\dot\beta_1}\cdots\lambda_{\dot\beta_m}\hat\lambda^{\dot\alpha_1}\cdots\hat\lambda^{\dot\alpha_m}}{|\lambda|^{2m}}\ &=\ \frac{1}{m+1}f_{\dot\alpha_1\cdots\dot\alpha_m}g^{\dot\alpha_1\cdots\dot\alpha_m}~,
                    \\
                    -\frac{1}{2\pi\rmi}\int e^\rmR\wedge\hat e^\rmR\,f_{\alpha_1\cdots\alpha_n}g^{\beta_1\cdots\beta_n}\frac{\mu_{\beta_1}\cdots\mu_{\beta_n}\hat\mu^{\alpha_1}\cdots\hat\mu^{\alpha_n}}{|\mu|^{2n}}\ &=\ \frac{1}{n+1}f_{\alpha_1\cdots\alpha_n}g^{\alpha_1\cdots\alpha_n}~,
                \end{aligned}
            \end{equation}
            the kinematic term~\eqref{eq:kinematicTerm} becomes 
            \begin{subequations}\label{eq:twoLeadingTermsYMAction}
                \begin{equation}
                    \begin{aligned}
                        &\int\Omega_{\rm CR,\,tw}\wedge\tfrac12\inner{A}{\bar\partial_{\rm CR,\,tw}A}
                        \\
                        &\kern1cm=\ \int\rmd^4x\,\Big\{\inner{B^{\dot\alpha\dot\beta}}{\eps^{\alpha\beta}\partial_{\alpha(\dot\alpha}A_{\beta\dot\beta)}-\tfrac12B_{\dot\alpha\dot\beta}}+\inner{B^{\alpha\beta}}{\eps^{\dot\alpha\dot\beta}\partial_{(\alpha\dot\alpha}A_{\beta)\dot\beta}-\tfrac12B_{\alpha\beta}}\Big\}
                    \end{aligned}
                \end{equation}
                up to an overall multiplicative constant. Likewise, the interaction term becomes
                \begin{equation}
                    \begin{aligned}
                        &\int\Omega_{\rm CR,\,tw}\wedge\tfrac{1}{3!}\inner{A}{[A,A]}\ =\ \int\rmd^4x\,\Big\{\inner{B^{\dot\alpha\dot\beta}}{\tfrac12\eps^{\alpha\beta}[A_{\alpha\dot\alpha},A_{\beta\dot\beta}]}+\inner{B^{\alpha\beta}}{\tfrac12\eps^{\dot\alpha\dot\beta}[A_{\alpha\dot\alpha},A_{\beta\dot\beta}]}\Big\}
                    \end{aligned}
                \end{equation}
            \end{subequations}
            up to the same overall multiplicative constant. Upon combining the two expressions in~\eqref{eq:twoLeadingTermsYMAction}, we conclude that to leading order, the ambitwistor action~\eqref{eq:twistedCRCSAction} becomes the first-order Yang--Mills action~\eqref{eq:firstOrderYMAction}, up to an overall multiplicative constant.
        \end{proof}

        \noindent 
        Physically speaking, we may conclude that the action~\eqref{eq:twistedCRCSActionBV} contains Yang--Mills theory in its first-order formulation, and because our formalism is fully covariant under $\caN=3$ supersymmetry, it actually contains $\caN=3$ supersymmetric Yang--Mills theory. It remains to show that the infinite tower of extra fields not contained in the image of $\sfe$ can be consistently integrated out.

        \paragraph{Quasi-isomorphism of cochain complexes.}
        Let us now tighten the relation between both actions.

        \begin{proposition}\label{prop:injectiveCochainMorphism}
            The morphism of graded vector spaces defined in~\eqref{eq:spaceTimeAnsatzBV} is an injective morphism of cochain complexes. This holds also for the $\caN=3$ supersymmetric extension.
        \end{proposition}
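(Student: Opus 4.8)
The plan is to establish the two assertions --- that $\sfe$ intertwines the differentials and that it is injective --- separately, working degree by degree, the complexes \eqref{eq:DGLAFirstOrderYM} and \eqref{eq:SYMDGLATwistedReduced} being concentrated in degrees $0$ through $3$. For injectivity, in degree $0$ one has $c\mapsto C=c$ and in degree $3$ one has $c^+\mapsto\hat v^\rmF\wedge\hat v^\rmL\wedge\hat v^\rmR\,(\theta^i\eta_i)^3\tfrac43 c^+$, both manifestly injective. In degrees $1$ and $2$ we would recover the space-time fields from their images by extracting leading coefficients: expanding the $\hat v^\rmF$-, $\hat v^\rmL$- and $\hat v^\rmR$-components of $A$ (and the $\hat v\wedge\hat v$-components of $A^+$) in powers of $\theta^i\eta_i$ --- an expansion that truncates after the cubic term because $(\theta^i\eta_i)^4=0$ --- one reads off $A_{\alpha\dot\alpha}$ from the $(\theta^i\eta_i)^0$-part of the $\hat v^\rmF$-component, and the self-dual and anti-self-dual pieces $B_{\alpha\beta}$, $B_{\dot\alpha\dot\beta}$ from the two inequivalent spinorial weight structures $\mu^\alpha\hat\mu^\beta/|\mu|^2$ and $\lambda^{\dot\alpha}\hat\lambda^{\dot\beta}/|\lambda|^2$ appearing at order $\theta^i\eta_i$; similarly $A^+_{\alpha\dot\alpha}$, $B^+_{\dot\alpha\dot\beta}$, $B^+_{\alpha\beta}$ are read off from the coefficients of $A^+$. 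These extractions are left inverses of the corresponding blocks of $\sfe$, so $\sfe(\xi)=0$ forces $\xi=0$, and the remaining higher-order terms in $\theta^i\eta_i$ are just derivatives of this leading data and impose no further condition.

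For the cochain-map property, let $\rmd_{\rm YM_1}$ denote the differential of \eqref{eq:DGLAFirstOrderYM}, read off from the linearised first-order equations \eqref{eq:firstOrderYMEOM}: $\rmd_{\rm YM_1}c=(\partial_{\alpha\dot\alpha}c,0)$; on degree $1$, $(A_{\alpha\dot\alpha},B)\mapsto(\eps^{\dot\beta\dot\gamma}\partial_{\alpha\dot\beta}B_{\dot\gamma\dot\alpha}+\eps^{\beta\gamma}\partial_{\beta\dot\alpha}B_{\gamma\alpha},\,\eps^{\alpha\beta}\partial_{\alpha(\dot\alpha}A_{\beta\dot\beta)}-B_{\dot\alpha\dot\beta},\,\eps^{\dot\alpha\dot\beta}\partial_{(\alpha\dot\alpha}A_{\beta)\dot\beta}-B_{\alpha\beta})$; and on degree $2$, $(A^+,B^+)\mapsto\partial^{\alpha\dot\alpha}A^+_{\alpha\dot\alpha}$. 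We then check $\bar\partial_{\rm CR,\,tw,\,red}\circ\sfe=\sfe\circ\rmd_{\rm YM_1}$ in each degree, using $\bar\partial_{\rm CR,\,tw,\,red}$ in the form \eqref{eq:CRantiholomorphicDifferentialTwistedReduced} together with the twisted vector fields \eqref{eq:tangentBundleDistributionsCRSuperAmbitwistorSpaceTwisted}. Degree $3$ is immediate, both sides vanishing. Degree $0$ is a one-line computation: acting with $\hat V_\rmF=\hat E_\rmF$, $\hat V_\rmL=\hat E_\rmL+\theta^i\eta_iE_{\hat\rmW}$ and $\hat V_\rmR=\hat E_\rmR+\theta^i\eta_iE_\rmW$ on $c=c(x)$ reproduces exactly the three $\hat v$-components of $\sfe(\partial_{\alpha\dot\alpha}c,0)$ in \eqref{eq:spaceTimeAnsatzBV}. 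Degrees $1$ and $2$ are the substantive checks: one applies $\bar\partial_{\rm CR,\,tw,\,red}$ to the explicit expressions for $A$ and $A^+$ in \eqref{eq:spaceTimeAnsatzBV}, organises the result by powers of $\theta^i\eta_i$ and by the $\mu,\hat\mu,\lambda,\hat\lambda$ weight structures, and matches it term by term with $\sfe$ applied to $\rmd_{\rm YM_1}$ of the relevant arguments; the Serre-duality-type identities \eqref{eq:SerreDuality}, together with the splitting of $\rmd$ into the $e^\rmF,\hat e^\rmF,e^\rmW,e^{\hat\rmW}$ and $\IC P^1\times\IC P^1$ directions as in the proof of \cref{eq:cyclicityPreservingProperty}, dispose of the fibre integrations.

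The hard part will be the degree-$1$ identity: the $(\theta^i\eta_i)^2$- and $(\theta^i\eta_i)^3$-terms of $A$ (those carrying $\partial B$ and $\partial\partial B$) are precisely what is needed to cancel the spurious pieces generated by $\hat V_\rmL$, $\hat V_\rmR$ and by the twisted torsion \eqref{eq:superspaceTorsionTwisted}, so that $\bar\partial_{\rm CR,\,tw,\,red}A$ genuinely lies in the image of $\sfe$ in degree $2$. One should regard $A$ --- and likewise $A^+$ --- as the essentially unique finite solution of the recursion in powers of $\theta^i\eta_i$ generated by this cancellation requirement, the recursion terminating because $(\theta^i\eta_i)^4=0$; verifying the cancellations is routine but voluminous index manipulation (Schouten identities for the spinor factors plus the commutators of \eqref{eq:tangentBundleDistributionsCRSuperAmbitwistorSpaceTwisted}). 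Finally, since $\sfe$, $\bar\partial_{\rm CR,\,tw,\,red}$ and $\rmd_{\rm YM_1}$ are all built covariantly from the $\caN=3$ data, the passage from the R-symmetry-singlet (gluon) sector to the full $\caN=3$ multiplet requires no new idea: either one repeats the computation component by component, or one invokes $\sfSpin(4)\times\sfSU(3)$-equivariance to extend the statement verbatim.
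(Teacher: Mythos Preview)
Your approach is essentially the same as the paper's: the paper's own proof merely asserts that injectivity is ``evident'' and that the cochain-map identity $\bar\partial_{\rm CR,\,tw,\,red}\circ\sfe=\sfe\circ\mu_1$ ``follows from a direct, albeit lengthy, calculation''. Your degree-by-degree organisation, the extraction of leading coefficients for injectivity, and the observation that the higher $(\theta^i\eta_i)$-terms in~\eqref{eq:spaceTimeAnsatzBV} are tailored precisely to absorb the torsion contributions from~\eqref{eq:superspaceTorsionTwisted} are all correct and flesh out what the paper leaves implicit.

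One small inaccuracy: the Serre-duality identities~\eqref{eq:SerreDuality} and the accompanying fibre integrations over $\IC P^1\times\IC P^1$ play no role here. Those identities are used in the proof of \cref{eq:cyclicityPreservingProperty} to evaluate the action, where one genuinely integrates over the fibres. The cochain-map check $\bar\partial_{\rm CR,\,tw,\,red}\circ\sfe=\sfe\circ\mu_1$, by contrast, is a pointwise identity of differential forms on $F$: you apply $\bar\partial_{\rm CR,\,tw,\,red}$ to the explicit expressions in~\eqref{eq:spaceTimeAnsatzBV} and match the resulting coefficients of $\hat v^\rmF$, $\hat v^\rmL$, $\hat v^\rmR$ (and their wedges) against the image of $\mu_1$ under $\sfe$, order by order in $\theta^i\eta_i$ and spinor weight. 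No integration is performed, and the relevant tools are only the spinor Schouten identities and the commutators of the twisted vector fields. Removing that clause, your sketch stands.
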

        
        \begin{proof}
            It is evident that $\sfe$ is injective. To check that it is a cochain map, we have to verify that $\bar\partial_{\rm CR,\,tw,\,red}\circ\sfe=\sfe\circ\mu_1$ where $\mu_1$ is the differential of the cochain complex in~\eqref{eq:DGLAFirstOrderYM} and $\bar\partial_{\rm CR,\,tw,\,red}$ was defined in~\eqref{eq:CRantiholomorphicDifferentialTwistedReduced}. This, however, follows from a direct, albeit lengthy, calculation.
        \end{proof}
        
        \begin{proposition}\label{prop:quasiIsaCochainMorphism}
            The $\caN=3$ supersymmetric extension of the morphism of cochain complexes defined in~\eqref{eq:spaceTimeAnsatzBV} is a quasi-isomorphism of cochain complexes under the assumption of $\IR^4$-triviality.\footnote{See around~\eqref{eq:ambitwistorGauge} for the notion of $\IR^4$-triviality.}
        \end{proposition}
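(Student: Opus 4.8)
The plan is to promote the injective cochain map $\sfe$ of \cref{prop:injectiveCochainMorphism} to a quasi-isomorphism by computing the cohomologies of both complexes explicitly and checking that $\sfe$ induces an isomorphism in each degree. First I would pin down $H^\bullet(\frL_{\rm CR,\,tw,\,red})$. By \cref{prop:quasiIsoTwisted,prop:fullQuasiIso,prop:WittenGauge} this equals $H^\bullet(\frL_{\rm CR})$, so it suffices to compute $\bar\partial_{\rm CR}$-cohomology on $F$. Under $\IR^4$-triviality, the gauge \eqref{eq:ambitwistorGauge} removes all dependence along the $\IC P^1\times\IC P^1$ fibres and the problem reduces fibrewise to Dolbeault cohomology on $\IC P^1\times\IC P^1$ valued in the line bundles appearing in the homogeneity expansion in $\lambda_{\dot\alpha}$ and $\mu_\alpha$; the residual dependence on $x^{\alpha\dot\alpha}$ (and, in the $\caN=3$ case, on $\eta^{\dot\alpha}_i,\theta^{i\alpha}$) reassembles the super-Yang--Mills fields on $\IR^4$. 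This is the standard Penrose--Ward setup: \cref{prop:quasi-iso-first-cohomologies} already identifies $H^0(\frL_{\rm CR})$ with the constant gauge parameters and $H^1(\frL_{\rm CR})$ with the linearised on-shell super-Yang--Mills fields modulo linearised gauge, and the remaining groups follow, either by continuing the fibrewise computation (Serre duality on $\IC P^1\times\IC P^1$) or, more quickly, from the cyclic Batalin--Vilkovisky structure supplied by the action \eqref{eq:twistedCRCSActionBV}: $H^2(\frL_{\rm CR,\,tw,\,red})$ is then dual to $H^1$ (antifields of the fields) and $H^3$ dual to $H^0$ (antifields of the ghosts).

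Second, the cohomology of the source $\frL_{\rm YM_1}$ of \eqref{eq:DGLAFirstOrderYM} is the familiar Batalin--Vilkovisky cohomology of first-order Yang--Mills theory, which by the semi-classical equivalence to the second-order formulation noted after \eqref{eq:firstOrderYMActionBV} is: $H^0$ the constants, $H^1$ the linearised on-shell gauge fields modulo gauge (on which the auxiliary $B$ is fixed to $f^{\rm lin}$), and, by cyclicity of \eqref{eq:firstOrderYMActionBV}, $H^2$ and $H^3$ dual to $H^1$ and $H^0$. Thus the two complexes have abstractly isomorphic cohomology, and the content of the proposition is that $\sfe$ realises the isomorphism. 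In degree $0$ this is immediate since $\sfe(c)=C=c$ is the identity on the constant functions. In degree $1$, because $\sfe$ is an injective cochain map, $\bar\partial_{\rm CR,\,tw,\,red}\sfe(A_{\alpha\dot\alpha},B_{\dot\alpha\dot\beta},B_{\alpha\beta})=0$ is automatically equivalent to the linearised first-order Yang--Mills equations on $(A,B)$; the degree-$1$ part of $\sfe$ in \eqref{eq:spaceTimeAnsatzBV} is precisely the inverse Penrose--Ward transform, and one verifies that under $\IR^4$-triviality every $\bar\partial_{\rm CR,\,tw,\,red}$-closed $(0,1)$-form is cohomologous to one of exactly this form — the expansion in $\theta^i\eta_i$ truncates at cubic order and closedness pins the subleading coefficients to the displayed derivatives of $B$ — while the classical injectivity of Penrose--Ward (a space-time solution that is pure gauge on the CR ambitwistor space was already pure gauge) gives injectivity of $\sfe$ on $H^1$. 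Hence $\sfe$ is an isomorphism on $H^0$ and $H^1$, and since $\sfe$ preserves the cyclic pairing by \cref{eq:cyclicityPreservingProperty} and this pairing is non-degenerate on cohomology, compatibility with the induced duality pairings forces $\sfe$ to be an isomorphism on $H^2$ and $H^3$ as well. All explicit formulas are written for the R-symmetry singlets (the gluons); the full $\caN=3$ statement follows by covariance under supersymmetry.

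The main obstacle is the degree-$1$ surjectivity: establishing that the explicit ansatz \eqref{eq:spaceTimeAnsatzBV} exhausts the cohomology, equivalently that the infinite tower of higher $\IC P^1\times\IC P^1$ modes not lying in $\mathrm{im}(\sfe)$ is $\bar\partial_{\rm CR,\,tw,\,red}$-exact. This rests on Bott vanishing on the fibres — $H^1(\IC P^1,\mathcal{O}(m))=0$ for $m\geq-1$ and $H^0(\IC P^1,\mathcal{O}(m))=0$ for $m<0$, combined appropriately on the product — and it is this input that dictates both the precise coefficients and the cubic truncation in $\theta^i\eta_i$ of the ansatz. A convenient way to package the whole argument is via the cokernel of $\sfe$: since $\sfe$ is injective on cochains, it is a quasi-isomorphism if and only if $\mathrm{coker}(\sfe)$ — the auxiliary-field tower that is integrated out in \cref{thm:homotopy-transfer} — is acyclic, and the fibrewise Bott vanishing is exactly that acyclicity.
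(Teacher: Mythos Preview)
Your approach is essentially the paper's: reduce to $H^\bullet(\frL_{\rm CR})$ via \cref{prop:quasiIsoTwisted} and \cref{prop:WittenGauge}, invoke \cref{prop:quasi-iso-first-cohomologies} for degrees $0$ and $1$, and then use the cyclic pairing (the paper phrases this as a Hodge--Kodaira decomposition compatible with the inner product) to obtain degrees $2$ and $3$. Where the paper defers the degree-$1$ surjectivity to a ``direct but lengthy computation'' you supply the underlying geometric mechanism --- Bott vanishing on the $\IC P^1\times\IC P^1$ fibres, equivalently acyclicity of the cokernel of $\sfe$ --- which is precisely the content of that computation; both arguments share the same infinite-dimensional caveat in the duality step, which the paper flags and addresses via \cref{rem:inf-dim}.
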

        
        \begin{proof}
            We note that $H^\bullet(\frL_{\rm CR,\,tw,\,red})\cong H^\bullet(\frL_{\rm CR})$ by \cref{prop:quasiIsoTwisted} and \cref{prop:WittenGauge}. By~\cref{prop:quasi-iso-first-cohomologies}, we have additionally that $H^p(\frL_{\rm CR})\cong H^p(\frL_{\rm YM_1})$ for $p=0,1$. Therefore,
            \begin{equation}\label{eq:isomorphies-to-show}
                H^p(\frL_{\rm CR,\,tw,\,red})\ \cong\ H^p(\frL_{\rm YM_1})
            \end{equation}
            for $p=0,1$. It now remains to show that~\eqref{eq:isomorphies-to-show} also holds for $p=2,3$ and that the $\caN=3$ supersymmetric extension of $\sfe$ descends to an isomorphism on cohomology. Both of these statements can be shown by a direct but lengthy computation. 
            
            Alternatively, we can invoke the existence\footnote{See e.g.~\cite{Chuang:0810.2393,Doubek:2017naz}.} of a Hodge--Kodaira decomposition\footnote{See \cref{app:homologicalPerturbations}.} compatible with the natural inner product structure on $\frL_{\rm YM_1}$ and $\frL_{\rm CR,\,tw,\,red}$, the latter of which pairs elements of degree~0 and~3 as well as elements of degree~1 and~2, respectively. This non-degenerate pairing descends to a non-degenerate pairing on the cohomologies, and we obtain the isomorphisms~\eqref{eq:isomorphies-to-show} for $p=2,3$ from those for $p=0,1$ in this manner.\footnote{Strictly speaking, this argument needs to be refined since we are working with infinite-dimensional vector spaces; see~\cref{rem:inf-dim} for further details on this point.}
            
            The fact that the $\caN=3$ supersymmetric extension of $\sfe$ descends to an isomorphism on cohomologies follows similarly. Evidently, it descends to an injection on the cohomologies but it remains to show that this is a surjection as well.\footnote{A priori, this is not clear, as the cohomology groups are infinite-dimensional vector spaces.} However, a direct computation shows surjectivity $H^p(\frL_{\rm YM_1})\rightarrow H^p(\frL_{\rm CR,\,tw,\,red})$ for $p=0,1$, and surjectivity for $p=2,3$ can again be gleaned from the cyclic Hodge--Kodaira decomposition.
        \end{proof}
        
        \paragraph{Equivalence theorems.}
        Let us now promote $\sfe$ to a full quasi-isomorphism of differential graded Lie algebras, which take the following form.\footnote{See~\cite{Jurco:2018sby} for our conventions of quasi-isomorphism of $L_\infty$-algebras.} Given two differential graded Lie algebras $\frL^{(1)}$ and $\frL^{(2)}$ with differentials $\mu^{(1)}_1$ and $\mu^{(2)}_1$ and binary products $\mu^{(1)}_2$ and $\mu^{(2)}_2$, a \uline{weak morphism}
        \begin{subequations}\label{eq:LinftyMorphism}
            \begin{equation}
                \sfE\,:\,\frL^{(1)}\ \rightarrow \ \frL^{(2)}
            \end{equation} 
            consists of a collection of maps $\sfE_i:\frL^{(1)}\times\cdots\times\frL^{(1)}\rightarrow\frL^{(2)}$ of degree $1-i$ for $i=1,2,3,\ldots$ which satisfy
            \begin{equation}
                \begin{aligned}
                    &\sum_{j+k=i}\sum_{\sigma\in\overline{\rm Sh}(j;i)}~(-1)^{k}\chi(\sigma;\ell_1,\ldots,\ell_i)\sfE_{k+1}\big(\mu^{(1)}_j(\ell_{\sigma(1)},\ldots,\ell_{\sigma(j)}),\ell_{\sigma(j+1)},\ldots,\ell_{\sigma(i)}\big)
                    \\
                    &=\ \sum_{j=1}^i\frac{1}{j!}\sum_{k_1+\cdots+k_j=i}\sum_{\sigma\in\overline{\rm Sh}(k_1,\ldots,k_{j-1};i)}\chi(\sigma;\ell_1,\ldots,\ell_i)\zeta(\sigma;\ell_1,\ldots,\ell_i)
                    \\
                    &\kern1cm\times\mu^{(2)}_j\big(\sfE_{k_1}\big(\ell_{\sigma(1)},\ldots,\ell_{\sigma(k_1)}\big),\ldots,\sfE_{k_j}\big(\ell_{\sigma(k_1+\cdots+k_{j-1}+1)},\ldots,\ell_{\sigma(i)}\big)\big)
                \end{aligned}
            \end{equation}
            for all $\ell_1,\ell_2,\ell_3,\ldots\in\frL^{(1)}$, where the sum is over unshuffles, $\chi(\sigma;\ell_1,\ldots,\ell_i)$ is the Koszul sign, and $\zeta(\sigma;\ell_1,\ldots,\ell_i)$ for a $(k_1,\ldots, k_{j-1};i)$-unshuffle $\sigma$ is defined as
            \begin{equation}
                \zeta(\sigma;\ell_1,\ldots,\ell_i)\ \coloneqq\ (-1)^{\sum_{1\leq m<n\leq j}k_mk_n+\sum_{m=1}^{j-1}k_m(j-m)+\sum_{m=2}^j(1-k_m)\sum_{k=1}^{k_1+\cdots+k_{m-1}}|\ell_{\sigma(k)}|}~.
            \end{equation}
        \end{subequations}
        A weak morphisms becomes a quasi-isomorphism if and only if $\sfE_1$ descends to an isomorphism on cohomology. Note that~\eqref{eq:LinftyMorphism} are the defining relations for a \uline{morphism of $L_\infty$-algebras} if we allow for higher products $\mu^{(1)}_i$ and $\mu^{(2)}_i$ with $i=3,4,\ldots$. See e.g.~\cite{Jurco:2018sby,Jurco:2019bvp} for details. 
        
        Returning to our differential graded Lie algebras $\frL_{\rm CR,\,tw,\,red}$ and $\frL_{\rm YM_1}$ defined in ~\eqref{eq:DGLAFirstOrderYM} and~\eqref{eq:SYMDGLATwistedReduced}, we take $\sfE$ with only $\sfE_1$, $\sfE_2$, and $\sfE_3$ non-zero and given by
        \begin{equation}\label{eq:DGLAMorphism}
            \sfE\ \coloneqq\ \sfe|_{\partial_{\alpha\dot\alpha}\to\nabla_{\alpha\dot\alpha}}
            \ewith
            \nabla_{\alpha\dot\alpha}\ =\ \partial_{\alpha\dot\alpha}+[A_{\alpha\dot\alpha},-]
        \end{equation}
        with $\sfe$ as defined in~\eqref{eq:spaceTimeAnsatzBV}.\footnote{Note that this morphism is reminiscent of the `on-shell' expansions~\cite{Popov:2004rb,Saemann:2004tt,Popov:2004nk,Popov:2005uv,Lechtenfeld:2005xi} in the self-dual sector, although these references only discuss gauge potentials.} Evidently, $\sfE_1=\sfe$. We have now all the ingredients to state and prove the first central result.\footnote{In the self-dual sector and in the fully complex setting, the Penrose--Ward transform was shown to be a quasi-isomorphism using spans of $L_\infty$-algebras and homotopy transfer in~\cite{JalaliFarahani:2023sfq}.}
        
        \begin{theorem}\label{thm:equivalence}
            Let us assume $\IR^4$-triviality.\footnote{See around~\eqref{eq:ambitwistorGauge} for the notion of $\IR^4$-triviality.} Then, the $\caN=3$ supersymmetric extension of the map $\sfE$ defined in~\eqref{eq:DGLAMorphism} is a cyclic-structure preserving weak quasi-isomorphism between the differential graded Lie algebras~$\frL_{\rm YM_1}$ and~$\frL_{\rm CR,\,tw,\,red}$ defined in~\eqref{eq:SYMDGLATwistedReduced} and~\eqref{eq:DGLAFirstOrderYM}. Put differently, the twisted CR holomorphic Chern--Simons theory defined by the Batalin--Vilkovisky CR ambitwistor action~\eqref{eq:twistedCRCSActionBV} is semi-classically equivalent to $\caN=3$ supersymmetric Yang--Mills theory defined by the Batalin--Vilkovisky space-time action~\eqref{eq:firstOrderYMActionBV}
        \end{theorem}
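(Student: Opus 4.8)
The plan is to promote the cochain-level result to the full differential graded Lie algebra level and then read off the cyclic structures. By \cref{prop:quasiIsaCochainMorphism}, the $\caN=3$ extension of $\sfE_1=\sfe$ is already a quasi-isomorphism of the underlying cochain complexes of $\frL_{\rm YM_1}$ and $\frL_{\rm CR,\,tw,\,red}$ (under $\IR^4$-triviality, the infinite-dimensional subtleties being handled as in \cref{prop:quasiIsaCochainMorphism}). Hence it suffices to establish two things: first, that the collection $\sfE=(\sfE_1,\sfE_2,\sfE_3)$ obtained from $\sfe$ by the covariantisation $\partial_{\alpha\dot\alpha}\to\nabla_{\alpha\dot\alpha}$ in~\eqref{eq:DGLAMorphism} satisfies the weak-morphism identities~\eqref{eq:LinftyMorphism} — with $\mu_n=0$ for $n\geq 3$ on both sides, since $\frL_{\rm YM_1}$ and $\frL_{\rm CR,\,tw,\,red}$ are differential graded Lie algebras; and second, that $\sfE$ is compatible with the cyclic pairings. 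Given these, the theorem follows from the standard fact that a cyclic weak quasi-isomorphism of $L_\infty$-algebras induces equality of tree-level S-matrices, i.e.~semi-classical equivalence. As elsewhere in the paper, everything is done in the R-symmetry-singlet (gluon) sector, the full $\caN=3$ statement following by supercovariance.

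For the morphism identities I would first unpack $\sfE=\sfe|_{\partial\to\nabla}$: each covariant derivative $\nabla_{\alpha\dot\alpha}=\partial_{\alpha\dot\alpha}+[A_{\alpha\dot\alpha},-]$ in the ansatz~\eqref{eq:spaceTimeAnsatzBV} contributes, besides the linear piece already in $\sfe$, one additional power of the gauge potential; collecting terms by the number of such insertions yields $\sfE_2$ (single insertion) and $\sfE_3$ (double insertion). Because $\sfe$ involves at most two derivatives — one on the $B$-fields at order $(\theta^i\eta_i)^2$ and two at order $(\theta^i\eta_i)^3$, one on the antifields at order $(\theta^i\eta_i)^3$ — the covariantisation terminates: $\sfE_k=0$ for $k\geq 4$, $\sfE_2$ is supported at fermionic orders $(\theta^i\eta_i)^{\geq 2}$, and $\sfE_3$ is supported purely at order $(\theta^i\eta_i)^3$ and purely along the $\hat v^\rmF$ leg. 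The arity-one identity is then \cref{prop:injectiveCochainMorphism}. The arity-two identity asks that the failure of $\sfe$ to be a strict morphism of graded Lie algebras, $\mu_2^{(2)}(\sfe\ell_1,\sfe\ell_2)-\sfe(\mu_2^{(1)}(\ell_1,\ell_2))$, be $\bar\partial_{\rm CR,\,tw,\,red}$-exact with $\sfE_2$ as primitive; the arity-three identity furnishes $\sfE_3$ as the next primitive. Unpacked, both are — order by order in $(\theta^i\eta_i)$, and after reducing the $\IC P^1\times\IC P^1$ fibre integrals with the Serre-duality formulas~\eqref{eq:SerreDuality} — equivalent to the statement that the twisted CR holomorphic Chern--Simons equation~\eqref{eq:CRCSEoMTwisted}, evaluated on $A=\sfE(A_{\alpha\dot\alpha},B)$, reproduces the first-order Yang--Mills equations of motion~\eqref{eq:firstOrderYMEOM} together with the Bianchi identity. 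Finally, the arity-$\geq 4$ identities reduce to ones already implied by the graded Jacobi identity on $\frL_{\rm YM_1}$ and by the truncation $(\theta^i\eta_i)^4=0$: every composite $\mu_2^{(2)}$ of two $\sfE$'s other than $\mu_2^{(2)}(\sfE_1,\sfE_1)$ vanishes because the fermionic orders add up past $(\theta^i\eta_i)^3$, or — for $\mu_2^{(2)}(\sfE_1|_{(\theta^i\eta_i)^0},\sfE_3)$ — because both legs lie along $\hat v^\rmF$, so the wedge product is zero.

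For the cyclic structures, the pairing on $\frL_{\rm YM_1}$ is the one read off from the Batalin--Vilkovisky action~\eqref{eq:firstOrderYMActionBV} (pairing degrees $0$ with $3$ and $1$ with $2$) and the pairing on $\frL_{\rm CR,\,tw,\,red}$ is $\int\Omega_{\rm CR,\,tw,\,red}\wedge\inner{-}{-}$. Compatibility of $\sfE$ with these amounts to the identities $\sum_{k_1+k_2=n}\inner{\sfE_{k_1}(\cdots)}{\sfE_{k_2}(\cdots)}^{\rm CR}=\delta_{n,2}\inner{-}{-}^{\rm YM_1}$. The $n=2$ case is that the cochain map $\sfe$ intertwines the two pairings, which is part of the cyclic Hodge--Kodaira data supplied in \cref{prop:quasiIsaCochainMorphism} (equivalently, it is \cref{eq:cyclicityPreservingProperty} restricted to the free theory), while the $n\geq 3$ cases vanish again by the $(\theta^i\eta_i)$-power counting. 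Equivalently, the computation underlying \cref{eq:cyclicityPreservingProperty}, repeated with $\partial_{\alpha\dot\alpha}\to\nabla_{\alpha\dot\alpha}$, shows that the pullback $S_{\rm BV}\circ\sfE$ of the CR ambitwistor action~\eqref{eq:twistedCRCSActionBV} is exactly the first-order Yang--Mills action~\eqref{eq:firstOrderYMActionBV}, which is the integrated form of cyclicity.

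I expect the arity-two and arity-three verification — the "direct but lengthy computation" — to be the main obstacle: one must check that the explicit covariantised ansatz~\eqref{eq:spaceTimeAnsatzBV}/\eqref{eq:DGLAMorphism} really does supply the homotopies making $\sfe$ an $L_\infty$-morphism, i.e.~that the nonlinear twisted CR Chern--Simons equation is solved, order by order in $(\theta^i\eta_i)$ and in the antifield sector, by the covariantised space-time data subject to the first-order Yang--Mills equations and Bianchi identities. This is kept tractable by restricting to the gluon sector and invoking $\caN=3$ supercovariance, by using that the reduced complex carries only the three $(0,1)$-forms $\hat v^\rmF,\hat v^\rmL,\hat v^\rmR$ (so $\Omega^{0,\geq 4}_{\rm CR,\,tw,\,red}=0$ and the fermionic expansion truncates), and by reducing all $\IC P^1\times\IC P^1$ integrations to~\eqref{eq:SerreDuality}. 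An alternative packaging of the same computation, which additionally yields \cref{thm:homotopy-transfer}, is to present $\sfE$ as the inclusion part of a strong deformation retract of $\frL_{\rm CR,\,tw,\,red}$ onto $\frL_{\rm YM_1}$ and check that the induced homotopy transfer generates no higher brackets on $\frL_{\rm YM_1}$.
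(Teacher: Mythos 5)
Your proposal follows essentially the same route as the paper's proof: the paper likewise establishes the weak-morphism relations~\eqref{eq:LinftyMorphism} for $\sfE=(\sfE_1,\sfE_2,\sfE_3)$ by a direct (lengthy but straightforward) computation, obtains the quasi-isomorphism property because $\sfE_1=\sfe$ descends to an isomorphism on cohomology via \cref{prop:quasiIsaCochainMorphism}, and deduces cyclicity from the $\caN=3$ supersymmetric extension of \cref{eq:cyclicityPreservingProperty}. Your additional bookkeeping (how the covariantisation $\partial_{\alpha\dot\alpha}\to\nabla_{\alpha\dot\alpha}$ generates $\sfE_2,\sfE_3$ and why everything truncates by $(\theta^i\eta_i)$-power counting) is a reasonable sketch of that computation and is consistent with the paper.
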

        
        \begin{proof}
            A lengthy but straightforward computation shows that the $\caN=3$ supersymmetric extension of the map $\sfE$ defines a weak morphism of differential graded Lie algebras in the sense of~\eqref{eq:LinftyMorphism}. Since $\sfE_1=\sfe$ descends to an isomorphism on cohomology, this shows that it is a quasi-isomorphism. Finally, the cyclicity-preserving property follows from the $\caN=3$ supersymmetric extension of \cref{eq:cyclicityPreservingProperty}.
        \end{proof}

        \noindent 
        Note that if we let $\caF_{\rm CR,\,tw,\,red}$ and $\caF_{\rm YM_1}$ be the equations of motion of the twisted CR holomorphic Chern--Simons theory and the first-order $\caN=3$ supersymmetric Yang--Mills theory in their Batalin--Vilkovisky forms, then the quasi-isomorphism relations~\eqref{eq:LinftyMorphism} amount to the fact that $\caF_{\rm CR,\,tw,\,red}\circ\sfE=\sfE\circ\caF_{\rm YM_1}$, that is, the embedding commutes with applying the equations of motion.
        
        Note also that the above theorem does not imply that $\frL_{\rm YM_1}$ is obtained from $\frL_{\rm CR,\,tw,\,red}$ by integrating out some of the fields: the latter corresponds to a homotopy transfer, which is a stronger requirement than that of a quasi-isomorphism; see e.g.~\cite{JalaliFarahani:2023sfq} for a discussion of this point. We have, however, also the following result.
        
        \begin{theorem}\label{thm:homotopy-transfer}
            There is a quasi-isomorphism from the differential graded Lie algebra $\frL_{\rm CR,\,tw,\,red}$ defined in~\eqref{eq:SYMDGLATwistedReduced} to the differential graded Lie algebra $\frL_{\rm YM_1}$ defined in~\eqref{eq:DGLAFirstOrderYM} which is computed by homotopy transfer. Put differently, integrating out the fields complementary to the image of the $\caN=3$ supersymmetric extension of the embedding $\sfe$ defined in~\eqref{eq:spaceTimeAnsatzBV} in the CR ambitwistor action~\eqref{eq:twistedCRCSActionBV} yields the $\caN=3$ supersymmetric extension of the space-time action~\eqref{eq:firstOrderYMActionBV}.
        \end{theorem}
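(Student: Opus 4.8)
The plan is to realise the asserted quasi-isomorphism as the output of homotopy transfer applied to an explicit Hodge--Kodaira decomposition of the cochain complex underlying $\frL_{\rm CR,\,tw,\,red}$. First I would complete the injective cochain map $\sfe$ of~\eqref{eq:spaceTimeAnsatzBV} to a strong deformation retract: a degree-$0$ projection $p:\frL_{\rm CR,\,tw,\,red}\to\frL_{\rm YM_1}$ and a degree-$(-1)$ contracting homotopy $h$ on $\frL_{\rm CR,\,tw,\,red}$, subject to $p\circ\sfe=\sfid$, $\sfid-\sfe\circ p=\bar\partial_{\rm CR,\,tw,\,red}\,h+h\,\bar\partial_{\rm CR,\,tw,\,red}$ and the side conditions $h^2=0$, $h\circ\sfe=0$, $p\circ h=0$. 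The homotopy $h$ is built from the fibrewise Green's operator for $\bar\partial$ on the $\IC P^1\times\IC P^1$ factor of $F$ --- the very operator behind the Serre-duality pairings~\eqref{eq:SerreDuality} and the Penrose--Ward transform --- promoted to the twisted differential~\eqref{eq:CRantiholomorphicDifferentialTwistedReduced} by a finite Neumann-series correction in the weight-raising ($\theta^i\eta_i$-linear) part of $\bar\partial_{\rm CR,\,tw,\,red}$, and extended over the fermionic coordinates. Under the assumption of $\IR^4$-triviality, \cref{prop:quasiIsaCochainMorphism} guarantees that $\sfe$ induces an isomorphism on cohomology, and then so does $p$, so this is genuine strong-deformation-retract data.

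Given this data I would invoke the homotopy transfer theorem (the formulas are collected in~\cref{app:homologicalPerturbations}). It produces a transferred $L_\infty$-structure on $\frL_{\rm YM_1}$ --- the usual sum over planar binary rooted trees with the CR bracket $[-,-]$ at internal vertices, $h$ on internal edges, $\sfe$ on leaves and $p$ at the root --- together with a weak $L_\infty$-quasi-isomorphism $\frL_{\rm CR,\,tw,\,red}\to\frL_{\rm YM_1}$ extending $p$ and its inverse extending $\sfe$. Because the retract is compatible with the cyclic inner products (this compatibility being exactly what was exploited in \cref{prop:quasiIsaCochainMorphism}), the transferred cyclic $L_\infty$-algebra is the one encoded by the action obtained from~\eqref{eq:twistedCRCSActionBV} by solving the equations of motion of the fields complementary to $\mathrm{im}\,\sfe$ perturbatively and resubstituting, i.e.~by ``integrating out'' those fields. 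The transferred unary bracket is $\mu_1$ of $\frL_{\rm YM_1}$ simply because $\sfe$ is a cochain map, and the transferred binary bracket is $p\circ[\sfe(-),\sfe(-)]=\mu_2$ of $\frL_{\rm YM_1}$ by the cubic-vertex computation already performed in \cref{eq:cyclicityPreservingProperty}. Matching the weak $L_\infty$-quasi-isomorphism extending $\sfe$ with the explicit map $\sfE$ of~\eqref{eq:DGLAMorphism} --- both being determined by the same retract data and the same ansatz --- ties the present statement to \cref{thm:equivalence}.

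The substantive content is that the transferred action is \emph{literally}~\eqref{eq:firstOrderYMActionBV} rather than merely quasi-isomorphic to it, which requires all transferred brackets $\mu_n$ with $n\geq3$ --- equivalently, all tree contributions with at least three leaves --- to vanish. I would establish this by a weight argument in the fermionic tower: every component of $\sfe$ in~\eqref{eq:spaceTimeAnsatzBV} is a polynomial of degree at most three in $\theta^i\eta_i$, the volume form $\Omega_{\rm CR,\,tw}$ in~\eqref{eq:twistedCRCSActionBV} absorbs exactly the $(\theta^i\eta_i)^3$-component via the Berezin integrations~\eqref{eq:BerezinIntegration}, and the CR bracket and the homotopy $h$ each act on this integer grading (and on the homogeneity in $\lambda$, $\mu$) in a monotone, controlled fashion; tracking these gradings along a tree then shows that any tree with three or more leaves either overshoots $(\theta^i\eta_i)^3$, hence dies under Berezin integration, or feeds an intermediate state into $p\circ h=0$, so that the tree sum already truncates after the cubic vertex. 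This ``one-step truncation'' is the homological-algebra counterpart of the well-known fact that the cohomological Penrose--Ward transform terminates. I expect precisely this weight-and-homogeneity bookkeeping --- in particular the verification that $h$ together with the side conditions genuinely truncates, rather than merely bounds, the tree expansion --- to be the main obstacle; the construction of the retract data and the identification of $\mu_1$ and $\mu_2$ are mechanical given the earlier propositions.
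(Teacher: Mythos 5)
Your overall skeleton matches the paper's: complete the injective cochain map $\sfe$ to a special deformation retract (the paper does this via \cref{prop:contractingHomotopyAlwaysExists} and then exhibits the projection $\sfp$ explicitly through Penrose-type integral formulas involving a Green-type operator $\sfG=\sfb/\BBox$, close in spirit to your fibrewise Green's operator), invoke homotopy transfer, and identify the transferred $\mu_1$ and $\mu_2$ with those of $\frL_{\rm YM_1}$ (the paper gets $\mu_2=\sfp([\sfe(-),\sfe(-)])$ from the weak-morphism identity for $\sfE$ together with $\sfp\circ\sfE_2=0=\sfp\circ\sfE_3$, rather than from cyclicity, but that difference is minor).

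The genuine gap is exactly where you expect it, and your proposed mechanism for closing it does not work. The vanishing of the transferred $\mu_n$ for $n\geq3$ cannot be obtained from the $\theta^i\eta_i$-weight bookkeeping you sketch: the weight of a bracket is the \emph{sum} of the weights of its arguments, and the components of $\sfe$ start at weight zero (e.g.\ the $A_{\alpha\dot\alpha}$ term), so trees with three or more leaves still populate all weights $0,\dots,3$ and are not killed by the Berezin integration; the homotopy preserves the fermionic weight (as the paper notes, $\sfG$ does not alter the dependence on the CR holomorphic fermionic coordinates); and the explicit projection~\eqref{eq:PenroseIntegralFormulas} extracts components of every weight $0$ through $3$, not only the top one. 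Moreover, in the transfer formulas~\eqref{eq:homotopTransferHigherProducts} the homotopy $\sfh$ sits only on internal edges feeding into brackets, and $\sfp$ only at the root, so the side condition $\sfp\circ\sfh=0$ never appears inside a tree and cannot truncate the expansion — in contrast to the Witten-gauge reduction (\cref{prop:WittenGauge}), where truncation does occur because $(\sfh\circ\mu_2)(\sfe(-),\sfe(-))=0$; here $\sfh[\sfe(-),\sfe(-)]\neq0$ in general, which is precisely why $\sfE_2\neq0$. The vanishing of the higher brackets is a non-trivial cancellation, and the paper establishes it differently: either by a lengthy direct computation with an explicit $\sfh$ (not carried out), or by the symmetry argument that translation and conformal invariance together with the first-order field content admit no quartic or higher vertices, sharpened in \cref{rmk:alternativeProof} to the uniqueness of the interacting superconformal first-order $\caN=4$ theory, which forces the transferred algebra to coincide with $\frL_{\rm YM_1}$. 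You would need to either supply that cancellation honestly or replace your weight argument by such a symmetry/uniqueness argument.
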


        \begin{proof}
            Because the $\caN=3$ supersymmetric extension of $\sfe$ defined in~\eqref{eq:spaceTimeAnsatzBV} is an injective quasi-isomorphism between the cochain complexes underlying $\frL_{\rm YM_1}$ and $\frL_{\rm CR,\,tw,\,red}$, by \cref{prop:contractingHomotopyAlwaysExists}, we have a special deformation retract\footnote{i.e.~\eqref{eq:deformationRetract} with the conditions~\eqref{eq:specialDeformationRetract} satisfied}
            \begin{equation}\label{eq:anSDR}
                \begin{tikzcd}
                    \ar[loop,out=170,in=190,distance=20,"\sfh" left] (\frL_{\rm CR,\,tw,\,red},\bar\partial_{\rm CR,\,tw,\,red})\arrow[r,shift left]{}{\sfp} & (\frL_{\rm YM_1},\mu_1)\arrow[l,shift left]{}{\sfe}\,,
                \end{tikzcd}
            \end{equation}
            where $\mu_1$ is the differential of the cochain complex in~\eqref{eq:DGLAFirstOrderYM} and $\bar\partial_{\rm CR,\,tw,\,red}$ as defined in~\eqref{eq:CRantiholomorphicDifferentialTwistedReduced}. 

            To construct the projection $\sfp$ explicitly, as for $\sfe$, we restrict our discussion to the R-singlets for simplicity, that is, to the gluonic sector. The $\caN=3$ supersymmetric extension follows straightforwardly. In particular, we expand $\caA\in\Omega^{0,\bullet}_{\rm CR,\,tw,\,red}(F)\otimes\frg$ as $\caA=C+A+A^++C^+$ with
            \begin{equation}
                \begin{aligned}
                    C\ &=\ c+\cdots~,
                    \\
                    A\ &=\ \hat v^\rmF\left\{A^\rmF_{\alpha\dot\alpha}\mu^\alpha\lambda^{\dot\alpha}-\theta^i\eta_i\left(\frac{B^\rmF_{\dot\alpha\dot\beta}\lambda^{\dot\alpha}\hat\lambda^{\dot\beta}}{|\lambda|^2}-\frac{B^\rmF_{\alpha\beta}\mu^\alpha\hat\mu^\beta}{|\mu|^2}\right)\right\}
                    \\
                    &\kern1.5cm+\hat v^\rmL\left\{-\theta^i\eta_i\frac{A^\rmL_{\alpha\dot\alpha}\hat\mu^\alpha\lambda^{\dot\alpha}}{|\mu|^2}-(\theta^i\eta_i)^2\frac{3B^\rmL_{\alpha\beta}\hat\mu^\alpha\hat\mu^\beta}{2|\mu|^4}\right\}
                    \\
                    &\kern1.5cm+\hat v^\rmR\left\{\theta^i\eta_i\frac{A^\rmR_{\alpha\dot\alpha}\mu^\alpha\hat\lambda^{\dot\alpha}}{|\lambda|^2}-(\theta^i\eta_i)^2\frac{3B^\rmR_{\dot\alpha\dot\beta}\hat\lambda^{\dot\alpha}\hat\lambda^{\dot\beta}}{2|\lambda|^4}\right\}+\cdots~,
                    \\
                    A^+\ &\coloneqq\ \hat v^\rmL\wedge\hat v^\rmR\left\{(\theta^i\eta_i)^2\left(\frac{B^{+\rmL\rmR}_{\dot\alpha\dot\beta}\lambda^{\dot\alpha}\hat\lambda^{\dot\beta}}{|\lambda|^2}-\frac{B^{+\rmL\rmR}_{\alpha\beta}\mu^\alpha\hat\mu^\beta}{|\mu|^2}\right)+(\theta^i\eta_i)^3\frac{A^{+\rmL\rmR}_{\alpha\dot\alpha}\hat\mu^\alpha\hat\lambda^{\dot\alpha}}{|\mu|^2|\lambda|^2}\right\}
                    \\
                    &\kern1cm+\hat v^\rmF\wedge\hat v^\rmL\left\{-\theta^i\eta_iB^{+\rmF\rmL}_{\dot\alpha\dot\beta}\lambda^{\dot\alpha}\lambda^{\dot\beta}-(\theta^i\eta_i)^2\frac{5\hat\mu^\alpha\lambda^{\dot\alpha}}{6|\mu|^2}A^{+\rmF\rmL}_{\alpha\dot\alpha}\right\}
                    \\
                    &\kern1cm+\hat v^\rmF\wedge\hat v^\rmR\left\{\theta^i\eta_iB^{+\rmF\rmR}_{\alpha\beta}\mu^\alpha\mu^\beta-(\theta^i\eta_i)^2\frac{5\mu^\alpha\hat\lambda^{\dot\alpha}}{6|\lambda|^2}A^{+\rmF\rmR}_{\alpha\dot\alpha}\right\}+\cdots~,
                    \\
                    C^+\ &=\ \hat v^\rmF\wedge\hat v^\rmL\wedge\hat v^\rmR\,(\theta^i\eta_i)^3\frac43c^{+\rmF\rmL\rmR}+\cdots~, 
                \end{aligned}
            \end{equation}
            where all the coefficients $A^\rmF_{\alpha\dot\alpha},\ldots$ depend only on $x^{\alpha\dot\alpha}$ and the ellipses denote all the other terms that are possible in the $(\theta^i,\eta_i)$-expansion as well as in the Kaluza--Klein expansion along $\IC P^1\times\IC P^1$. The projection $\sfp$ will then project onto linear combinations of these coefficients whose explicit forms are determined by the requirement that $\sfp\circ\sfe=\sfid$. Explicitly, we may set
            \begin{subequations}\label{eq:PenroseIntegralFormulas}
                \begin{equation}
                    \resizebox{\hsize}{!}{$
                        \begin{gathered}
                            c\ \coloneqq\ \int{\rm vol}\,(\theta^i\eta_i)^3\,C~,
                            \\
                            \begin{pmatrix}
                                A_{\alpha\dot\alpha}
                                \\
                                B_{\dot\alpha\dot\beta}
                                \\
                                B_{\alpha\beta}
                            \end{pmatrix}
                            \ \coloneqq\ \int{\rm vol}
                            \begin{pmatrix}
                                (\theta^i\eta_i)^2\Big(\frac{A_\rmL\mu_\alpha\hat\lambda_{\dot\alpha}}{|\lambda|^2}-\frac{A_\rmR\hat\mu_\alpha\lambda_{\dot\alpha}}{|\mu|^2}\Big)+(\theta^i\eta_i)^3\Big(\frac{A_\rmF\hat\mu_\alpha\hat\lambda_{\dot\alpha}}{|\mu|^2|\lambda|^2}+V_\rmF\sfG(A)\mu_\alpha\lambda_{\dot\alpha}\Big)
                                \\
                                \theta^i\eta_i\sfG(\hat V_\rmR\intprod\bar\partial_{\rm CR,\,tw,\,red}A)\lambda_{\dot\alpha}\lambda_{\dot\beta}-(\theta^i\eta_i)^2\frac{\sfG(\hat V_\rmF\intprod\bar\partial_{\rm CR,\,tw,\,red}A))\lambda_{(\dot\alpha}\hat\lambda_{\dot\beta)}}{|\lambda|^2}+\theta^i\eta_i(1+\theta^i\eta_iV_\rmF)\sfG(\hat V_\rmR\intprod\bar\partial_{\rm CR,\,tw,\,red}A)\lambda_{\dot\alpha}\lambda_{\dot\beta}-(\theta^i\eta_i)^3\frac{V_\rmF\sfG(\hat V_\rmF\intprod\bar\partial_{\rm CR,\,tw,\,red}A)\lambda_{(\dot\alpha}\hat\lambda_{\dot\beta)}}{|\lambda|^2}
                                \\
                                \theta^i\eta_i\sfG(\hat V_\rmL\intprod\bar\partial_{\rm CR,\,tw,\,red}A)\mu_\alpha\mu_\beta+(\theta^i\eta_i)^2\frac{\sfG(\hat V_\rmF\intprod\bar\partial_{\rm CR,\,tw,\,red}A)\mu_{(\alpha}\hat\mu_{\beta)}}{|\mu|^2}+\theta^i\eta_i(-1+\theta^i\eta_iV_\rmF)\sfG(\hat V_\rmL\intprod\bar\partial_{\rm CR,\,tw,\,red}A)\mu_\alpha\mu_\beta+(\theta^i\eta_i)^3\frac{V_\rmF\sfG(\hat V_\rmF\intprod\bar\partial_{\rm CR,\,tw,\,red}A)\mu_{(\alpha}\hat\mu_{\beta)}}{|\mu|^2}
                            \end{pmatrix},
                            \\
                            \begin{pmatrix}
                                A^+_{\alpha\dot\alpha}
                                \\
                                B^+_{\dot\alpha\dot\beta}
                                \\
                                B^+_{\alpha\beta}
                            \end{pmatrix}
                            \ \coloneqq\ \int{\rm vol}
                            \begin{pmatrix}
                                \mu_\alpha\lambda_{\dot\alpha}A^+_{\rmL\rmR}+\theta^i\eta_i\Big(\frac{A^+_{\rmF\rmL}\mu_{\alpha}\hat\lambda_{\dot\alpha}}{|\lambda|^2}+\frac{A^+_{\rmF\rmR}\hat\mu_{\alpha}\lambda_{\dot\alpha}}{|\mu|^2}\Big)+\theta^i\eta_i\mu_\alpha\lambda_{\dot\alpha}V_\rmF A^+_{\rmL\rmR}+\theta^i\eta_i\Big\{\frac{(-1+\theta^i\eta_iV_\rmF)A^+_{\rmF\rmL}\mu_{\alpha}\hat\lambda_{\dot\alpha}}{|\lambda|^2}+\frac{(1+\theta^i\eta_iV_\rmF)A^+_{\rmF\rmR}\hat\mu_{\alpha}\lambda_{\dot\alpha}}{|\mu|^2}\Big\}
                                \\
                                \theta^i\eta_i\sfG(\hat V_\rmR\intprod A^+)\lambda_{\dot\alpha}\lambda_{\dot\beta}-(\theta^i\eta_i)^2\frac{\sfG(\hat V_\rmF\intprod A^+)\lambda_{(\dot\alpha}\hat\lambda_{\dot\beta)}}{|\lambda|^2}-\theta^i\eta_i\frac{A^+_{\rmL\rmR}\lambda_{(\dot\alpha}\hat\lambda_{\dot\beta)}}{|\lambda|^2}-(\theta^i\eta_i)^2\frac{A^+_{\rmF\rmL}\hat\lambda_{\dot\alpha}\hat\lambda_{\dot\beta}}{|\lambda|^4}
                                \\
                                \theta^i\eta_i\sfG(\hat V_\rmL\intprod A^+)\mu_\alpha\mu_\beta+(\theta^i\eta_i)^2\frac{\sfG(\hat V_\rmF\intprod A^+)\mu_{(\alpha}\hat\mu_{\beta)}}{|\mu|^2} -\theta^i\eta_i\frac{A^+_{\rmL\rmR}\mu_{(\alpha}\hat\mu_{\beta)}}{|\mu|^2}-(\theta^i\eta_i)^2\frac{A^+_{\rmF\rmR}\hat\mu_\alpha\hat\lambda_\beta}{|\mu|^4}
                            \end{pmatrix},
                            \\
                            c^+\ \coloneqq\ \int{\rm vol}\,(1+\theta^i\eta_iV_\rmF)C^+_{\rmF\rmL\rmR}
                        \end{gathered}
                    $}
                \end{equation}
                up to some irrelevant overall multiplicative constants and where\footnote{See also~\eqref{eq:reducedCRExpressions}.} 
                \begin{equation}
                    {\rm vol}\ \coloneqq\ v^\rmL\wedge\hat v^\rmL\wedge v^\rmR\wedge\hat v^\rmR\otimes\rmd^3\eta\,\rmd^3\theta
                \end{equation}
                as well as $A_\rmF\coloneqq\hat V_\rmF\intprod A$, $A_\rmL\coloneqq\hat V_\rmL\intprod A$, $A_\rmR\coloneqq\hat V_\rmR\intprod A$, $A^+_{\rmL\rmR}\coloneqq\hat V_\rmR\intprod V_\rmL\intprod A^+$, etc.~and with the basis vector fields and basis one-forms are as defined in~\eqref{eq:tangentBundleDistributionsCRSuperAmbitwistorSpaceTwisted},~\eqref{eq:cotangentBundleDistributionCRSuperAmbitwistorSpaceTwisted}, and~\eqref{eq:holomorphicCotangentBundleDistributionCRSuperAmbitwistorSpaceTwisted}. In addition, we may take $\sfG:\Omega_{\rm CR,\,tw,\,red}^{0,1}(F)\rightarrow\Omega_{\rm CR,\,tw,\,red}^{0,0}(F)$ to be Green operator for $\bar\partial_{\rm CR,\,tw,\,red}$ which would mean, however, fixing a metric on $F$.\footnote{Note that the appearance of $\sfG$ is not surprising as this follows from the general considerations; see the proof of \cref{prop:contractingHomotopyAlwaysExists}.} To avoid this, here we instead take it to be~\cite{Borsten:2022vtg}
                \begin{equation}
                    \sfG\ \coloneqq\ \frac{\sfb}{\BBox} 
                \end{equation}
                with
                \begin{equation}
                    \begin{gathered}
                        \sfb\ \coloneqq\ 8\hat V_\rmF\intprod\caL_{V_\rmF}\ =\ 8V_\rmF\hat V_\rmF\intprod~,\\
                        \BBox\ \coloneqq\ 8V_\rmF\hat V_\rmF\ =\ \wave+4(V_\rmF\hat V_\rmF-E_\rmW E_{\hat\rmW})~,~~~
                        \wave\ \coloneqq\ 2\partial^{\alpha\dot\alpha}\partial_{\alpha\dot\alpha}~.
                    \end{gathered}
                \end{equation}
            \end{subequations}
            It is not too difficult to see that $\sfp\circ\sfe=\sfid$ using the explicit form~\eqref{eq:spaceTimeAnsatzBV} of $\sfe$ up to some irrelevant overall multiplicative constants. It also follows that $\sfp$ is a cochain map by a direct calculation similar to what is done to prove \cref{prop:injectiveCochainMorphism} together with the identity $\sfG(\hat V_{\rmF,\rmL,\rmR}\intprod\bar\partial_{\rm CR,\,tw,\,red}A)=\hat V_{\rmF,\rmL,\rmR}\sfG(A)-\hat V_{\rmF,\rmL,\rmR}\intprod A$ which follows from $\big[\hat V_{\rmF,\rmL,\rmR},\frac{V_\rmF}{\BBox}\big]=0$. 

            Importantly, using the explicit expressions of $\sfE_2$ and $\sfE_3$ from~\eqref{eq:DGLAMorphism} and since $\sfG$ does not alter the dependence on the CR holomorphic fermionic coordinates, it now immediately follows that
            \begin{equation}\label{eq:higherMorphismProjections}
                \sfp\circ\sfE_2\ =\ 0\ =\ \sfp\circ\sfE_3~.
            \end{equation} 
            Furthermore, for $i=2$, the relation~\eqref{eq:homotopTransferHigherProducts} states that\footnote{The signs given as $\pm$ are irrelevant to our discussion.}
            \begin{equation}
                \begin{aligned}
                    &\sfe(\mu_2(\ell_1,\ell_2))\pm\sfE_2(\mu_1(\ell_1),\ell_2)\pm\sfE_2(\mu_1(\ell_2),\ell_1)
                    \\
                    &\kern1cm=\ \pm\bar\partial_{\rm CR,\,tw,\,red}(\sfE_2(\ell_1,\ell_2))+[\sfe(\ell_1),\sfe(\ell_2)]
                \end{aligned}
            \end{equation}
            for all $\ell_1,\ell_2\in\frL_{\rm YM_1}$ and with $\mu_2$ as given in~\eqref{eq:DGLAFirstOrderYM}. Upon applying $\sfp$ to this equation and using~\eqref{eq:higherMorphismProjections} as well as  the facts that $\sfp$ is a cochain map and that $\sfp\circ\sfe=\sfid$, we obtain
            \begin{equation}
                \mu_2(\ell_1,\ell_2)\ =\ \sfp([\sfe(\ell_1),\sfe(\ell_2)])~.
            \end{equation}
            This, however, is precisely the binary product~\eqref{eq:homotopTransferHigherProducts} obtained from homotopy transfer via the special deformation retract~\eqref{eq:anSDR}. 

            It remains to show that all higher products in~\eqref{eq:homotopTransferHigherProducts} vanish. This can be done by a lengthy direct computation, constructing an explicit $\sfh$ along the lines of~\cref{prop:contractingHomotopyAlwaysExists}. Alternatively, we can simply argue that our formalism preserves all space-time and gauge symmetries, and there are simply not quartic or higher interaction vertices that can be consistently constructed in $\frL_{\rm YM_1}$ from the field content which respect translation and conformal symmetry. See also~\cref{rmk:alternativeProof} for an alternative (and much shorter) proof that makes use of the uniqueness of maximally supersymmetric Yang--Mills theory.
        \end{proof}

        \begin{remark}\label{rmk:alternativeProof}
            Note that the last argument in the above proof can actually be extended, so that after the existence of the deformation retract~\eqref{eq:anSDR} is established, \cref{thm:homotopy-transfer} follows automatically. Indeed, the theory $\frL_\sfT$ obtained on the graded vector space $\frL_{\rm YM_1}$ by homotopy transfer has the same field content and kinematic terms as $\frL_{\rm YM_1}$. The quasi-isomorphism from \cref{thm:equivalence} implies that the tree-level scattering amplitudes of the theory $\frL_\sfT$ agrees with those of $\caN=3$ supersymmetric Yang--Mills theory; hence, the theory is not free. The setup and homotopy transfer preserve gauge and space-time symmetries. Moreover, the formalism is symmetric under the CPT-like symmetry exchanging undotted and dotted spinor indices, increasing the supersymmetry from $\caN=3$ to $\caN=4$. The superconformal formulation of an interacting $\caN=4$ supersymmetric Yang--Mills theory with the first-order field content, however is unique. Consequently, $\frL_\sfT$ has to agree with $\frL_{\rm YM_1}$.
        \end{remark}

        \begin{remark}
            The integral formulas~\eqref{eq:PenroseIntegralFormulas} can be understood as the `Dolbeault analogue' of Penrose's contour integral formulas in this CR ambitwistor setting. See e.g.~\cite{JSTOR:2990349} for analogous integral formulas in the `\v Cech formulation' for the second cohomology in the purely bosonic complex ambitwistor setting.
        \end{remark}
        
        \appendix
        \addappheadtotoc
        \appendixpage 
        
        \appendices
        
        \section{Homological considerations}
        
        \subsection{Quasi-isomorphic cochain complexes}
        
        \paragraph{Split complex supermanifolds.}
        Let $M$ be a complex manifold and $\pi:E\rightarrow M$ a holomorphic vector bundle. Using the definition of the vertical tangent bundle $V$ together with universality of the pullback, we obtain the short exact sequence
        \begin{equation}
            \begin{tikzcd}
                0\ar[r] & V\ar[r] & TE\ar[r] & \pi^*TM\ar[r] & 0~.
            \end{tikzcd}
        \end{equation}
        Its restriction $V|_M$ to $M$ can be identified with $E$ and so,
        \begin{equation}
            \begin{tikzcd}
                0\ar[r] & E\ar[r] & TE|_M\ar[r] & TM\ar[r] & 0~.
            \end{tikzcd}
        \end{equation}
        This sequence splits canonically by the zero section $M\rightarrow E$. Hence, we have the canonical identification
        \begin{equation}
            TE|_M\ \cong\ TM\oplus E~.
        \end{equation}
        Since $\pi$ is a homotopy equivalence, it follows that the unrestricted sequence also splits, though non-canonically, that is,
        \begin{equation}\label{eq:nonCanonicalSplitting}
            TE\ \cong\ \pi^*TM\oplus V~.
        \end{equation}
        Hence,
        \begin{equation}\label{eq:splitting}
            \mbox{$\bigwedge$}^{0,p}\,T^*E\ \cong\ \bigoplus_{r+s=p}\mbox{$\bigwedge$}^{0,r}\pi^*T^*M\otimes\mbox{$\bigwedge$}^{0,s}V^*~.
        \end{equation}
        
        Let $\bar\partial$ be the anti-holomorphic exterior derivative on $E[1]$, where $[1]$ denotes the (Gra{\ss}mann-)degree shift by one of the fibres of $E$, and let $\bar\partial_{\rm red}$ be the anti-holomorphic exterior derivative on $M$, respectively. The manifold $E[1]$ is what is known as a globally \uline{split complex supermanifold}.\footnote{Note that in the smooth category, every supermanifold is (non-canonically) globally split due to~\cite{JSTOR:1998201}. This is no longer the case in the complex category essentially because there is no holomorphic partition of unity.}
        
        \paragraph{Quasi-isomorphic cochain complexes.}
        For any split complex supermanifold $E[1]$, we now have the following result. It can be proved abstractly using spectral sequences (see e.g.~\cite{Movshev:2004ub}) but here we provide an elementary constructive proof as we shall need its extension to the differential graded Lie algebra in \cref{app:homologicalPerturbations}.
        
        \begin{proposition}\label{prop:quasiIsoSplitSupermanifolds} 
            The cochain complexes $(\Omega^{0,\bullet}(E[1]),\bar\partial)$ and $(\Omega^{0,\bullet}(M,\bigwedge^\bullet E^*),\bar\partial_{\rm red})$ are quasi-isomorphic with the quasi-isomorphism induced by coboundary transformations.
        \end{proposition}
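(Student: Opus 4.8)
My plan is to follow the constructive pattern of \cref{prop:quasiIsoTwisted} rather than the spectral-sequence route: I would exhibit an explicit \emph{coboundary transformation} $g$ that conjugates $\bar\partial$ on $E[1]$ into $\bar\partial_{\rm red}$ on $M$, so that $\omega\mapsto g^{-1}\omega$ is a cochain isomorphism (\emph{a fortiori} a quasi-isomorphism), and I would set it up so that the same $g$ transports the wedge product — which is precisely what the differential-graded-Lie refinement in \cref{app:homologicalPerturbations} will need. Concretely, the first step is to fix a splitting of the non-canonical sequence \eqref{eq:nonCanonicalSplitting}; because there is no holomorphic partition of unity this splitting exists only smoothly (it is the datum of a connection on $E$, equivalently of smooth local frames for $E[1]$ in which the fibrewise transition data reduce to the linear holomorphic cocycles of $E$), and it supplies the bigrading \eqref{eq:splitting}, letting me write any $\omega\in\Omega^{0,\bullet}(E[1])$ as a sum of components of definite vertical degree.

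Next I would take $g$ to be the exponential of the globally defined vector field generating this splitting — in direct analogy with $g=\rme^{\tilde\theta^i\eta_iE_\rmW+\theta^i\tilde\eta_iE_{\hat\rmW}}$ of \eqref{eq:twist} — and check that $g\,\bar\partial\,g^{-1}=\bar\partial_{\rm red}$: here the hypothesis that $E[1]$ is split is exactly what guarantees that the obstruction to removing the `mixed' horizontal/vertical part of $\bar\partial_{E[1]}$ is a coboundary, hence that such a global $g$ exists. Just as in the proof of \cref{prop:quasiIsoTwisted}, one then reads off degree by degree that $g^{-1}$ sends $\ker\bar\partial$ to $\ker\bar\partial_{\rm red}$ and $\mathrm{im}\,\bar\partial$ to $\mathrm{im}\,\bar\partial_{\rm red}$, which is the asserted isomorphism of complexes; and because $g^{-1}$ acts fibrewise through holomorphic data it preserves holomorphic dependence on the odd fibre coordinates. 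If one works instead with the full (unrestricted) Dolbeault complex of $E[1]$, this should be supplemented by a fibrewise contracting homotopy for $\bar\partial$ — a $\bar\partial$-Poincaré lemma along the fibres of $E[1]\to M$ — annihilating the components of positive vertical Dolbeault degree in \eqref{eq:splitting} and collapsing the complex onto $\Omega^{0,\bullet}(M,\bigwedge^\bullet E^*)$; producing this homotopy explicitly is in any case what \cref{app:homologicalPerturbations} requires.

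I expect the main obstacle to be precisely the construction and globalization of $g$, owing to the splitting of \eqref{eq:nonCanonicalSplitting} being merely smooth: one must verify by hand that conjugation by $g$ yields an honest cochain map and does not reintroduce anti-holomorphic fibre dependence — the analogue of checking, in \cref{prop:quasiIsoTwisted}, that the twist by $g$ is a twist by a genuine automorphism of the (CR) structure — and, in the homotopy version, that the fibrewise homotopy patches together over $M$. Once this is in place, the passage to the differential graded Lie algebras (after tensoring with $\frg$) is essentially automatic: $g$ acts by an algebra automorphism, so the wedge product, and hence the graded bracket, is carried across consistently, by the one-parameter-family argument used in the proof of \cref{prop:fullQuasiIso}.
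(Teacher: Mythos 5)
There is a genuine gap: your central device---a global twist $g$ with $g\,\bar\partial\,g^{-1}=\bar\partial_{\rm red}$, making $\omega\mapsto g^{-1}\omega$ a cochain \emph{isomorphism}---cannot exist, because the two complexes do not even have the same underlying graded vector spaces. The complex $\Omega^{0,\bullet}(E[1])$ contains forms with arbitrary smooth dependence on the anti-holomorphic odd fibre coordinates $\bar\eta^{\bar\alpha}$ and with $\bar e^{\bar\alpha}$-legs, none of which are present in $\Omega^{0,\bullet}(M,\bigwedge^\bullet E^*)$; the proposition only asserts quasi-isomorphy, and the map in question is the inclusion, not a conjugation. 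The analogy with \cref{prop:quasiIsoTwisted} breaks down precisely here: in that proposition $g=\rme^{\tilde\theta^i\eta_iE_\rmW+\theta^i\tilde\eta_iE_{\hat\rmW}}$ is an automorphism relating two CR structures on one and the same space, with identical structure constants, so conjugation gives an honest isomorphism of complexes; in the present setting the fibre directions must be \emph{collapsed}, which is a statement about fibrewise $\bar\partial$-cohomology and cannot be achieved by any change of frame. (Relatedly, ``the exponential of the globally defined vector field generating this splitting'' is not well defined: the splitting~\eqref{eq:nonCanonicalSplitting} is the datum of a connection on $E$, not of a single vector field.)

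The ingredient you relegate to a side remark---a fibrewise contracting homotopy, i.e.\ a $\bar\partial$-Poincar\'e lemma along the odd anti-holomorphic directions---is in fact the entire content of the argument. The paper's proof fixes a smooth connection, works in the induced global frame $\bar e^{\bar a}=\rmd\bar z^{\bar a}$, $\bar e^{\bar\alpha}=\rmd\bar\eta^{\bar\alpha}-\bar\eta^{\bar\beta}\Gamma_{\bar\beta}{}^{\bar\alpha}$, and, given a $\bar\partial$-closed $\omega$, removes its components with anti-holomorphic fibre legs one by one through explicit, globally defined coboundaries built from the anti-holomorphic Euler vector field $\Upsilon=\bar\eta^{\bar\alpha}\bar E_{\bar\alpha}$, using that $1+\tfrac1p\Upsilon$ is invertible on polynomial fibre dependence. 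Iterating leaves a representative depending holomorphically on $\eta^\alpha$ with only $\rmd\bar z$-legs, i.e.\ a class in $H^p(M,\bigwedge^\bullet E^*)$, so the inclusion $\Omega^{0,\bullet}(M,\bigwedge^\bullet E^*)\hookrightarrow\Omega^{0,\bullet}(E[1])$ is a quasi-isomorphism ``induced by coboundary transformations'' in exactly this sense. To repair your argument you would have to drop the conjugation step altogether and construct this homotopy (or the successive coboundaries) explicitly; your final paragraph on transporting the bracket is then unnecessary for this proposition, since the inclusion is manifestly compatible with the wedge product.
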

        
        \begin{proof}
            Firstly,~\eqref{eq:splitting} implies
            \begin{equation}
                \Omega^{0,p}(E[1])\ \cong\ \bigoplus_{r+s=p}\Gamma\big(E[1],\mbox{$\bigwedge$}^{0,r}\pi^*T^*M\otimes\mbox{$\bigwedge$}^{0,s}(V[1])^*\big)~.
            \end{equation}
            Furthermore, the smooth functions on $E[1]$ which are holomorphic in the fermionic coordinates can be identified with $\Gamma(M,\bigwedge^\bullet E^*)$ and so, $\Gamma(M,\bigwedge^\bullet E^*)\hookrightarrow\scC^\infty(E[1])=\Omega^{0,0}(E[1])$. Thus, we have the inclusions
            \begin{equation}\label{eq:inclusions}
                \begin{tikzcd}
                    \Omega^{0,p}\big(M,\mbox{$\bigwedge$}^\bullet E^*\big)\arrow[hookrightarrow]{r}{} & \Gamma\big(E[1],\mbox{$\bigwedge$}^{0,p}\pi^*T^*M\big)\arrow[hookrightarrow]{r}{} & \Omega^{0,p}(E[1])
                \end{tikzcd}
            \end{equation}
            for all $p$ and, consequently, the commutative diagrams
            \begin{equation}
                \begin{tikzcd}
                    \Omega^{0,0}(E[1])\ar[r,"\bar\partial"] & \Omega^{0,1}(E[1])\ar[r,"\bar\partial"] & \Omega^{0,2}(E[1])\ar[r,"\bar\partial"] & \cdots
                    \\
                    \Omega^{0,0}(M,\bigwedge^\bullet E^*)\ar[r,"\bar\partial_{\rm red}"]\arrow[hookrightarrow]{u}{\iota} & \Omega^{0,1}(M,\bigwedge^\bullet E^*)\ar[r,"\bar\partial_{\rm red}"]\arrow[hookrightarrow]{u}{\iota} & \Omega^{0,2}(M,\bigwedge^\bullet E^*)\ar[r,"\bar\partial_{\rm red}"]\arrow[hookrightarrow]{u}{\iota} & \cdots
                \end{tikzcd}
            \end{equation}
            where the $\iota$ are given by the compositions of the inclusions in~\eqref{eq:inclusions}. In addition, the existence of the embedding also implies that
            \begin{equation}
                H^0(E[1])\ \cong\ H^0\big(M,\mbox{$\bigwedge$}^\bullet E^*\big)
            \end{equation}
            establishing the claim for $p=0$.
            
            Next, let us analyse the underlying cohomology for $p>0$. To this end, let $(z^a,\bar z^{\bar a},\eta^\alpha,\bar\eta^{\bar\alpha})$ be local coordinates on $E[1]$ with $(z^a,\bar z^{\bar a})$ bosonic base coordinates and $(\eta^\alpha,\bar\eta^{\bar\alpha})$ fermionic fibre coordinates. Then, the anti-holomorphic multivector fields $\frX^{0,\bullet}(E[1])$ on $E[1]$ are generated by
            \begin{subequations}\label{eq:basisLift}
                \begin{equation}\label{eq:basisLiftVectorFields}
                    \bar E_{\bar a}\ \coloneqq\ \parder{\bar z^{\bar a}}+\parder{\bar z^{\bar a}}\intprod\bar\eta^{\bar\beta}\Gamma_{\bar\beta}{}^{\bar\alpha}\parder{\bar\eta^{\bar\alpha}}
                    \eand
                    \bar E_{\bar\alpha}\ \coloneqq\ \parder{\bar\eta^{\bar\alpha}}~,
                \end{equation}
                where $\bar E_{\bar a}$ is the horizontal lift of $\parder{\bar z^{\bar a}}$ to $E[1]$ induced by the splitting~\eqref{eq:nonCanonicalSplitting} with the associated connection one-form $\Gamma_{\bar\beta}{}^{\bar\alpha}\coloneqq\rmd\bar z^{\bar a}\Gamma_{\bar a\bar\beta}{}^{\bar\alpha}$. Dually, we have
                \begin{equation}\label{eq:basisLiftDifferentialForms}
                    \bar e^{\bar a}\ \coloneqq\ \rmd\bar z^{\bar a}
                    \eand
                    \bar e^{\bar\alpha}\ \coloneqq\ \rmd\bar\eta^{\bar\alpha}-\bar\eta^{\bar\beta}\Gamma_{\bar\beta}{}^{\bar\alpha}~,
                \end{equation}
            \end{subequations}
            and which generate the anti-holomorphic differential forms $\Omega^{0,\bullet}(E[1])$ on $E[1]$. Evidently,
            \begin{subequations}
                \begin{equation}
                    \begin{aligned}
                        [\bar E_{\bar a},\bar E_{\bar b}]\ =\ \bar\eta^{\bar\alpha}R_{\bar a\bar b\bar\alpha}{}^{\bar\beta}\bar E_{\bar\beta}~,
                        \quad
                        [\bar E_{\bar a},\bar E_{\bar\alpha}]\ =\ -\Gamma_{\bar a\bar\alpha}{}^{\bar\beta}\bar E_{\bar\beta}~,
                        \eand
                        [\bar E_{\bar\alpha},\bar E_{\bar\beta}]\ =\ 0~,
                    \end{aligned}
                \end{equation}
                where $R_{\bar a\bar b\bar\alpha}{}^{\bar\beta}$ is the curvature of $\Gamma_{\bar a\bar\beta}{}^{\bar\alpha}$ and so,
                \begin{equation}\label{eq:antiHolomorphicMaurerCartan}
                    \bar\partial\bar e^{\bar a}\ =\ 0
                    \eand
                    \bar\partial\bar e^{\bar\alpha}\ =\ \tfrac12\bar e^{\bar b}\wedge\bar e^{\bar a}\,\bar\eta^{\bar\beta}R_{\bar a\bar b\bar\beta}{}^{\bar\alpha}+\bar e^{\bar\beta}\wedge\bar e^{\bar a}\Gamma_{\bar a\bar\beta}{}^{\bar\alpha}~.
                \end{equation}
            \end{subequations}
            Using the basis $(0,1)$-forms~\eqref{eq:basisLiftDifferentialForms}, an element $\omega\in\Omega^{0,p}(E[1])$ looks like
            \begin{equation}
                \omega\ =\ \sum_{r+s=p}\tfrac{1}{r!s!}\,\bar e^{\bar a_1}\wedge\ldots\wedge\bar e^{\bar a_r}\otimes\bar e^{\bar\alpha_1}\wedge\ldots\wedge\bar e^{\bar\alpha_s}\,\omega_{\bar\alpha_s\cdots\bar\alpha_1\bar a_r\cdots\bar a_1}(z,\bar z,\eta,\bar\eta)~.
            \end{equation}
            Suppose now that $\omega$ is a representative of an element $[\omega]$ of the cohomology group $H^p(E[1])$. 
            Then, using~\eqref{eq:antiHolomorphicMaurerCartan}, it is not too difficult to see that $\bar\partial\omega=0$ yields
            \begin{equation}\label{eq:holoPurelyFerm}
                \bar E_{(\bar\alpha_1}\omega_{\bar\alpha_2\cdots\bar\alpha_{p+1})}\ =\ 0
            \end{equation}
            for the purely fermionic component. Next, consider $\tilde c^{(1)}\in\Omega^{0,p-1}(E[1])$ given by
            \begin{equation}
                \tilde c^{(1)}\ \coloneqq\ \tfrac{1}{(p-1)!}\,\bar\eta^{\bar\alpha_1}\bar e^{\bar\alpha_2}\wedge\ldots\wedge\bar e^{\bar\alpha_p}\,\omega_{\bar\alpha_p\cdots\bar\alpha_1}~.
            \end{equation}
            This is invariant under bundle isomorphisms and hence globally defined. A short computation then shows that the purely fermionic component of $\bar\partial\tilde c^{(1)}$ is 
            \begin{equation}\label{eq:actionEbarCTilde}
                \bar E_{(\bar\alpha_1}\tilde c^{(1)}_{\bar\alpha_2\cdots\bar\alpha_p)}\ =\ \big(1+\tfrac1p\Upsilon)\omega_{\bar\alpha_1\cdots\bar\alpha_p}~,
            \end{equation}
            where we have used~\eqref{eq:holoPurelyFerm} and introduced the globally defined anti-holomorphic Euler vector field $\Upsilon\coloneqq\bar\eta^{\bar\alpha}\bar E_{\bar\alpha}$. Since the components of $\omega$ are polynomials of the fermionic coordinates, the action of $\Upsilon$ will return only non-negative integers and so, the inverse of $1+\tfrac1p\Upsilon$ is well defined for all $p=1,2,3,\ldots$ with
            \begin{equation}
                \frac{1}{1+\frac1p\Upsilon}\ =\ \int_0^\infty\rmd t\,\rme^{-t(1+\frac1p\Upsilon)}~.
            \end{equation}
            Furthermore, since $[\bar E_{\bar\alpha},\Upsilon]=\bar E_{\bar\alpha}$, we have
            \begin{equation}
                \begin{aligned}
                    \frac{1}{1+\frac1p\Upsilon}\,\bar E_{\bar\alpha}\ &=\ \int_0^\infty\rmd t\,\rme^{-t(1+\frac1p)\Upsilon}\,\bar E_{\bar\alpha}
                    \\
                    &=\ \int_0^\infty\rmd t\,\rme^{-t(1+\frac1p\Upsilon)}\,\bar E_{\bar\alpha}\,\rme^{t(1+\frac1p\Upsilon)}\,\rme^{-t(1+\frac1p\Upsilon)}
                    \\
                    &=\ \int_0^\infty\rmd t\,\underbrace{\rme^{-\frac tp\Upsilon}\,\bar E_{\bar\alpha}\,\rme^{\frac tp\Upsilon}}_{=\,\rme^{\frac tp}\bar E_{\bar\alpha}}\,\rme^{-t(1+\frac1p\Upsilon)}
                    \\
                    &=\ \bar E_{\bar\alpha}\underbrace{\int_0^\infty\rmd t\,\rme^{-t(1-\frac1p+\frac1p\Upsilon)}}_{\eqqcolon\,D(\Upsilon)}
                \end{aligned}
            \end{equation}
            when understood as acting on expressions that are at least of linear order in the anti-holomorphic fermionic coordinates; in this case, the action of $\Upsilon$ in $D(\Upsilon)$ will return only positive integers so that the derivation of $D(\Upsilon)$ and $D(\Upsilon)$ itself are also well defined for $p=1$. Consequently,~\eqref{eq:actionEbarCTilde} becomes
            \begin{equation}
                \bar E_{(\bar\alpha_1}\big(D(\Upsilon)\tilde c^{(1)}_{\bar\alpha_2\cdots\bar\alpha_p)}\big)\ =\ \omega_{\bar\alpha_1\cdots\bar\alpha_p}~.
            \end{equation}
            This shows that the component $\omega_{\bar\alpha_1\cdots\bar\alpha_p}$ is a coboundary parametrised by $c^{(1)}\in\Omega^{0,p-1}(E[1])$ with 
            \begin{equation}
                c^{(1)}\ \coloneqq\ \tfrac{1}{(p-1)!}\bar e^{\bar\alpha_1}\wedge\ldots\wedge\bar e^{\bar\alpha_{p-1}}\,D(\Upsilon)\tilde c^{(1)}_{\bar\alpha_{p-1}\cdots\bar\alpha_1}~.
            \end{equation}
            Note that this is globally defined. Evidently, we have not used the full freedom of coboundary transformations yet. Let $\omega^{(1)}\coloneqq\omega-\bar\partial c^{(1)}$. Since $\omega^{(1)}_{\bar\alpha_p\cdots\bar\alpha_1}$ is absent, $\bar\partial\omega^{(1)}=0$ implies that $\omega^{(1)}_{\bar\alpha_{p-1}\cdots\bar\alpha_1\bar a}$ is a coboundary by similar arguments we have just given, and so we can transform it away by using a coboundary parametrised by $c^{(2)}\in\Omega^{0,p-1}(E[1])$ of the form
            \begin{equation}
                c^{(2)}\ \coloneqq\ \tfrac{1}{(p-2)!}\,\bar e^{\bar a}\otimes\bar e^{\bar\alpha_1}\wedge\ldots\wedge\bar e^{\bar\alpha_{p-2}}\,c^{(2)}_{\bar\alpha_{p-2}\cdots\bar\alpha_1\bar a}~.
            \end{equation}
            We can now iterate this procedure to eventually arrive at a $(0,p)$-form
            \begin{equation}
                \omega^{(p)}\ \coloneqq\ \tfrac{1}{p!}\,\bar e^{\bar a_1}\wedge\ldots\wedge\bar e^{\bar a_p}\,\omega^{(p)}_{\bar a_p\cdots\bar a_1}(z,\bar z,\eta)\ =\ \tfrac{1}{p!}\,\rmd\bar z^{\bar a_1}\wedge\ldots\wedge\rmd\bar z^{\bar a_p}\,\omega^{(p)}_{\bar a_p\cdots\bar a_1}(z,\bar z,\eta)
            \end{equation}
            whose components depend holomorphically\footnote{Once all fermionic directions have been transformed away, the condition $\bar\partial\omega^{(p)}=0$ implies that the remaining components must be holomorphic in the fermionic coordinates.} on $\eta^\alpha$ and which also belongs to the equivalence class $[\omega]$. Since, $0=\bar\partial\omega^{(p)}=\bar\partial_{\rm red}\omega^{(p)}$ and since, as mentioned before, the smooth functions on $E[1]$ which are holomorphic in $\eta^\alpha$ can be identified with $\Gamma(M,\bigwedge^\bullet E^*)$, the $(0,p)$-form $\omega^{(p)}$ is also a representative of an equivalence class in $H^p(M,\bigwedge^\bullet E^*)$. 
            
            Altogether, we have established the desired isomorphisms
            \begin{equation}
                H^\bullet(E[1])\ \cong\ H^\bullet\big(M,\mbox{$\bigwedge$}^\bullet E^*\big)\,.
            \end{equation}
        \end{proof}
        
        \subsection{Homological perturbations}\label{app:homologicalPerturbations}
        
        \paragraph{Deformation retracts.}
        One method of obtaining a semi-classically equivalent field theory from a perturbative action is to integrate out parts of the field content. From the homotopical algebra perspective on field theory, in which perturbative field theories with action principle are regarded as cyclic $L_\infty$-algebras, this is done by homotopy transfer, which is based on the homological perturbation lemma~\cite{gugenheim1991perturbation,Crainic:0403266}. More generally, the well known tree-level perturbative expansion in terms of Feynman diagrams is mathematically captured by homological perturbation theory. 
        
        The starting point for homological perturbation theory is a \uline{deformation retract}, that is, a diagram 
        \begin{subequations}\label{eq:deformationRetract}
            \begin{equation}
                \begin{tikzcd}
                    \ar[loop,out=160,in=200,distance=20,"\sfh" left]\sfC^{(1)}\arrow[r,shift left]{}{\sfp} & \sfC^{(2)}\arrow[l,shift left]{}{\sfe}~,
                \end{tikzcd}
            \end{equation}
            of cochain complexes $\sfC^{(1)}$ and $\sfC^{(2)}$ with differentials $\sfd^{(1)}$ and $\sfd^{(2)}$, respectively, $\sfp$ and $\sfe$ are morphisms of cochain complexes, and $\sfh$ is a \uline{contracting homotopy}, that is, a morphism of graded vector spaces which has degree $-1$, and these maps obey\footnote{Note that these two conditions promote $\sfp$ and $\sfe$ to quasi-isomorphisms of cochain complexes.}
            \begin{equation}
                \sfid-\,\sfe\circ\sfp\ =\ \sfd^{(1)}\circ\sfh+\sfh\circ\sfd^{(1)}\eand
                \sfp\circ\sfe\ =\ \sfid~.
            \end{equation}
        \end{subequations}
        A deformation retract can always be turned into a \uline{special deformation retract}~\cite{Crainic:0403266,Loday:2012aa} by a redefinition of $\sfh$ such that the maps also satisfy 
        \begin{equation}\label{eq:specialDeformationRetract}
            \sfp\circ\sfh\ =\ 0~,
            \quad
            \sfh\circ\sfe\ =\ 0~,
            \eand
            \sfh\circ\sfh\ =\ 0~.
        \end{equation}
        Explicitly, a redefinition that does this job is given by~\cite{Crainic:0403266,Loday:2012aa} 
        \begin{equation}\label{eq:hRefinition}
            \sfh\ \to\ (\id-\,\sfe\circ\sfp)\circ\sfh\circ(\id-\,\sfe\circ\sfp)\circ\sfd^{(1)}\circ(\id-\,\sfe\circ\sfp)\circ\sfh\circ(\id-\,\sfe\circ\sfp)~.
        \end{equation}
        
        The following describes the change of the map $\sfe$ in a deformation retract to the cohomology, which physically captures in particular changes of gauge.

        \begin{proposition}\label{prop:gauge_change}
            Consider a deformation retract
            \begin{equation}\label{eq:def_ret_1}
                \begin{tikzcd}
                    \ar[loop,out=160,in=200,distance=20,"\sfh" left]\sfC\arrow[r,shift left]{}{\sfp} & H^\bullet(\sfC)\arrow[l,shift left]{}{\sfe}~,
                \end{tikzcd}
            \end{equation}
            and a quasi-isomorphism $\tilde\sfe:H^\bullet(\sfC)\rightarrow\sfC$. Then,
            we have
            \begin{equation}
                \tilde\sfe\ =\ \sfe\circ\phi+\psi
            \end{equation} 
            for $\phi$ some automorphism on $H^\bullet(\sfC)$, a morphism $\psi:H^\bullet(\sfC)\rightarrow C^\bullet(\sfC)$, and
            \begin{equation}
                \begin{tikzcd}
                    \ar[loop,out=160,in=200,distance=20,"\tilde \sfh" left]\sfC\arrow[r,shift left]{}{\tilde \sfp} & H^\bullet(\sfC)\arrow[l,shift left]{}{\tilde \sfe}~,
                \end{tikzcd}
            \end{equation}
            with 
            \begin{equation}
				\begin{aligned}
                    \tilde \sfp\ \coloneqq\ \phi^{-1}\circ\sfp
                    \eand 
                    \tilde \sfh\ \coloneqq\ \sfh-\sfh\circ\psi\circ\phi^{-1}\circ\sfp
				\end{aligned}
			\end{equation}
            is also a deformation retract.
        \end{proposition}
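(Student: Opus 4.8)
The plan is pure homological bookkeeping: construct $\phi$ and $\psi$ explicitly from the given data, record their elementary properties, and then verify the three defining identities of a deformation retract for the triple $(\tilde\sfp,\tilde\sfe,\tilde\sfh)$ by direct manipulation. Throughout I write $\sfd$ for the differential on $\sfC$ and use that $H^\bullet(\sfC)$ carries the zero differential, so that a morphism of cochain complexes out of (respectively into) $H^\bullet(\sfC)$ is just a degree-zero linear map landing in $\ker\sfd$ (respectively annihilated by $\sfd$).

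First I would set $\phi\coloneqq\sfp\circ\tilde\sfe\colon H^\bullet(\sfC)\to H^\bullet(\sfC)$. Being a composite of cochain maps it is a cochain map, hence — the target carrying the zero differential — a plain degree-zero endomorphism, which coincides with its own induced map on cohomology, $\phi=H(\sfp)\circ H(\tilde\sfe)$. Now $H(\tilde\sfe)$ is an isomorphism because $\tilde\sfe$ is assumed to be a quasi-isomorphism, and $H(\sfp)$ is an isomorphism because the deformation-retract relations $\sfp\circ\sfe=\sfid$ and $\sfid-\sfe\circ\sfp=\sfd\circ\sfh+\sfh\circ\sfd$ force $H(\sfp)$ and $H(\sfe)$ to be mutually inverse; hence $\phi$ is an automorphism of $H^\bullet(\sfC)$. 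I would then put $\psi\coloneqq\tilde\sfe-\sfe\circ\phi$, which is a difference of cochain maps and so a cochain map $\psi\colon H^\bullet(\sfC)\to\sfC$ realising the claimed decomposition $\tilde\sfe=\sfe\circ\phi+\psi$. The one extra identity I would extract here is $\sfp\circ\psi=\sfp\circ\tilde\sfe-(\sfp\circ\sfe)\circ\phi=\phi-\phi=0$, i.e.\ $\psi$ lands in $\ker\sfp$.

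Next I would check the three conditions for $(\tilde\sfp,\tilde\sfe,\tilde\sfh)$ with $\tilde\sfp=\phi^{-1}\circ\sfp$ and $\tilde\sfh=\sfh-\sfh\circ\psi\circ\phi^{-1}\circ\sfp$. That $\tilde\sfp$ and $\tilde\sfe$ are cochain maps is immediate, and $\tilde\sfh$ has degree $-1$ since $\psi,\phi^{-1},\sfp$ all have degree zero. The section relation is $\tilde\sfp\circ\tilde\sfe=\phi^{-1}\circ(\sfp\circ\tilde\sfe)=\phi^{-1}\circ\phi=\sfid$. For the homotopy relation I would compute $\tilde\sfe\circ\tilde\sfp=(\sfe\circ\phi+\psi)\circ\phi^{-1}\circ\sfp=\sfe\circ\sfp+\psi\circ\phi^{-1}\circ\sfp$, whence $\sfid-\tilde\sfe\circ\tilde\sfp=(\sfd\circ\sfh+\sfh\circ\sfd)-\psi\circ\phi^{-1}\circ\sfp$. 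Expanding $\sfd\circ\tilde\sfh+\tilde\sfh\circ\sfd$, the $\sfh$-terms reproduce $\sfd\circ\sfh+\sfh\circ\sfd$ and the correction contributes $-\bigl(\sfd\circ\sfh\circ\psi\circ\phi^{-1}\circ\sfp+\sfh\circ\psi\circ\phi^{-1}\circ\sfp\circ\sfd\bigr)$, so it remains to see that $\sfd\circ\sfh\circ\psi\circ\phi^{-1}\circ\sfp+\sfh\circ\psi\circ\phi^{-1}\circ\sfp\circ\sfd=\psi\circ\phi^{-1}\circ\sfp$. Here $\sfp\circ\sfd=0$ kills the second summand, while applying $\sfd\circ\sfh+\sfh\circ\sfd=\sfid-\sfe\circ\sfp$ to $\psi\circ\phi^{-1}\circ\sfp$ and then using $\sfd\circ\psi=0$ and $\sfp\circ\psi=0$ to discard $\sfh\circ\sfd\circ\psi\circ\phi^{-1}\circ\sfp$ and $\sfe\circ\sfp\circ\psi\circ\phi^{-1}\circ\sfp$ leaves exactly $\sfd\circ\sfh\circ\psi\circ\phi^{-1}\circ\sfp=\psi\circ\phi^{-1}\circ\sfp$, as required.

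I do not expect a genuine obstacle: once the three observations $\sfp\circ\sfd=0$, $\sfd\circ\psi=0$ and $\sfp\circ\psi=0$ are in place everything is bookkeeping, and the last of these is the only real input — it is precisely what makes the prescribed correction to $\sfh$ the right one. If desired I would close by remarking that the resulting diagram can be further promoted to a \emph{special} deformation retract satisfying the side conditions~\eqref{eq:specialDeformationRetract} via the standard redefinition~\eqref{eq:hRefinition}, but this is not needed for the statement as given.
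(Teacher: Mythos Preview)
Your proof is correct and follows essentially the same logic as the paper's: both hinge on $\sfp\circ\psi=0$, $\sfd\circ\psi=0$, and $\sfp\circ\sfd=0$ to verify the homotopy relation, and the final chain of equalities is identical. The only presentational difference is that the paper organises the argument through the Hodge--Kodaira decomposition $\sfC\cong H^\bullet(\sfC)\oplus B^\bullet(\sfC)\oplus C^\bullet(\sfC)$ induced by the original retract, reading off $\phi$ and $\psi$ as the $H^\bullet$- and $C^\bullet$-components of $\tilde\sfe$, whereas you define $\phi=\sfp\circ\tilde\sfe$ and $\psi=\tilde\sfe-\sfe\circ\phi$ directly and never name the decomposition; your $\sfp\circ\psi=0$ together with $\sfd\circ\psi=0$ is precisely the statement that $\psi$ lands in $C^\bullet(\sfC)=\im(\sfd\circ\sfh)$, so the two constructions coincide.
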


        \begin{proof}
            The data in the deformation retract~\eqref{eq:def_ret_1} induce the \uline{Hodge--Kodaira decomposition}
            \begin{equation}
                \sfC^{p}\ \cong\ H^p(\sfC^{})\oplus\underbrace{\im(\sfh^{}_{p+1}\circ\sfd^{}_p)}_{\eqqcolon\,B^p(\sfC^{})}\oplus\underbrace{\im(\sfd^{}_{p-1}\circ\sfh^{}_p)}_{\eqqcolon\,C^p(\sfC^{})}~,
            \end{equation} 
            where we note that $\sfh\circ\sfd\circ\sfh=\sfh$ and $\sfd\circ\sfh\circ\sfd=\sfd$ so that
            \begin{equation}
                \Pi_{B^\bullet(\sfC)}\ \coloneqq\ \sfh\circ\sfd
                \eand 
                \Pi_{C^\bullet(\sfC)}\ \coloneqq\ \sfd\circ\sfh
            \end{equation} 
            are indeed projectors onto $B^\bullet(\sfC)$ and $C^\bullet(\sfC)$. Because $\tilde \sfe$ is a cochain map, its image is contained in $H^\bullet(\sfC)\oplus C^\bullet(\sfC)$. Because it is a quasi-isomorphism and hence descends to an isomorphism on cohomology, the restriction\footnote{For a linear map $f:A\oplus B\rightarrow C\oplus D$, we denote its restriction to $A\rightarrow C$ by $f|_A^C$.} of the image of $\tilde\sfe$ to $H^\bullet(\sfC)$, $\tilde\sfe|^{H^\bullet(\sfC)}$, is an isomorphism. Together, these facts imply that $\tilde\sfe=\sfe\circ\phi+\psi$.
            
            It remains to check the properties of a deformation retract. Since $\sfp|_{C^\bullet(\sfC)}=0$, we have $\tilde \sfp\circ \tilde \sfe=\sfid$. Moreover, 
            \begin{equation}
				\begin{aligned}
                    \sfid-\,\tilde\sfe\circ\tilde \sfp\ &=\ \sfid-\,\sfe\circ\sfp-\psi\circ\phi^{-1}\circ\sfp
                    \\
                    &=\ \sfd\circ \sfh+\sfh\circ\sfd-\Pi_{C^\bullet(\sfC)}\circ\psi\circ\phi^{-1}\circ\sfp
                    \\
                    &=\ \sfd\circ\sfh+\sfh\circ\sfd-\sfd\circ\sfh\circ\psi\circ\phi^{-1}\circ\sfp-\sfh\circ\psi\circ\phi^{-1}\circ\sfp\circ\sfd
                    \\
                    &=\ \sfd\circ\tilde\sfh+\tilde\sfh\circ\sfd~,
				\end{aligned}
			\end{equation}
            where we have used that $\sfp\circ\sfd=0$.
        \end{proof}

        \begin{proposition}\label{prop:contractingHomotopyAlwaysExists}
            Given an injective quasi-isomorphism $\sfe:\sfC^{(2)}\rightarrow\sfC^{(1)}$ of split cochain complexes with $\sfC^{(i)p}$ trivial for $i=1,2$ and for all $p<p_0$ and $p>p_1$ for some fixed $p_{0,1}\in\IZ$ with $p_0<p_1$, then there always exists a special deformation retract
        	\begin{equation}
            	\begin{tikzcd}
                    \ar[loop,out=160,in=200,distance=20,"\sfh" left]\sfC^{(1)}\arrow[r,shift left]{}{\sfp} & \sfC^{(2)}\arrow[l,shift left]{}{\sfe}~.
                \end{tikzcd}
            \end{equation}
        \end{proposition}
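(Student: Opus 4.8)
The plan is to extract the required data from a vector-space-level splitting of the short exact sequence of complexes determined by $\sfe$, and then to recognise the resulting diagram as a homological perturbation of the tautological deformation retract attached to that splitting. Since we work over a field, the word ``split'' in the hypothesis simply means that all the relevant short exact sequences of (graded) vector spaces admit sections; no finite-dimensionality is used, so the infinite-dimensionality of the spaces occurring in the applications is harmless.

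First I would form the short exact sequence of cochain complexes $0\to\sfC^{(2)}\to\sfC^{(1)}\xrightarrow{\pi}Q\to 0$ with $Q\coloneqq\sfC^{(1)}/\sfe(\sfC^{(2)})$, which is exact because $\sfe$ is injective. The associated long exact sequence in cohomology, together with the hypothesis that $\sfe$ is a quasi-isomorphism, forces $H^\bullet(Q)=0$; moreover $Q$ is concentrated in degrees $p_0\leq p\leq p_1$. I would then exhibit a contracting homotopy $\kappa$ for the induced differential $\sfd_Q$ on $Q$ satisfying the side conditions: choosing degreewise a section of $0\to B^p(Q)\to Q^p\to B^{p+1}(Q)\to 0$, where $B^p(Q)\coloneqq\im(\sfd_Q\colon Q^{p-1}\to Q^p)$ coincides with the cocycles since $Q$ is acyclic, identifies $Q^p\cong B^p(Q)\oplus B^{p+1}(Q)$, and in these coordinates both $\sfd_Q$ and a homotopy $\kappa$ with $\sfd_Q\circ\kappa+\kappa\circ\sfd_Q=\sfid_Q$ and $\kappa\circ\kappa=0$ are the obvious off-diagonal maps. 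Boundedness of $Q$ makes this a finite construction, though it is not essential.

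Next I would pick a linear section $\sigma\colon Q\to\sfC^{(1)}$ of $\pi$, yielding an isomorphism of graded vector spaces $\sfC^{(1)}\cong\sfe(\sfC^{(2)})\oplus\sigma(Q)\cong\sfC^{(2)}\oplus Q$ under which, since $\sfe$ and $\pi$ are cochain maps, the differential is block upper triangular,
\[
\sfd^{(1)}\ \cong\ \begin{pmatrix}\sfd^{(2)} & \tau\\[2pt] 0 & \sfd_Q\end{pmatrix},
\]
for a unique ``twisting'' $\tau\colon Q\to\sfC^{(2)}[1]$ with $\sfd^{(2)}\circ\tau+\tau\circ\sfd_Q=0$. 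For $\tau=0$ the complex carries the tautological special deformation retract onto $\sfC^{(2)}$, with inclusion $a\mapsto(a,0)$, projection $(a,q)\mapsto a$, and homotopy $0\oplus\kappa$. Switching $\tau$ back on is a homological perturbation, and because the homotopy $0\oplus\kappa$ annihilates the summand $\sfC^{(2)}$, one has $(0\oplus\kappa)\circ\big(\begin{smallmatrix}0&\tau\\0&0\end{smallmatrix}\big)=0$, so every geometric series appearing in the perturbation lemma terminates and there is no convergence issue. A short computation yields the perturbed data: the inclusion is unchanged and equals $\sfe$, the homotopy is $\sfh=0\oplus\kappa$, and the projection is $\sfp=(\sfid,\,-\tau\circ\kappa)$; one then checks directly $\sfid-\sfe\circ\sfp=\sfd^{(1)}\circ\sfh+\sfh\circ\sfd^{(1)}$, $\sfp\circ\sfe=\sfid$, and the side conditions $\sfh\circ\sfe=0$, $\sfh\circ\sfh=0$, $\sfp\circ\sfh=0$, the last two using $\kappa\circ\kappa=0$.

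The calculations are entirely routine; the genuinely load-bearing point is the acyclicity of $Q$, which is exactly where injectivity and the quasi-isomorphism property of $\sfe$ are used, and the one thing to verify with care is that the perturbation does not alter the prescribed inclusion $\sfe$ --- this holds because the off-diagonal perturbation vanishes on $\sfC^{(2)}$, which is also precisely what forces the perturbation series to be finite and the resulting deformation retract to be special.
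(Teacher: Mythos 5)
Your proposal is correct, and the verification goes through: with $\sfC^{(1)}\cong\sfC^{(2)}\oplus Q$, $\sfd^{(1)}=\bigl(\begin{smallmatrix}\sfd^{(2)}&\tau\\0&\sfd_Q\end{smallmatrix}\bigr)$, $\sfh=0\oplus\kappa$ and $\sfp(a,q)=a-\tau(\kappa(q))$, one indeed gets $\sfid-\sfe\circ\sfp=\sfd^{(1)}\circ\sfh+\sfh\circ\sfd^{(1)}$, $\sfp\circ\sfe=\sfid$, and all three side conditions, and your use of injectivity plus the long exact sequence to get acyclicity of $Q$ is exactly where the hypotheses enter. This is, however, a genuinely different route from the paper's. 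You pass to the quotient complex $Q=\sfC^{(1)}/\sfe(\sfC^{(2)})$, build a contracting homotopy $\kappa$ with $\kappa\circ\kappa=0$ there, and run a single homological perturbation whose series terminates because the homotopy annihilates the image of the perturbation; this keeps $\sfe$ manifestly unchanged, delivers the special (side-condition-satisfying) retract in one step, and shows the boundedness assumption is inessential. The paper instead invokes the splitness to get Hodge--Kodaira decompositions of \emph{both} complexes into $H\oplus B\oplus C$, uses \cref{prop:gauge_change} to replace the retract of $\sfC^{(1)}$ onto its cohomology by one factoring through $\sfe$, constructs $\sfp$ blockwise on these summands, sets $\sfh=\tilde\sfh^{(1)}-\sfe\circ\sfh^{(2)}\circ\sfp$, and only then appeals to the general recipe for upgrading a deformation retract to a special one. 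What the paper's longer construction buys is an explicit description of $\sfp$ adapted to the Hodge--Kodaira data, which is what gets instantiated concretely (Green operator, component projections) in the proof of \cref{thm:homotopy-transfer}; what your argument buys is brevity and transparency about which choices are being made --- both proofs ultimately rest on the same kind of linear sections/complements, so the infinite-dimensional caveats of \cref{rem:inf-dim} apply equally to either.
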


        \begin{proof}
            We only need to demonstrate that there is a deformation retract by virtue of our discussion after~\eqref{eq:deformationRetract}. We first construct a projection $\sfp$ iteratively. Since the cochain complexes are split, we have for $i=1,2$ the deformation retracts\footnote{See~e.g.~\cite[Chapter 1.4]{Weibel:1994aa} or~\cite[Appendix B]{Jurco:2018sby} for details.}
            \begin{equation}\label{eq:someSDRs}
                \begin{tikzcd}
                    \ar[loop,out=160,in=200,distance=20,"\sfh^{(i)}" left]\sfC^{(i)}\arrow[r,shift left]{}{\sfp^{(i)}} & H^\bullet(\sfC^{(i)})\arrow[l,shift left]{}{\sfe^{(i)}}
                \end{tikzcd}
            \end{equation}
            onto the cohomologies and the Hodge--Kodaira decompositions
            \begin{equation}
                \sfC^{(i)p}\ \cong\ H^p(\sfC^{(i)})\oplus\underbrace{\im(\sfh^{(i)}_{p+1}\circ\sfd^{(i)}_p)}_{\eqqcolon\,B^p(\sfC^{(i)})}\oplus\underbrace{\im(\sfd^{(i)}_{p-1}\circ\sfh^{(i)}_p)}_{\eqqcolon\,C^p(\sfC^{(i)})}
            \end{equation} 
            for all $p\in\IZ$, as in the proof of \cref{prop:gauge_change}. We also note that
            \begin{equation}
                \sfd^{(i)}\big|_{B^p(\sfC^{(i)})}\,:\,B^p(\sfC^{(i)})\ \overset{\cong}{\rightarrow}\ C^{p+1}(\sfC^{(i)})
                \eand
                \sfh^{(i)}\big|_{C^{p+1}(\sfC^{(i)})}\,:\,C^{p+1}(\sfC^{(i)})\ \overset{\cong}{\rightarrow}\ B^p(\sfC^{(i)})~.
            \end{equation}
            
            We now use \cref{prop:gauge_change} to replace the deformation retract $(\sfh^{(1)},\sfp^{(1)},\sfe^{(1)})$ by $(\tilde\sfh^{(1)},\tilde \sfp^{(1)},\tilde\sfe^{(1)})$ by
            \begin{equation}\label{eq:redefe1}
                \tilde\sfe^{(1)}\ \coloneqq\ \sfe\circ\sfe^{(2)}\circ(\sfe_*)^{-1}~,
            \end{equation} 
            arriving at the commuting diagram
            \begin{equation}\label{eq:basicDiagramExtension}
                \begin{tikzcd}[column sep=50pt]
                    \ar[loop,out=70,in=120,distance=20,"\tilde\sfh^{(1)}" above]\sfC^{(1)}\arrow[d,shift left]{}{\tilde\sfp^{(1)}} & \ar[loop,out=70,in=120,distance=20,"\sfh^{(2)}" above]\sfC^{(2)}\arrow[l,shift left]{}{\sfe}\arrow[d,shift left]{}{\sfp^{(2)}}
                    \\
                    H^{\bullet}(\sfC^{(1)})\arrow[u,shift left]{}{\tilde\sfe^{(1)}} & H^\bullet(\sfC^{(2)})\arrow[u,shift left]{}{\sfe^{(2)}}\arrow[l,"\sfe_*"']
                \end{tikzcd}
            \end{equation}
            where $\sfe_*$ is the map induced by $\sfe$ between the cohomologies. Note that
            \begin{equation}
                \tilde\sfe^{(1)}\circ\sfe_*\circ\sfp^{(2)}\ =\ \sfe\circ\sfe^{(2)}\circ\sfp^{(2)}~,
            \end{equation} 
            and therefore
            \begin{equation}
                \sfe|_{H^\bullet(C^{(2)})}\ =\ \tilde\sfe^{(1)}\circ\sfe_*\circ\sfp^{(2)}~.
            \end{equation} 
            Using further that $\sfe$ is a cochain map, we conclude that its non-trivial components are
            \begin{equation}
				\begin{aligned}
                    &\sfe|^{H^\bullet(\sfC^{(1)})}_{H^\bullet(\sfC^{(2)})}~,~~~
                    &&
                    &&
                    \\
                    &\sfe|^{H^\bullet(\sfC^{(1)})}_{B^\bullet(\sfC^{(2)})}~,~~~
                    &&\sfe|^{B^\bullet(\sfC^{(1)})}_{B^\bullet(\sfC^{(2)})}~,~~~
                    &&\sfe|^{C^\bullet(\sfC^{(1)})}_{B^\bullet(\sfC^{(2)})}~,~~~
                    \\
                    &
                    &&
                    &&\sfe|^{C^\bullet(\sfC^{(1)})}_{C^\bullet(\sfC^{(2)})}~.
				\end{aligned}
			\end{equation}
            Similarly, the projection $\sfp$ we construct will have non-trivial components 
            \begin{equation}
				\begin{aligned}
                    &\sfp|^{H^\bullet(\sfC^{(2)})}_{H^\bullet(\sfC^{(1)})}~,~~~
                    &&
                    &&
                    \\
                    &\sfp|^{H^\bullet(\sfC^{(2)})}_{B^\bullet(\sfC^{(1)})}~,~~~
                    &&\sfp|^{B^\bullet(\sfC^{(2)})}_{B^\bullet(\sfC^{(1)})}~,~~~
                    &&\sfp|^{C^\bullet(\sfC^{(2)})}_{B^\bullet(\sfC^{(1)})}~,~~~
                    \\
                    &
                    &&
                    &&\sfp|^{C^\bullet(\sfC^{(2)})}_{C^\bullet(\sfC^{(1)})}~.
				\end{aligned}
			\end{equation}
            Concretely, we define
            \begin{equation}
				\begin{aligned}
                    \sfp|^{H^\bullet(\sfC^{(2)})}_{H^\bullet(\sfC^{(1)})}\ &\coloneqq\ \left(\sfe|^{H^\bullet(\sfC^{(1)})}_{H^\bullet(\sfC^{(2)})}\right)^{-1}~,
                    \\
                    \sfp|^{C^\bullet(\sfC^{(2)})}_{C^\bullet(\sfC^{(1)})}\ &\coloneqq\ 
                    \sfd^{(2)}\circ\sfp|_{\sfB^\bullet(\sfC^{(1)})}\circ\tilde\sfh^{(1)}~,
				\end{aligned}
			\end{equation}
            where $\sfp|_{\sfB^\bullet(\sfC^{(1)})}$ remains to be fixed. Because $\sfe$ is a cochain map, $\sfe|^{B^\bullet(\sfC^{(1)})}_{B^\bullet(\sfC^{(2)})}$ is an injection, and, hence, we can choose a projection $\sfp|^{B^\bullet(\sfC^{(2)})}_{B^\bullet(\sfC^{(1)})}$ such that
            \begin{equation}
                \sfp|^{B^\bullet(\sfC^{(2)})}_{B^\bullet(\sfC^{(1)})}\circ\sfe|^{B^\bullet(\sfC^{(1)})}_{B^\bullet(\sfC^{(2)})}\ =\ \sfid_{B^\bullet(\sfC^{(2)})}~.
            \end{equation} 
            We then define further
            \begin{equation}
				\begin{aligned}
                    \sfp|^{H^\bullet(\sfC^{(2)})}_{B^\bullet(\sfC^{(1)})}\ &\coloneqq\ 
                    \sfp|^{H^\bullet(\sfC^{(2)})}_{H^\bullet(\sfC^{(1)})}\circ\sfe|^{H^\bullet(\sfC^{(1)})}_{B^\bullet(\sfC^{(2)})}    
                    \circ
                    \sfp|^{B^\bullet(\sfC^{(2)})}_{B^\bullet(\sfC^{(1)})}~,
                    \\
                    \sfp|^{C^\bullet(\sfC^{(2)})}_{B^\bullet(\sfC^{(1)})}\ &\coloneqq\ 
                    \sfp|^{C^\bullet(\sfC^{(2)})}_{C^\bullet(\sfC^{(1)})}\circ\sfe|^{C^\bullet(\sfC^{(1)})}_{B^\bullet(\sfC^{(2)})}
                    \circ
                    \sfp|^{B^\bullet(\sfC^{(2)})}_{B^\bullet(\sfC^{(1)})}~.
				\end{aligned}
			\end{equation}
            It is then straightforward to check that
            \begin{equation}
                \sfp\circ\sfe\ =\ \sfid_{\sfC^{(2)}}~.
            \end{equation} 
            Note that $\sfp$ is indeed a chain map, because
            \begin{equation}
                \sfd^{(2)}\circ\sfp|_{B^\bullet(\sfC^{(1)})}\ =\ \sfd^{(2)}\circ \sfp|_{B^\bullet(\sfC^{(1)})}\circ\tilde\sfh^{(1)}\circ\sfd^{(1)}\ =\ \sfp|_{C^\bullet(\sfC^{(1)})}\circ\sfd^{(1)}~.
            \end{equation} 
            
            It remains to construct a contracting homotopy $\sfh$. Using~\eqref{eq:redefe1}, we have
            \begin{equation}
                \im(\tilde\sfe^{(1)})\ \cong\ H^\bullet(\sfC^{(1)})~.
            \end{equation} 
            Hence, in switching from $\sfe^{(1)}$ to $\tilde\sfe^{(1)}$ using \cref{prop:gauge_change},
            we note that $\psi=0$. It then follows that 
            \begin{equation}
                \tilde\sfp^{(1)}\ \coloneqq\ \sfe_*\circ\sfp^{(2)}\circ\sfp~.
            \end{equation}
            The desired contracting homotopy is now given by 
            \begin{equation}
                \sfh\ \coloneqq\ \tilde\sfh^{(1)}-\sfe\circ\sfh^{(2)}\circ\sfp
            \end{equation} 
            with $\tilde\sfh^{(1)}$ given by \cref{prop:gauge_change}. Indeed, we have
            \begin{equation}
                \sfid-\,\sfe\circ\sfp\ =\ \sfh\circ\sfd^{(1)}+\sfd^{(1)}\circ\sfh~,
            \end{equation} 
            as required.
        \end{proof}
        
        \begin{example}\label{exa:SDR}
            We note that the cochain complexes in \cref{prop:quasiIsoSplitSupermanifolds} are split (with respect to the fermionic parts) and so, it is now always possible to construct a special deformation retract, 
            \begin{equation}
                \begin{tikzcd}
                    \ar[loop,out=170,in=190,distance=20,"\sfh" left]\big(\Omega^{0,\bullet}(E[1]),\bar\partial\big)\arrow[r,shift left]{}{\sfp} & \big(\Omega^{0,\bullet}(M,\bigwedge^\bullet E^*),\bar\partial_{\rm red}\big)\arrow[l,shift left]{}{\sfe}
                \end{tikzcd}
            \end{equation}
            where the contracting homotopy can be glanced from the map that takes a $(0,p)$-form $\omega$ to the coboundary $(0,p-1)$-form $c$ removing the fermionic directions at the cohomological level. In particular,
            \begin{equation}
                \sfe\big(\Omega^{0,\bullet}(M,\mbox{$\bigwedge^\bullet$}E^*)\big)\ \subseteq\ \ker(\sfh)
            \end{equation}
            since for differential forms that do not contain anti-holomorphic fermionic directions, the action of the contracting homotopy vanishes identically.
        \end{example}
        
        \paragraph{Homological perturbation theory.}
        An \uline{$L_\infty$-algebra structure} $\frL^{(1)}$ on a cochain complex $\sfC^{(1)}$ consists of additional products $\mu_i^{(1)}$ of degree $2-i$ with $i=2,3,4,\ldots$ and subject to the \uline{homotopy Jacobi identities}
        \begin{equation}\label{eq:homotopyJacobiIdentities}
            \sum_{j+k=i}\sum_{\sigma\in\overline{\rm Sh}(j;i)}\chi(\sigma;\ell_1,\ldots,\ell_{i})(-1)^{k}\mu^{(1)}_{k+1}\big(\mu^{(1)}_j\big(\ell_{\sigma(1)},\ldots,\ell_{\sigma(j)}\big),\ell_{\sigma(j+1)},\ldots,\ell_{\sigma(i)}\big)\ =\ 0
        \end{equation}
        for all $\ell_1,\ell_2,\ell_3,\ldots\in\frL^{(1)}$. Here, the sum is taken over all $(j;i)$ \uline{unshuffles} $\sigma$ which consist of permutations $\sigma$ of $\{1,\ldots,i\}$ such that the first $j$ and the last $i-j$ images of $\sigma$ are ordered: $\sigma(1)<\cdots<\sigma(j)$ and $\sigma(j+1)<\cdots<\sigma(i)$. Moreover, $\chi(\sigma;\ell_1,\ldots,\ell_i)$ is the \uline{Koszul sign}. 
        
        Now, given a deformation retract~\eqref{eq:deformationRetract}, the \uline{homological perturbation lemma} states that an $L_\infty$-structure $\frL^{(1)}$ on the cochain complex $\sfC^{(1)}$ can be transferred to an equivalent or quasi-isomorphic $L_\infty$-algebra structure $\frL^{(2)}$ on the cochain complex $\sfC^{(2)}$, and the formulas are recursive. More explicitly, the map $\sfe$ in~\eqref{eq:deformationRetract} extends as
        \begin{subequations}
            \begin{equation}\label{eq:homotopTransferQuasiIsomorphism}
                \begin{aligned}
                    \sfT_1(\ell_1)\ &\coloneqq\ \sfe(\ell_1)~,
                    \\
                    \sfT_2(\ell_1,\ell_2)\ &\coloneqq\ -(\sfh\circ\mu^{(1)}_2)(\sfe(\ell_1),\sfe(\ell_2))~,
                    \\
                    &~~\vdots
                    \\
                    \sfT_i(\ell_1,\ldots,\ell_i)\ &\coloneqq\ -\sum_{j=2}^i\frac{1}{j!}\sum_{k_1+\cdots+k_j=i}\sum_{\sigma\in\overline{\rm Sh}(k_1,\ldots,k_{j-1};i)}\chi(\sigma;\ell_1,\ldots,\ell_i)\zeta(\sigma;\ell_1,\ldots,\ell_i)
                    \\
                    &\kern1cm\times(\sfh\circ\mu^{(1)}_j)\big(\sfT_{k_1}\big(\ell_{\sigma(1)},\ldots,\ell_{\sigma(k_1)}\big),\ldots,\sfT_{k_j}\big(\ell_{\sigma(k_1+\cdots+k_{j-1}+1)},\ldots,\ell_{\sigma(i)}\big)\big)
                \end{aligned}
            \end{equation}
            for all $\ell_1,\ell_2,\ell_3,\ldots\in\frL^{(2)}$ and where
            \begin{equation}
                \zeta(\sigma;\ell_1,\ldots,\ell_i)\ \coloneqq\ (-1)^{\sum_{1\leq m<n\leq j}k_mk_n+\sum_{m=1}^{j-1}k_m(j-m)+\sum_{m=2}^j(1-k_m)\sum_{k=1}^{k_1+\cdots+k_{m-1}}|\ell_{\sigma(k)}|}~,
            \end{equation}
            and the higher products $\mu_i^{(2)}$ on $\frL^{(2)}$ are of the form
            \begin{equation}\label{eq:homotopTransferHigherProducts}
                \begin{aligned}
                    \mu^{(2)}_2(\ell_1,\ell_2)\ &\coloneqq\ \sfp(\mu_2^{(1)}(\sfe(\ell_1),\sfe(\ell_2)))~,
                    \\
                    &~~\vdots
                    \\
                    \mu^{(2)}_i(\ell_1,\ldots,\ell_i)\ &\coloneqq\ \sum_{j=2}^i\frac{1}{j!}\sum_{k_1+\cdots+k_j=i}\sum_{\sigma\in\overline{\rm Sh}(k_1,\ldots,k_{j-1};i)}\chi(\sigma;\ell_1,\ldots,\ell_i)\zeta(\sigma;\ell_1,\ldots,\ell_i)
                    \\
                    &\kern1cm\times (\sfp\circ\mu^{(1)}_j)\big(\sfT_{k_1}\big(\ell_{\sigma(1)},\ldots,\ell_{\sigma(k_1)}\big),\ldots,\sfT_{k_j}\big(\ell_{\sigma(k_1+\cdots+k_{j-1}+1)},\ldots,\ell_{\sigma(i)}\big)\big)
                \end{aligned}
            \end{equation}
        \end{subequations}
        for all $\ell_1,\ell_2,\ell_3,\ldots\in\frL^{(2)}$. This makes $\sfT$ into an $L_\infty$-quasi-isomorphism; see~\eqref{eq:LinftyMorphism}. The two important points to notice are that the arguments of the higher products $\mu_i^{(2)}$ are always first mapped to $\frL^{(1)}$ by the embedding $\sfe$, and the $\mu_i^{(2)}$ themselves are then produced by inserting the images of $\sfh$ of lower products into each other. See e.g.~\cite{Jurco:2018sby,Jurco:2019bvp} for details.
        
        \paragraph{Proof of \cref{prop:WittenGauge}.} 
        We have now all the ingredients to prove \cref{prop:WittenGauge}. Firstly, we note that because of the explicit form of the vector fields~\eqref{eq:tangentBundleDistributionsCRSuperAmbitwistorSpaceTwisted} and the commutation relation~\eqref{eq:superspaceTorsionTwisted} and because the CR holomorphic and CR anti-holomorphic fermionic combinations~\eqref{eq:CRFermionicCoordinates} essentially play the role of ordinary holomorphic and anti-holomorphic fermionic coordinates\footnote{See also \cref{rmk:CRAmbitwistorSpace} for an alternative reason for why the fermionic combinations~\eqref{eq:CRFermionicCoordinates} should be regarded as holomorphic coordinates.} (that is, $F$ is essentially a globally split CR supermanifold), the proof of \cref{prop:quasiIsoSplitSupermanifolds} goes through also for our twisted CR differentials~\eqref{eq:CRantiholomorphicDifferentialTwisted} and~\eqref{eq:CRantiholomorphicDifferentialTwistedReduced}. 
        
        Next, by virtue of this discussion and as explained in \cref{exa:SDR}, we now also have a special deformation retract 
        \begin{equation}
            \begin{tikzcd}
                \ar[loop,out=170,in=190,distance=20,"\sfh" left]\underbrace{\big(\Omega^{0,\bullet}_{\rm CR,\,tw}(F),\bar\partial_{\rm CR,\,tw}\big)}_{=\,\sfC^{(1)}}\arrow[r,shift left]{}{\sfp} & \underbrace{\big(\Omega^{0,\bullet}_{\rm CR,\,tw,\,red}(F),\bar\partial_{\rm CR,\,tw,\,red}\big)}_{=\,\sfC^{(2)}}\arrow[l,shift left]{}{\sfe}
            \end{tikzcd}
        \end{equation}
        with $\sfe\big(\Omega^{0,\bullet}_{\rm CR,\,tw,\,red}(F)\big)\subseteq\ker(\sfh)$. We can therefore apply the homological perturbation lemma for the natural product $\mu_2^{(1)}(-,-)=[-,-]$. In particular, the formulas~\eqref{eq:homotopTransferHigherProducts} now show that the binary product is the expected one, $\mu_2^{(2)}(-,-)=[-,-]$. Finally, we have $(\sfh\circ\mu_2^{(2)})(\sfe(-),\sfe(-))=0$, since the wedge product of forms without fermionic directions is a form without fermionic directions. This renders all higher products $\mu_i^{(2)}$ with $i>2$ in~\eqref{eq:homotopTransferHigherProducts} trivial, and we arrive at the desired result.
        
        \begin{remark}\label{rem:inf-dim}
            We heavily use the Hodge--Kodaira decomposition, splitting maps, and aspects of cyclic structures, some of which tend to become problematic for infinite-dimensional vector spaces. Since our $L_\infty$-algebras consist of spaces of differential forms which are infinite-dimensional, so let us briefly comment on this point and explain that these infinite-dimensionalities are harmless for the purpose of most perturbative field theories. 
            
            As is common in the literature, we demand that we have a factorisation
            \begin{equation}
                \frL_p\ \cong\ \frV_p\otimes C^\infty_p
            \end{equation}
            of the homogeneously graded subspaces $\frL_p$ of any $L_\infty$-algebra $\frL$ we employ, where $p\in\IZ$ and $\frV_p$ is a finite-dimensional vector space. The pure function spaces $C^\infty_p$ are then constructed, also as common in the literature on perturbative field theory, as a finite linear combination of some preferred basis, e.g.~plane waves on Minkowski space, or spherical harmonics on $\IC P^1$ and so, the differential cochain complex of the corresponding $L_\infty$-algebras can be consistently truncated to finite-dimensional $C^\infty_p$. 
            
            As an example, consider the differential cochain complex underlying $\frL_{\rm YM_1}$ as defined in~\eqref{eq:DGLAFirstOrderYM}. It is clear that $C_p^\infty$ here can be consistently truncated to individual plane waves of a specific momentum, and the cochain complex $\frL_{\rm YM_1}$ splits into a direct sum of finite-dimensional cochain complexes.
            
            For the differential cochain complex underlying $\frL_{\rm CR,\,tw,\,red}$, we can use plane waves on $\IR^4$ together with spherical harmonics on $\IC P^1\times\IC P^1$. We note that $\bar{\partial}_{\rm CR,\,tw,\,red}$ can maximally increase the angular momentum $\ell$ of the spherical harmonics $Y_{\ell m}$ on the two $\IC P^1$ by one. We can thus consistently restrict $C_p^\infty$ to angular momenta $\ell\leq\ell_0+p$ for some $\ell_0\in\IN$ on both $\IC P^1$, ending up with a finite-dimensional vector space.
        \end{remark}
        
    \end{body}
    
\end{document}